
\documentclass[12pt,a4paper]{article}
\usepackage{amsmath,amssymb,amsthm,eucal,cite}
\usepackage[top=3cm,right=3cm,left=2.5cm,bottom=3cm]{geometry}
\usepackage{tikz}
\usetikzlibrary{calc,matrix,patterns}
\usepackage{slashed}


\title{The power of the anomaly consistency condition for the Master Ward Identity:
Conservation of the non-Abelian gauge current}

\author{Michael D\"utsch%
\footnote{Institute f\"ur Theoretische Physik, Universit\"at G\"ottingen, 37077 G\"ottingen, Germany;\newline  
e-mail: michael.duetsch3@gmail.com}}

\date{}

\newcommand{\be}{\begin{equation}}   
\newcommand{\ee}{\end{equation}} 

\newcommand{\bt}{\beta}           
\newcommand{\Dl}{\Delta}          
\newcommand{\dl}{\delta}          
\newcommand{\eps}{\varepsilon}    
\newcommand{\ga}{\gamma}          
\newcommand{\ka}{\kappa}          
\newcommand{\La}{\Lambda}         
\newcommand{\la}{\lambda}         
\newcommand{\om}{\omega}          
\renewcommand{\th}{\theta}        
\newcommand{\vf}{\varphi}         

\newcommand{\bC}{\mathbb{C}}      
\newcommand{\bK}{\mathbb{K}}
\newcommand{\bM}{\mathbb{M}}      
\newcommand{\bN}{\mathbb{N}}      
\newcommand{\bR}{\mathbb{R}}      


\newcommand{\nS}{\mathbf{S}}      

\newcommand{\sC}{\mathcal{C}}     
\newcommand{\sD}{\mathcal{D}}     
\newcommand{\sF}{\mathcal{F}}     
\newcommand{\sL}{\mathcal{L}}     
\newcommand{\sM}{\mathcal{M}}     
\newcommand{\sO}{\mathcal{O}}     
\newcommand{\sP}{\mathcal{P}}     
\newcommand{\sR}{\mathcal{R}}     
\newcommand{\sS}{\mathcal{S}}     
\newcommand{\sV}{\mathcal{V}}     
\newcommand{\sW}{\mathcal{W}}

\newcommand{\s}{\mathfrak{s}}
\newcommand{\g}{\mathfrak{g}}
\newcommand{\h}{\mathfrak{h}}

\newcommand{\G}{\mathfrak{G}}
\newcommand{\LieG}{\mathrm{Lie}\G}
\newcommand{\LieGc}{\mathrm{Lie}\G_c}

\newcommand{\class}{\mathrm{class}}     
\newcommand{\comp}{\mathrm{c}}    
\newcommand{\id}{\mathrm{id}} 
\newcommand{\loc}{\mathrm{loc}}   
\newcommand{\ret}{\mathrm{class}}   
\newcommand{\reta}{\mathrm{ret}}   
\newcommand{\sym}{\mathrm{sym}}

\newcommand{\del}{\partial}       

\newcommand{\less}{\setminus}     
\newcommand{\ovl}{\overline}      
\newcommand{\ox}{\otimes}         
\newcommand{\bigox}{\bigotimes}         
\newcommand{\oxyox}{\otimes\cdots\otimes} 
\newcommand{\unl}{\underline}     
\newcommand{\wh}{\widehat}        
\newcommand{\x}{\times}           
\renewcommand{\.}{\cdot}          
\renewcommand{\:}{\colon}         



\DeclareMathOperator{\sgn}{sgn}    
\DeclareMathOperator{\supp}{supp}  
\newcommand{\tr}{\mathrm{tr}} 
\newcommand{\YM}{\mathrm{YM}}
\newcommand{\gf}{\mathrm{gf}} 
\newcommand{\gh}{\mathrm{gh}}
\newcommand{\photon}{\mathrm{photon}}
\newcommand{\inter}{\mathrm{int}}

\newcommand{\ldbrack}{[\mskip-2.5mu[} 
\newcommand{\rdbrack}{]\mskip-2.5mu]} 

\newcommand{\pw}[1]{\ldbrack#1\rdbrack} 
\newcommand{\set}[1]{\{\,#1\,\}}    
\newcommand{\Set}[1]{\biggl\{#1\biggr\}} 
\newcommand{\word}[1]{\quad\mbox{#1}\quad} 
\def\wick:#1:{\mathopen:#1\mathclose:} 

\newcommand{\fd}[2]{\frac{\dl#1}{\dl\vf(#2)}} 
\def\duo<#1,#2>{\langle#1,#2\rangle} 




\numberwithin{equation}{section}

\theoremstyle{plain}
\newtheorem{thm}{Theorem}[section]  
\newtheorem{prop}[thm]{Proposition} 
\newtheorem{lema}[thm]{Lemma}       
\newtheorem{corl}[thm]{Corollary}   


\theoremstyle{definition}

\newtheoremstyle{example}
   {\topsep}{\topsep}{\small}{0pt}%
   {\bfseries}{.}{ }{}
\theoremstyle{example}

\theoremstyle{remark}
\newtheorem{remk}[thm]{Remark}      

\DeclareRobustCommand{\qned}{\ifmmode
  \else \leavevmode\unskip\penalty9999 \hbox{}\nobreak\hfill \fi
  \quad\hbox{\qnedsymbol}}
\newcommand{\qnedsymbol}{$\boxminus$} 



\makeatletter
\renewcommand{\section}{\@startsection{section}{1}{\z@}%
                        {-3.5ex \@plus -1ex \@minus -.2ex}%
                        {2.3ex \@plus.2ex}%
                        {\normalfont\large\bfseries}}
\renewcommand{\subsection}{\@startsection{subsection}{2}{\z@}%
                        {-3.25ex \@plus -1ex \@minus -.2ex}%
                        {1.5ex \@plus .2ex}%
                        {\normalfont\normalsize\bfseries}}
\renewcommand{\subsubsection}{\@startsection{subsubsection}{3}{\z@}%
                        {-3.25ex \@plus -1ex \@minus -.2ex}%
                        {1.5ex \@plus .2ex}%
                        {\normalfont\normalsize\itshape}}
\renewcommand{\@dotsep}{200} 
\makeatother

 
\def\blue#1{{\color{black}#1}}

\begin{document}

\maketitle

\begin{abstract}
Extending local gauge tansformations in a suitable way to Faddeev-Popov ghost fields, one obtains a symmetry of
the total action, i.e., the Yang-Mills action plus a gauge fixing term (in a $\la$-gauge) plus the ghost action.
The anomalous Master Ward Identity (for this action and this extended, local gauge transformation) states
that the pertinent Noether current -- the interacting ``gauge current'' -- is conserved up to anomalies.

It is proved that, apart from terms being easily removable (by finite renormalization), all
possible anomalies are excluded by the consistency condition for the anomaly of the Master Ward Identity, recently derived 
in \cite{BDFR23}. 
\end{abstract}

\section{Introduction}

The main problem in the perturbative quantization of a classical field theory is the maintenance of classical symmetries, which is not 
always possible -- for certain symmetries there appear \emph{anomalies}. Theoretically the latter have been intensively investigated
(see e.g.~\cite{Bert}) and in several instances their appearance has been exprimentally confirmed. 
In particular, in a large class of applications, anomalies satisfy the \emph{Wess-Zumino consistency condition(s)} \cite{WZ71}. 

The Master Ward Identity (MWI) is a universal formulation of symmetries (see \cite{DB02,DF03} or \cite[Chap.~4]{D19}). 
In perturbative, classical field theory it is obtained by the pointwise multiplication of an 
arbitrary, interacting, local field \blue{$\bigl(Q(x)\bigr)^\class_{S_\inter}$ 
(where $S_\inter$ is the interaction, $Q$ is a polynomial in the basic field $\vf$ and
its partial derivatives $\del^a\vf$ and $\bigl(Q(x)\bigr)^\class_{S_\inter}$ is the 
corresponding classical interacting field, i.e.~it satisfies $\bigl(Q(x)\bigr)^\class_{S_\inter}\big\vert_{S_\inter=0}=Q(x)$)}
with the off-shell field equation, \blue{$\Bigl(\fd{(S_0+S_\inter)}{x}\Bigr)^\class_{S_\inter}=\fd{S_0}{x}$ 
(where $S_0$ is the free action); the resulting classical MWI can be written as
\be\label{eq:MWI-class}
\int dx\,\,q(x)\,\Bigl(Q(x)\cdot\fd{(S_0+S_\inter)}{x}\Bigr)^\class_{S_\inter}=
\int dx\,\,q(x)\,\bigl(Q(x)\bigr)^\class_{S_\inter}\cdot\fd{S_0}{x}\ ,
\ee
$q\in\sD(\bM)$ arbitrary; note that the r.h.s.~vanishes ``on-shell'', i.e., when the free field equation is valid.
On the l.h.s.~of \eqref{eq:MWI-class} it is used that 
\be\label{eq:prod-class-int-fields}
\bigl(Q(x)\bigr)^\class_{S_\inter}\cdot\Bigl(\fd{(S_0+S_\inter)}{x}\Bigr)^\class_{S_\inter}=
\Bigl(Q(x)\cdot\fd{(S_0+S_\inter)}{x}\Bigr)^\class_{S_\inter}\ ,
\ee
cf.~\eqref{eq:factorization}. However, for the corresponding interacting fields of perturbative Quantum Field Theory (pQFT), 
the pointwise product on the l.h.s.~of \eqref{eq:prod-class-int-fields}
does not exist (due to the distributional character of quantum fields), but the r.h.s.~of \eqref{eq:prod-class-int-fields} exists.
Hence, the MWI as written in \eqref{eq:MWI-class} is well-defined for the corresponding interacting fields of pQFT.
Requiring the MWI \eqref{eq:MWI-class} to hold in pQFT,} one obtains
a highly nontrivial renormalization condition, which cannot always be satisfied. \blue{However, one can prove the anomalous MWI (AMWI):} 
the main message of the pertinent theorem (see \cite[Thm.~7]{Brennecke08} or \cite[Thm.~4.3.1]{D19}, for the convenience of the reader we recall the relevant results in Thm.~\ref{thm:AMWI}) 
is that the MWI \blue{is valid for any renormalization prescription satisfisfying all other renormalization conditions, if one allows for} 
an additional term -- the anomalous term -- which is a local interacting field.
In addition, this theorem gives a lot of information about the structure of the anomalous term; however, there remained open the
question whether it satisfies a relation being analogous to the Wess-Zumino consistency condition -- without introducing 
antifields.
\footnote{In \cite[Prop.~4]{Hollands08}
Hollands gave a generalization of the AMWI to antifields and he derived a pertinent consistency condition for the anomaly
\cite[Prop.~5]{Hollands08}, but he wrote: ``These consistency conditions rely in an essential way upon the use
of the antifields, and this is the principal reason why we have introduced such fields in our construction''.}

Recently, this problem has been solved \cite{BDFR23}: for infinitesimal field transformations being \emph{affine} 
in the basic fields, such a relation has been derived directly from the definition of the anomalous term (that is, the AMWI)
without using antifields (Thm.~4.1 of that paper). In addition, the relation 
to the consistency condition for the anomaly in the 
BV formalism has been clarified. To wit, in the latter formalism, the anomaly consistency condition follows from the nilpotency of the
BV operator; this was already derived by Hollands \cite[Prop.~5]{Hollands08}.
\footnote{\blue{This reference treats only theories with closed gauge algebras.
Fr{\"o}b \cite{Fro19} has generalized Holland's results to theories with open gauge algebras (such as supersymmetric theories). 
Note that for this generalization antifields are indispensable:  they are required
 to close the gauge algebra off-shell and cannot be eliminated from the Lagrangian.}}
In \cite[Prop.~6.7]{BDFR23} it is proved that by suitable restriction of Hollands' anomaly consistency condition one obtains
the anomaly consistency condition of  \cite{BDFR23} -- a restriction eliminating the antifields from the (free and interacting) Lagrangian
and from the anomaly term.

In \cite{BDFR23} the derived anomaly consistency condition is called ``extended Wess-Zumino consistency condition'', because for
interactions being \emph{quadratic} in the basic fields, it reduces to the Wess-Zumino consistency condition \cite{WZ71}
\blue{(see also Remark \ref{rem:WZ})}.
However, it is not worked out in \cite{BDFR23}, what one gains by this extension in practice, that is, for the perturbative
quantization of a concrete model.

Hence, the main motivation for this paper is to study explicitly the restrictions on the anomaly of the Master Ward identity
coming from the anomaly consistency condition work out in \cite{BDFR23}, 
for a model for which this condition is non-trivial. The latter implies that the underlying symmetry
transformation must be non-Abelian. So we study local gauge transformations for massless, interacting Yang-Mills theories
in a $\la$-gauge à la `t Hooft (see e.g.~\cite{ADS98}). 
As it is well known, without additional fields, the quantized Yang-Mills field \blue{in a $\la$-gauge is problematic, because the 
gauge fixing term is not gauge invariant}  
-- the usual way out is to add  Faddeev-Popov ghosts (FP ghosts), we follow this path. 

A (famous) transformation leaving the total action (i.e., the Yang-Mills action plus a gauge fixing term plus the ghost action) 
invariant is the interacting 
BRST-tranformation; but, it is not affine, hence, it does not fit to our purposes.

However, a pure, local gauge transformation (i.e., the FP ghost are invariant) is not a symmetry of the considered model 
(see formula \eqref{eq:no-symmetry} for a precise formulation of this statement).
To remove this deficiency, we introduce ``extended'', local gauge transformations, transforming also the FP ghosts,
in such a way that this tranformation is both, a symmetry of the considered model and still an affine transformation.

The Noether current $J^\mu$ belonging to such an extended, local gauge transformation is called ``non-Abelian gauge 
current''. \blue{We study this current solely in a region of Minkowski space in which 
the test function $g$ switching the interaction is constant.}
Since the Yang-Mills Lagrangian is gauge invariant, the contributions to $J^\mu$ are coming from the
gauge fixing term and the ghost Lagrangian, the former contribution depends on the gauge fixing parameter $\la$.
Classically $J^\mu$ is conserved modulo the interacting field equations (Remark \ref{rem:class-dJ=0}). 
In pQFT the AMWI \eqref{eq:AMWI-explicit} states that conservation of the interacting gauge current, modulo the field 
equations for the underlying free fields (which imply the field equations for the interacting fields), may be violated 
in a well controlled way, that is, by anomalies fulfilling Thm.~\ref{thm:AMWI}. These possible anomalies satisfy also
the consistency condition derived in \cite{BDFR23}.
The most important result of this paper is that, \emph{apart from trivial anomalies} 
(i.e., anomalies which can be removed in an obvious way 
by a finite renormalization of the retarded product ($R$-product), or equivalently of the current $J^\mu$), 
\emph{all possible anomalies are excluded by this consistency condition}, \blue{provided that we are in a region in which 
the test function switching the interaction is constant.}

In the literature there are various proofs that massless Yang-Mills theories in a $\la$-gauge are anomaly free (i.e., all possible 
anomalies can be removed by finite, admissible renormalizations, see e.g.~\cite[Chap.~12-4]{IZ80}
or \cite{DHKS93,DHS95,ADS98}), even on curved spacetimes \cite{Hollands08},\blue{\cite{Fro19}}; 
typically these proofs are quite long and/or intricate. In this paper \blue{we prove only the conservation of the non-Abelian
gauge current -- we do not expect that the validty of this suffices for a pysically consistent construction 
of quantized Yang-Mills theories as e.g. given in \cite{DF99} (for QED), \cite{Hollands08} and \cite{Fro19} (however,
in all these works a proof  that the expectation values of the observables in the physical states are independent of the 
choice of the gauge fixing is missing).
Nevertheless, compared to the above mentioned proofs, our proof}
requires surprisingly little work \blue{-- this demonstrates the power of the anomaly consistency condition of \cite{BDFR23}.} 

\medskip

We use the approach to pQFT explained in the book \cite{D19} (which for the most parts relies on 
\cite{BF00,BrunettiDF09,DB02,DF99,DF01,DF01a,DF03,DF04,Brennecke08,DuetschFKR14}): fundamental building stones are that 
(classical and quantum) fields are functionals on the configuration space (``functional formalism''), quantization is achieved
by deformation quantization of the underlying free theory and interacting quantum fields are axiomatically defined and 
constructed by a version of Epstein-Glaser renormalization \cite{EG73}.
To avoid the listing of too many  references, we sometimes refer only to the book \cite{D19}, the original references can be found there.

In Sects.~\ref{sec:YM-basics}-\ref{sec:ccAMWI} we derive the above sketched results for massless Yang-Mills theories with
totally antisymmetric and non-vanishing structure constants. In Sect.~\ref{sec:generality} we discuss, for each of the main results
of Sects.~\ref{sec:YM-basics}-\ref{sec:ccAMWI},
which properties of the model are need in order that this result can be derived by the methods developed in this paper. 

In Sect.~\ref{sec:global-trafo} we study the \emph{global} transformation 
corresponding to the extended \emph{local} gauge transformation (treated in Sects.~\ref{sec:YM-basics}-\ref{sec:ccAMWI})
and the pertinent Noether current $j^\mu(g;x)$; the latter is closely related to the non-Abelian gauge current $J^\mu(x)$:
\footnote{\blue{It is the current $j^\mu(g;x)$ belonging to the global transformation that is usually studied in non-Abelian gauge theories.}}
\blue{in pQFT on-shell conservation of the corresponding interacting quantum currents is violated by the same anomaly term, 
provided that we are in a region in which the test function $g$
switching the interaction is constant (see \eqref{eq:dj=dJ}). A crucial advantage of the \emph{global} transformation} is that 
the underlying Lie group is \emph{compact} -- this does not hold for the local transformation.
By this compactness we gain that the Haar measure (on this group) is available; which makes possible the following procedure:
by a symmetrization of the $R$-product w.r.t.~this group, we
can reach that this product commutes with the global transformation (Prop.~\ref{pr:[Sa,R]=0}). This result 
is essentially used in Sects.~\ref{sec:YM-basics}-\ref{sec:ccAMWI}.
In addition, \blue{omitting the assumption that we are in a region in which the test function $g$ switching the interaction is constant,
this result (i.e., Prop.~\ref{pr:[Sa,R]=0})} yields a restriction of the possible anomaly of the conservation of the interacting
quantum Noether current $\bigl(j^\mu(g;x)\bigr)_{S_\inter(g)}$ belonging to the global
transformation (Prop.~\ref{pr:int-anomaly}). But it seems that a proof of the removability of this anomaly 
(by finite renormalizations of the $R$-product) requires quite a lot of additional work -- 
see the analogous proofs of the conservation of the 
electromagnetic current for spinor and scalar QED given in \cite{DF99} (or \cite[Chap.~5.2]{D19}) and \cite{DPR21}, respectively.


\section{Basics of massless Yang-Mills theories in the functional formalism}\label{sec:YM-basics}

\paragraph{The Lie algebra.}
Let $\LieG$ be a Lie algebra with totally antisymmetric
\footnote{In the approach to perturbative quantum gauge theories developed in the book \cite{Scharf01}
it turns out that such a theory is physically consistent only if the structure constants of the underlying Lie algebra are
totally antisymmetric.}
and non-vanishing (see \eqref{eq:f-nonvanishing} below) structure constants $f_{abc}$, e.g., $su(N)$. 
\blue{Total antisymmetry of the structure constants means that $\LieG$ is isomorphic to a direct sum of Abelian and simple compact Lie algebras (see e.g.~\cite[Chap.~3.5]{Scharf01}); the Abelian summands are excluded by the non-vanishing of the structure constants.}
Let $(T_a)_{a=1,\ldots,K}$ be a basis of the $\bR$-vector space $\LieG$; the pertinent structure constants 
are given in terms of the Lie bracket, $[\bullet,\bullet]:\LieG^{\ox 2}\to\LieG$, to wit
\be\label{eq:Lie-bracket-f}
[T_a,T_b]=f_{abc} T_c\ .
\ee
By the assumption that the structure constants are non-vanishing, we mean that for any
\footnote{The symbol ``$\sum_a$'' is mostly omitted, that is, repeated $\LieG$-indices are summed over.}  
$P=\sum_aP_aT_a\equiv P_aT_a\in\LieG$ the following conclusion holds:
\be\label{eq:f-nonvanishing}
f_{abc}P_c=0\,\,\,\forall a<b\quad\Rightarrow\quad P=0\ .
\ee
A  geometric interpretation of this assumption is given in the next section in \eqref{eq:f-not=0-geom}.

We will also use the trace, $\tr (\bullet\cdot\bullet):\LieG^{\ox 2}\to\bR$;
for $B=B_aT_a,\,C=C_aT_a\in\LieG$ (with $B_a,C_a\in\bR$) it is defined by
\be\label{eq:trace}
\tr (B\cdot C)\equiv\tr (BC):=\sum_a B_aC_a\in\bR\ ,\blue{\word{in particular}\tr(T_aT_b)=\dl_{ab}.} 
\ee
We also need the complexified Lie algebra: $\LieG_\bC:=\{c_a T_a\,\big\vert\,c_a\in\bC\}$.

\paragraph{Classical and quantum fields in the adjoint representation.}
Let $\bM$ be $4$-dimensional Minkowski space.
The gauge field $A^\mu$ and the FP ghost fields, $u$ and $\tilde u$, are in the adjoint representation: Writing $\rho$ for 
this representation and setting $t_a:=\rho(T_a)$ we have
\be\label{eq:adj-repr}
A^\mu(x) =A^\mu_a(x)\, t_a\ ,\quad u(x)=u_a(x)\,t_a\ ,\quad  \tilde u(x)=\tilde u_a(x)\,t_a\ ,
\word{with} (t_c)_{ab}:=f_{cba}=-f_{abc}
\ee
(the latter is the usual choice of a basis in $\rho(\LieG)$); to simplify the notations we mostly write $\LieG$ for $\rho(\LieG)$
(and similarly for $\LieG_\bC$). \blue{We assume that the basis of the Lie algebra, $(T_a)$, is normalized such that in the adjoint representation
the trace is given by $\frac{-1}{K}$ times the matrix trace, hence it holds that
\be\label{eq:tr-adjoint-rep}
\dl_{ab}=\tr(t_at_b)=\frac{-1}{K}\sum_{cd}(t_a)_{cd}(t_b)_{dc}=\frac{1}{K}\sum_{cd}f_{acd}f_{bcd}\ .
\ee}
In addition, we use the notation $\del A(x):=\del_\mu A^\mu_a(x)\,t_a$. 
The field strength tensor is
\be\label{eq:F-def}
F^{\mu\nu}(x):=\del^\mu A^\nu(x)-\del^\nu A^\mu(x)-\ka g(x) [A^\mu(x),A^\nu(x)]=F^{\mu\nu}_a(x)\,t_a\ ,
\quad g\in\sD(\bM,\bR)\ ,
\ee
where $\ka$ is the coupling constant; it is adiabatically switched off by multiplication with a test-function $g$.

We use the formalism in which classical and quantum fields are functionals on the (classical) configuration space $\sC$
(see \cite{D19} for an introduction to this formalism). In detail: Using the convention that $u$ is a real scalar field
and $\tilde u$ an anti-real scalar field (i.e, it takes valus in $i\bR$), the configuration space is%
\footnote{We recall that $K:=\dim\LieG$. \blue{The formulation of fermionic fields 
in the functional formalism given in \cite[Chap.~5.1]{D19} (which we use here)
may be viewed as a mathematically more elementary version of the one given in \cite{Rejzner11}.}}
\begin{align}\label{eq:configurations}
\sC:=C^\infty(\bM,\bR^{4K})&\x \oplus_{n=1}^\infty C^\infty(\bM,(\bR\x i\bR)^{K})^{\x n}\nonumber\\
&\simeq C^\infty(\bM,\LieG^{\x 4})\x \oplus_{n=1}^\infty C^\infty(\bM,\LieG\x i\LieG)^{\x n}\ ,
\end{align}
where, for $(h_a)\in C^\infty(\bM,\bR^{4K}),\, (v_a)\in C^\infty(\bM,\bR^{K}),\,
(\tilde v_a)\in C^\infty(\bM,(i\bR)^{K})$, 
we identify $(h^\mu_a)$ with $h^\mu(x):=h^\mu_a(x)\,t_a\in C^\infty(\bM,\LieG)$ and $(v_a),\,(\tilde v_a)$ with
$v(x):=v_a(x)t_a\in C^\infty(\bM,\LieG),\,\tilde v(x):=\tilde v_a(x) t_a\in C^\infty(\bM,i\LieG)$, respectively; 
see \cite[Chap.~5.1]{D19} for the configuration space of fermionic fields. (The reason 
for the direct sum $\oplus_{n=1}^\infty C^\infty(\bM,\ldots)^{\x n}$ in the fermionic configuration space is that
the classical product of fermionic fields is the wedge product, see \eqref{eq:wedge}.)

With this, the basic fields $A^\mu(x)$, $F^{\mu\nu}(x)$, $u(x)$ and $\tilde u(x)$ are the following evaluation functionals:%
\footnote{Besides the Lie bracket, we use the $[\,\,]$-brackets also for the application of a field (i.e., a functional) to a configuration
(i.e., a triple of smooth functions, see \eqref{eq:configurations}), and also for the commutator of functional differential operators, see
Lemma \ref{lem:dX=repr}.}
\begin{align*}
&A^\mu(x)[h,v,\tilde v]:=h^\mu(x)=h^\mu_a(x)\,t_a\in\LieG\ ,\\
&F^{\mu\nu}(x)[h,v,\tilde v]:=\del^\mu h^\nu(x)-\del^\nu h^\mu(x)-\ka g(x) [h^\mu(x),h^\nu(x)]\in\LieG\ ,\\
&u(x)[h,v,\tilde v]:=v(x)=v_a(x)\,t_a\in\LieG\ ,\\
&\tilde u(x)[h,v,\tilde v]:=\tilde v(x)=\tilde v_a(x)\,t_a\in i\LieG\ .
\end{align*}

The space of (classical and) quantum \emph{fields} $\sF=\sF_\YM\ox\sF_\gh$ is defined as
the vector space of functionals $F \equiv F(A,u,\tilde u): \sC \to \bC$ or $\sC \to \LieG_\bC$ of the form
\be
F(A,u,\tilde u) = f_0+\sum_{n=1}^N \sum_{j_1,\dots,j_n}\int dx_1 \cdots dx_n \,\,\vf_{j_1}(x_1) \cdots \vf_{j_n}(x_n)\,
f_n^{j_1,\dots,j_n}(x_1,\dots,x_n)
\label{eq:fields} 
\ee
with  $N < \infty$, where 
\be\label{eq:fields-1}
F(A^\mu,u,\tilde u)[h,v,\tilde v]:=F(h,v,\tilde v) \word{and} \vf_j\in\{A^\mu_a,u_a,\tilde u_a\}\ ,
\ee
i.e., the index $j$ of $\vf_j$ labels the components of the various basic fields; in particular, the sum over $j$
contains the sum over the $\LieG$-index $a$ of the basic fields $\vf_j$. In addition, the product of the $\vf_j$'s
is the  \emph{classical} product, that is, the pointwise product of functionals for bosonic fields and the 
wedge product of functionals for fermionic fields, e.g.,
\begin{align}\label{eq:wedge}
\vf_{j_1}(x_1) \cdots &\vf_{j_n}(x_n)= A^{\mu_1}_{a_1}(x_1)\cdots A^{\mu_k}_{a_k}(x_k)\\
&\ox u_{a_{k+1}}(x_{k+1})\wedge\cdots\wedge u_{a_{k+s}}(x_{k+s})\wedge
\tilde u_{a_{k+s+1}}(x_{k+s+1})\wedge\cdots\wedge\tilde u_{a_n}(x_n)\ ,\nonumber
\end{align}
where $k+s\leq n$.
In \eqref{eq:fields} the $f_n^{\dots}$'s are as follows: $f_0 \in \bC$ (or  $f_0 \in \LieG_\bC$) is a
constant and, for $n \geq 1$, $f_n^{j_1,\ldots,j_n}$ is a distribution with compact support, i.e., $f_n^{j_1,\ldots,j_n}\in\sD'(\bM^n,\bC)$
(or $\sD'(\bM^n,\LieG_\bC)$). In addition, each $f_n^{j_1,\ldots,j_n}$ is required to satisfy a certain condition on its
wave front set, which ensure the existence of pointwise products of distributions appearing in the star product, see \cite{DF01}.

The space of ``on-shell'' fields is
\be\label{eq:on-shell}
\sF_0:=\{F_0:=F\vert_{\sC_0}\,\big\vert\,F\in\sF\} \ ,
\ee
where $\sC_0$ is the subspace of $\sC$ of solutions of the free field equations.

The space $\sF_\loc$ of \emph{local fields} is the
following subspace of $\sF$: Let $\sP$ be the space of polynomials in the variables
$\{\del^rA^\mu_a,\del^ru_a,\del^r\tilde u_a\,|\,r \in \bN^4\}$ with real coefficients (``field polynomials''; again, 
the $\del^r u$'s and $\del^r\tilde u$'s are multiplied by the wedge product); then
\begin{equation}
\sF_\loc := \Set{\sum_{i=1}^I \int dx\,\, B_i(x)\, g_i(x)
\,\Big\vert\, B_i \in \sP, \ g_i \in \sD(\bM,\bC)\,\,\text{or}\,\,g_i \in \sD(\bM,\LieG_\bC), \ I < \infty}.
\label{eq:local-fields} 
\end{equation}

The $*$-operation is defined by complex conjugation, that is, 
\be 
{A^\mu_a}^*(x)=A^\mu_a(x)\word{and} u_a^*(x)=u_a(x)\ ,\quad \tilde u_a^*(x)=-\tilde u_a(x)\ ,
\ee
since $\tilde u_a$ is anti-real; and the $*$-conjugated field of $F\in\sF$ given in \eqref{eq:fields} is
\be
F^* = \ovl{f_0}+\sum_n\sum_{j_1,\dots,j_n}\int dx_1 \cdots dx_n \,\,\vf_{j_n}^*(x_n) \cdots \vf_{j_1}^*(x_1)\,
\ovl{f_n^{j_1,\dots,j_n}(x_1,\dots,x_n)}\ .\label{eq:F*}
\ee

The vacuum state of the free theory is
\be\label{eq:vacuum}
\sF\ni F\mapsto\om_0(F):=F[0]=f_0\in\bC\ ,
\ee
by using \eqref{eq:fields}-\eqref{eq:fields-1}.

\paragraph{Deformation quantization of the free theory.}
Using deformation quantization for the quantization of the underlying free theory
(with the gauge field in a $\la$-gauge à la `t Hooft), we define the star-product by
\footnote{The lower indices `l' and `r' signify whether the pertinent functional derivative w.r.t.~a fermionic field is acting
from the left- or right-hand side, respectively. If such an index is missing, we mean the one acting from the left-hand side.}
\begin{align}\label{eq:star-product}
&F\star G=\sM\circ e^{\hbar\sD}(F\ox G) \word{for $F,G\in\sF$, with}\\
\sD:=\int dx\,dy\,\,& \Bigl(-D_\la^{\mu\nu,+}(x-y)\,\frac{\dl}{\dl A^\mu_d(x)}\ox\frac{\dl}{\dl A_d^\nu(y)}\nonumber\\
&+D^+(x-y)\Bigl(\frac{\dl_r}{\dl u_d(x)}\ox\frac{\dl_l}{\dl \tilde u_d(y)}
-\frac{\dl_r}{\dl \tilde u_d(x)}\ox\frac{\dl_l}{\dl u_d(y)}\Bigr)\Bigr),\nonumber
\end{align}
where $\sM$ is the classical product,
\footnote{We recall that for fermionic fields this is the wedge product \eqref{eq:wedge}.} 
$\sM(F\ox G)=F\cdot G$; 
for details see \cite{DF01,DF01a,DF04} and \cite[Chaps. 2 and 5.1]{D19}. 
In addition, $D^+$ is the massless Wightman 2-point function and 
\be\label{eq:dipol-0}
D_\la^{\mu\nu,+}(z):=g^{\mu\nu}\, D^+(z)+\frac{1-\la}{\la}\,(\del^\mu\del^\nu E)^+(z)\ ,\quad
\blue{g:=\mathrm{diag}(+,-,-,-)}\ ,
\ee
where $E$ is the ``dipole distribution'', which is defined by
\be\label{eq:dipol-1}
\square^2 E=0\ ,\quad (\del_0^{\,\,k} E)(0,\vec x)=0\quad k=0,1,2\ ,\quad  (\del_0^{\,\,3} E)(0,\vec x)=-\dl(\vec x)\ ,
\ee
and $(\del^\mu\del^\nu E)^+$ is the positive frequency part of $i(\del^\mu\del^\nu E)$, for details see e.g.~\cite{ADS98}.
(Note that $E^+$ is ill-defined.) Explicitly, the solution of \eqref{eq:dipol-1} reads
\footnote{\blue{We use the following notations and conventions: $p^2:=p^\mu p_\mu$, $px:=p^\mu x_\mu$ and for the Fourier
transformation $\widehat f(p):=\frac{1}{(2\pi)^2}\int dx\,\,f(x)\,e^{ipx}$.}}
\be\label{eq:dipol-2}
E(x)=\frac{-i}{(2\pi)^3}\int d^4p\,\,\sgn(p^0)\,\dl'(p^2)\,e^{-ipx}=
\frac{-1}{8\pi}\,\sgn(x^0)\,\Theta(x^2)\ ,
\ee
and it satisfies
\be
(\del_\mu\del_\nu E)(0,\vec x)=0\ ,\quad (\del_\mu\del_\nu\del_j E)(0,\vec x)=0\quad j=1,2,3\ .
\ee
From these results and the well-known relations
\be
D(0,\vec x)=0\ ,\quad (\del_jD)(0,\vec x)=0\,\,\,j=1,2,3\ ,(\del_0D)(0,\vec x)=-\dl(\vec x)
\ee
we see that $D_\la^{\mu\nu}(x):=-i\bigl(D_\la^{\mu\nu,+}(x)-D_\la^{\mu\nu,+}(-x)\bigr)$ 
\blue{(i.e.,  the ``commutator function'' for the gauge field in a $\la$-gauge)} fulfils the initial conditions
\begin{align}\label{eq:D-la(0,x)}
D_\la^{\mu\nu}(0,\vec x)={}&0\ ,\quad(\del_jD_\la^{\mu\nu})(0,\vec x)=0\quad j=1,2,3\ ,\\
(\del_0D_\la^{\mu\nu})(0,\vec x)={}&\begin{cases}-\frac{1}{\la}\,\dl(\vec x)\quad\text{if $(\mu,\nu)=(0,0)$}\\
-g^{\mu\nu}\,\dl(\vec x)\quad\text{if $(\mu,\nu)\not= (0,0)$}\end{cases}
=-g^{\mu\nu}\,\dl(\vec x)\Bigl(1+\frac{1-\la}{\la}\dl_{\nu 0}\Bigr)\nonumber
\end{align}

The star product induces a well-defined product on the space $\sF_0$ of on-shell fields 
(which is also denoted by `$\star$') by the definition
\be\label{eq:on-shell-star-product}
F_0\star G_0:=(F\star G)_0\ ,
\ee
because $D^+$ and $D_\la^{\mu\nu,+}$ are solutions of the pertinent free field equation.

\blue{The connection to the Fock--Krein
\footnote{\blue{The algebra of on-shell spin $1$ gauge fields cannot be 
nontrivially represented on a pre Hilbert space such that the $*$-operation is respected, hence one usually works 
with an inner product space or a Krein space.}}
space formalism is given by the following result 
(see \cite{DF01,DF01a} or \cite[Thm.~2.6.3]{D19}): there is an injective algebra bi-homomorphism 
$\Phi$ from $(\sF_0,\,\cdot\,,\star)$ ("$\,\cdot\,$" stands for the classical product) to the space of linear operators on 
Fock--Krein space with the normally ordered product and the operator product, explicitly the intertwining relations read
\be
\Phi(F_0 \. G_0)=\wick:\Phi(F_0)\,\Phi(G_0):\word{and}\Phi(F_0\star G_0)=\Phi(F_0)\,\Phi(G_0)\ ;
\ee
in addition, $\Phi$ respects the $*$-operation:
$\big\langle\Psi_1,\Phi(F_0^*)\Psi_2\big\rangle=\big\langle\Phi(F_0)\Psi_1,\Psi_2\big\rangle$
for $\Psi_1,\Psi_2$ in the domain of $\Phi(F_0)$ or $\Phi(F_0^*)$, respectively.}

\paragraph{Classical field equations.}
The Yang-Mills (YM), gauge fixing (gf) and FP ghost (gh) Lagrangian, in a $\la$-gauge (where $\la\in\bR\less\{0\}$), read
\be\label{eq:Lagrangians}
L_\YM:=-\tfrac{1}4\,\tr( F^{\mu\nu} F_{\mu\nu})\ ,\quad L_\gf:= -\tfrac{\la}2\,\tr\bigl((\del A)(\del A)\bigr)\ ,
\quad L_\gh:=\tr\bigl((\del_\mu\tilde u)(D^\mu u)\bigr)\ ,
\ee
and the total action is
\be
S:=S_\YM+S_\gf+S_\gh\ ,\quad S_{\cdots}:=\int dx\,\,L_{\cdots} (x)\ ,
\ee
with the covariant derivative 
\be\label{eq:cov-derivat}
D^\mu :=\del^\mu+\ka g [\bullet,A^\mu]\ ,\word{that is,} 
\blue{D^\mu_{ab}(x):=\dl_{ab}\del_x^\mu+\ka g(x)\,f_{abc}A^\mu_c(x) \ ,}
\ee
in particular
\be
D^\mu u(x)=\blue{D^\mu_{ab}(x)u_b(x)\,t_a=}\bigl(\del^\mu u_a(x)+\ka g(x)\,f_{abc}\,u_b(x)\,A^\mu_c(x)\bigr)t_a\ .
\ee

Assuming that $g=1$ in a neighbourhood of $x$, the field equations $\frac{\dl S}{\dl A_{a,\mu}(x)}=0$, $\frac{\dl S}{\dl u_a(x)}=0$ 
and $\frac{\dl S}{\dl \tilde u_a(x)}=0$ can be written as:
\footnote{Note that $[\del^\mu\tilde u(x),u(x)]:=\del^\mu\tilde u_b(x)\wedge u_c(x) [t_b,t_c]=
f_{abc}\,\del^\mu\tilde u_b(x)\wedge u_c(x)\,t_a$;
hence it holds that $[\del^\mu\tilde u(x),u(x)]=[u(x),\del^\mu\tilde u(x)]$ (due to 
$\del^\mu\tilde u_b(x)\wedge u_c(x)=-u_c(x)\wedge\del^\mu\tilde u_b(x)$).}
\begin{align}
\square A^\mu(x)-&(1-\la)\del^\mu(\del A)(x)
=\ka \Bigl(\del_\nu [A^\nu(x),A^\mu(x)]+[A_\nu(x),(\del^\nu A^\mu(x)-\del^\mu A^\nu(x)]\nonumber\\
&-[\del^\mu\tilde u(x),u(x)] \Bigr)
+\ka^2 [A_\nu(x),[A^\mu(x),A^\nu(x)]]\ ,\label{eq:FE-A}\\
0=&D^\mu\del_\mu  \tilde u(x)=\square \tilde u(x)+\ka [\del_\mu\tilde u(x),A^\mu(x)]\ ,\label{eq:FE-tilde-u}\\
0=&-\del_\mu D^\mu u(x)=-\square u(x)-\ka\,\del_\mu[u(x),A^\mu(x)]\label{eq:FE-u}\ .
\end{align}
On-shell -- that is, restricted to configurations solving the free field equations -- the field equations are valid for 
classical, perturbative interacting fields (see \eqref{eq:class-int-field}); and, if the renormalization condition 
``off-shell field equation'' (given in Sect.~\ref{sec:AMWI}) is satisfied, this holds
also for the (partly composite) interacting fields of perturbative QFT \cite{D19}.


 \section{Extended, infinitesimal local gauge transformation with compact support}\label{sec:gauge-trafo}
 
 \paragraph{$\LieG$-rotation.}
Considering \emph{infinitesimal} transformations, we will use the notion of a ``$\LieG$-rotation'' in the adjoint representation $\rho$. 
Under $t_a\in\rho(\LieG)$ an element $P=P_b\,t_b$ of $\rho(\LieG)$ is transformed as follows:
\be\label{eq:LieG-Rot}
\s_a\,:\begin{cases}\rho(\LieG)\longrightarrow\rho(\LieG)\\
P\mapsto \s_a(P):=(t_aP)_b\,t_b\equiv(t_a)_{bc}P_c\,t_b=f_{acb}P_c\,t_b=[t_a,P]\ .\end{cases}
\ee
By the Jacobi identity, $\s_a$ is a derivation w.r.t.~the Lie bracket:
\be
\s_a([P,Q])=[\s_a(P),Q]+[P,\s_a(Q)]\ ,\quad P,Q\in\rho(\LieG)\ .
\ee

With this, the assumption that the structure constants are non-vanishing \eqref{eq:f-nonvanishing}, can equivalently be written as
\be\label{eq:f-not=0-geom}
 \s_a(P)=0\,\,\,\forall a\quad\Rightarrow\quad P=0\ .
\ee
 
A ``$\LieG$-rotation'' for \emph{fields} being in the adjoint representation is defined as follows: Using the notation ``$s_a$''
for the transformation given by $t_a$, a  basic field $\vf(x)=\vf_b(x)\,t_b$ transforms as in \eqref{eq:LieG-Rot},
that is,
\be\label{eq:LieG-rot}
s_a\,:\,\vf(x)\mapsto s_a\bigl(\vf(x)\bigr):=
f_{acb}\vf_c(x)\,t_b=[t_a,\vf(x)]\ .
\ee
The transformation of a field $F\in\sF$ being composed of basic fields $\vf_j(x)=\vf_{j,b}(x)\,t_b$ is defined 
by the requirement that $s_a:\sF\to\sF$ is a derivation w.r.t.~the 
classical product (i.e., the pointwise or wedge product of functionals), hence
\be\label{eq:LieG-rot-1}
F\mapsto s_a(F):=\sum_{j,b}\int dx\,\,[t_a,\vf_j(x)]_b\,\frac{\dl F}{\dl \vf_{j,b}(x)}=
-f_{abc}\int dx\,\,\vf_{j,c}(x)\,\frac{\dl F}{\dl \vf_{j,b}(x)}\ .
\ee
\blue{Note that
\be\label{eq:sa-fd}
s_a\bigl(\del^\mu P(x)\bigr)=\del^\mu_x s_a\bigl(P(x)\bigr)\word{and}
s_a\Bigl(\frac{\dl F}{\dl\vf_{k,b}(x)}\Bigr)=\frac{\dl s_a(F)}{\dl\vf_{k,b}(x)}-f_{abc}\,\frac{\dl F}{\dl\vf_{k,c}(x)}\ ,
\ee
for $P\in\sP$. Motivated by $s_a\bigl(\vf_b(x)\bigr)=-f_{abc}\,\vf_c(x)$, $\LieG$-covariant fields are defined as follows: 
\be
F_b\in\sF\word{is a $\LieG$-vector iff} s_a(F_b)=-f_{abc}\,F_c\,\,\,(\Leftrightarrow s_a(F)=[t_a,F]\,\,\text{for $F:=F_bt_b$,})
\label{eq:LieG-vector}\\
\ee
\be
G_{bc}\in\sF\word{is a $\LieG$-tensor of 2nd rank iff} s_a(G_{bc})=-f_{abd}\,G_{dc}-f_{acd}\,G_{bd}\ .\label{eq:LieG-tensor}
\ee}
For example, using the notations introduced in \eqref{eq:fields}, we obtain
\begin{align}
s_a\Bigl(\int dx_1&\cdots dx_4\,\,f^b_4(x_1,\ldots,x_4)\,[\vf_{j_1}(x_1),\vf_{j_2}(x_2)]_b\,
\tr(\vf_{j_3}(x_3)\cdot\vf_{j_4}(x_4))\Bigr)\\
=&\int dx_1\cdots dx_4\,\,f^b_4(x_1,\ldots,x_4)\nonumber\\
&\cdot\Bigl(\bigl([[t_a,\vf_{j_1}(x_1)],\vf_{j_2}(x_2)]_b
+[\vf_{j_1}(x_1),[t_a,\vf_{j_2}(x_2)]]_b\bigr)\,\tr(\vf_{j_3}(x_3)\cdot\vf_{j_4}(x_4))\nonumber\\
&+[\vf_{j_1}(x_1),\vf_{j_2}(x_2)]_b\,\bigl(\tr([t_a,\vf_{j_3}(x_3)]\cdot\vf_{j_4}(x_4))
+\tr(\vf_{j_3}(x_3)\cdot[t_a,\vf_{j_4}(x_4)])\bigr)\Bigr)\nonumber\\
=&\blue{\int dx_1\cdots dx_4\,\,f^b_4(x_1,\ldots,x_4)\,[t_a,[\vf_{j_1}(x_1),\vf_{j_2}(x_2)]]_b
\,\tr(\vf_{j_3}(x_3)\cdot\vf_{j_4}(x_4))}\ ,\nonumber
\end{align}
\blue{by using the Jacobi identity for the double Lie bracket terms} and
\be\label{eq:sa(tr)}
s_a\Bigl(\tr(\vf_{j_1}(x)\cdot\vf_{j_2}(y))\Bigr)=\tr([t_a,\vf_{j_1}(x)]\cdot\vf_{j_2}(y))+\tr(\vf_{j_1}(x)\cdot[t_a,\vf_{j_2}(y)])=0\ ,
\ee
the latter is due to the total antisymmetry of the structure constants. Hence it holds that $s_a\bigl(L_\gf(x)\bigr)=0$. Using
additionally the Jacobi identity for the Lie bracket, we also obtain
\begin{align}\label{eq:sa(Lgh)}
s_a(L_\gh)=&\,\tr\bigl(\del_\mu [t_a,\tilde u]\cdot D^\mu u\bigr)+\tr\Bigl(\del\tilde u\cdot\Bigl(\del^\mu [t_a,u]+
\ka g\bigl([[t_a,u],A^\mu]+[u,[t_a,A^\mu]]\bigr)\Bigr)\Bigr)\nonumber\\
=&\,\tr\bigl([t_a,\del_\mu \tilde u]\cdot D^\mu u\bigr)+\tr\bigl(\del_\mu \tilde u\cdot [t_a,D^\mu u]\bigr)=0
\end{align}
and 
\begin{align}\label{eq:sa(LYM)}
s_a(L_\YM)=&-\tfrac{1}2\,\tr\Bigl(\Bigl(\del^\mu [t_a,A^\nu]-\del^\nu [t_a,A^\mu]-\ka g\bigl([[t_a,A^\mu],A^\nu]+
[A^\mu,[t_a,A^\nu]]\bigr)\Bigr)\cdot F_{\mu\nu}\Bigr)\nonumber\\
=&\,-\tfrac{1}2\,\tr\bigl([t_a,F^{\mu\nu}]\cdot F_{\mu\nu}\bigr)=0
\end{align}
(using $f_{abc}F_b^{\mu\nu}\cdot F_{c\,\mu\nu}=0$ in the last step).

\paragraph{\blue{Extended}, local gauge transformations with compact support.}
Since we want to study \emph{local} transformations, with \emph{compact support}, we consider the Lie algebra
 $$
 \LieGc:=\sD(\bM,\LieG)\ ,\word{where} [X,Y](x):=[X(x),Y(x)]\word{for} X,Y\in\LieGc,\,x\in\bM\ .
 $$
 Note that $X\in\LieGc$ is of the form $X(x)=X_a(x)\,t_a$ with $X_a\in\sD(\bM,\bR)$.
 
In the remainder of this sect.~and in sects.~\ref{sec:AMWI}-\ref{sec:generality}
we solely study pairs $(X,g)$ (where $X\in\LieGc$ and $g$ is the test-function switching $\ka$) satisfying
\be\label{eq:g=1}
g\big\vert_{\supp X}=1\ .
\ee

As it is very well known, the action on the configuration space of a local gauge transformation given by $X\in\LieGc$ is
for $h$ (i.e., the configuration of the gauge field $A$)
not only a local $\LieGc$-rotation (multiplied by $(-\ka)$); additionally it contains an infinitesimal field shift 
(see Footnote \ref{fn:field-shift}):
\footnote{In order to agree with the conventions used in \cite{BDFR23}, we write $hX$ instead of $Xh$ \blue{and analogously for 
$v$ and $\tilde v$}.}
\blue{\begin{align}\label{eq:hX}
&\sC\x\LieGc\ni ((h,v,\tilde v),X)\mapsto (hX,vX,\tilde vX)\word{with}\\
& (hX)^\mu(x):=-\bigl(\del^\mu X(x)+\ka [X(x),h^\mu(x)]\bigr)\ ;\nonumber
\end{align}
and we extend this transformation to the ghost fields by a local $\LieGc$-rotation (multiplied by $(-\ka)$):
\be
(vX)(x):=-\ka[X(x),v(x)]\,,\quad (\tilde v X)(x):=-\ka[X(x),\tilde v(x)]\ .
\ee
We point out that this is the infinitesimal version of an \emph{affine} configuration transformation with compact support, i.e., 
it fits in the framework studied in \cite{BDFR21,BDFR23}.}

\blue{A \emph{Lie algebra representation} of $\LieGc$ by linear maps on $\sF_\loc$ is given by
\begin{align}\label{eq:del_X}
\LieGc\ni  X&\mapsto \del_X:=\del_X^A+\del_X^u+\del_X^{\tilde u}:\sF_\loc\to\sF_\loc\ ,\word{with}\\
\del_X^A&:=-\sum_a\int dx\,\,\Bigl(\del^\mu X_a(x)+\ka [X(x),A^\mu(x)]_a\Bigr)\,\frac{\dl}{\dl A_a^\mu(x)}\nonumber\\
&=:-\int dx\,\,\Bigl(\del^\mu X(x)+\ka [X(x),A^\mu(x)]\Bigr)\,\frac{\dl}{\dl A^\mu (x)}\ ,\nonumber\\
\del_X^u&:=-\ka\sum_a\int dx\,\, [X(x),u(x)]_a\,\frac{\dl}{\dl u_a(x)}=:-\ka\int dx\,\,[X(x),u(x)]\,\frac{\dl}{\dl u(x)}\ ,
\nonumber\\
\del_X^{\tilde u}&:=-\ka\sum_a\int dx\,\, [X(x),\tilde u(x)]_a\,\frac{\dl}{\dl \tilde u_a(x)}=:
-\ka\int dx\,\,[X(x),\tilde u(x)]\,\frac{\dl}{\dl \tilde u(x)}\ ,\nonumber
\end{align}
where the respective last expressions are shorthand notations.
\footnote{For $\del^A_X$, the corresponding (infinitesimal) gauge transformation for the free theory (i.e., $\ka=0$), can be understood as 
infinitesimal field shift:
$$
(\del_X^A\big\vert_{\ka=0}F)[h,v,\tilde v]=\frac{d}{d\eps}\Big\vert_{\eps=0}F[h-\eps\del X,v,\tilde v]\ .
$$\label{fn:field-shift}} 
In the definition of $\del^A_X$, the infinitesimal field shift and the $\LieGc$-rotation can be summarized by the covariant derivative \eqref{eq:cov-derivat}:
\be\label{eq:del_X-D}
\del_X^A=-\int dx\,\,\bigl(D^\mu_{ba}(x)\,X_a(x)\bigr)\,\frac{\dl}{\dl A_b^\mu(x)}=
\int dx\,X_a(x)\,D^\mu_{ab}(x)\,\frac{\dl}{\dl A_b^\mu(x)}\ .
\ee
In particular, we obtain
\begin{align}\label{eq:del_X-basic}
\del_X A^\mu(x)=&-\bigl(\del^\mu X(x)+\ka [X(x),A^\mu(x)]\bigr)=-D^\mu(x)\,X(x)\ ,\nonumber\\
\del_X u(x)=&-\ka [X(x),u(x)]\ ,\quad \del_X \tilde u(x)=-\ka [X(x),\tilde u(x)]\ .
\end{align}
Note that
\be
\del_X A^\mu(x)[h,v,\tilde v]=\bigl((hX)^\mu(x),v(x),\tilde v(x)\bigr)\ ,\quad
\del_X u(x)[h,v,\tilde v]=\bigl(h(x),(vX)(x),\tilde v(x)\bigr)
\ee
and analogously for $\del_X \tilde u(x)[h,v,\tilde v]$ -- in agreement with the formalism developed in \cite{BDFR21,BDFR23}.}

\blue{From the definition of $\del_X$ \eqref{eq:del_X}, we see that it is a functional differential operator, hence it}
is a derivation on the space $\sP$ of polynomials in the 
basic fields $A^\mu,\,u,\,\tilde u$ and its partial derivatives, explicitly:
$$
\del_X\bigl(P(x)\cdot R(x)\bigr)=\bigl(\del_X P(x)\bigr)\cdot R(x)+P(x)\cdot\bigl(\del_X R(x)\bigr)\word{for $P,R\in\sP$.}
$$

With the assumption \eqref{eq:g=1} and by using that 
\be\label{eq:dXdA=ddXA}
\del_X(\del^\mu A^\nu)(x)=\del^\mu(\del_X A^\nu(x))
\ee
(which follows from \eqref{eq:del_X}) we obtain
\begin{align}
\del_X F^{\mu\nu}(x)=&-\{\del^\mu\del^\nu X(x)+\ka\del^\mu [X(x),A^\nu(x)]
-\ka\bigl[(\del^\mu X(x)+\ka [X(x),A^\mu(x)]),A^\nu(x)\bigr]\}\nonumber\\
&+\{\mu\leftrightarrow\nu\}\nonumber\\
=&-\ka\,[X(x),F^{\mu\nu}(x)]\ ,
\end{align}
by using the Jacobi identity for the Lie bracket in the last step; that is, the transformation of $F^{\mu\nu}$ is 
simply a local $\LieGc$-rotation. A glance at \eqref{eq:sa(LYM)} shows that
\be\label{eq:dX(LYM)}
\del_X\,L_\YM(x)=0\ .
\ee

\medskip

\blue{\begin{remk}
As we will work out in the proof of Prop.~\ref{prop:dXS} (see in particular formula \eqref{eq:dX(S)}), the pure gauge 
transformation $\del_X^A$ is not a symmetry of the considered model (given by $S=S_\YM+S_\gf+S_\gh$), in the sense that 
\be\label{eq:no-symmetry}
\del_X^A S \word{is {\bf not} of the form} \int dx\,\, X_a(x)\,\del_\mu J^\mu_a(x)
\word{for some currents $J_a\equiv (J_a^\mu)\in\sP^{\x 4}$.}
\ee
\end{remk}}

\medskip

\blue{However, for the ``extended local gauge transformation'' \eqref{eq:del_X} it indeed holds that
\be\label{eq:dXS}
\del_X S= \int dx\,\, X_a(x)\,\del_\mu J^\mu_a(x)\word{for some $J_a\in\sP^{\x 4}$,} 
\ee
which we call the ``non-Abelian gauge current(s)''. To derive this result}, we rewrite $\del_X F$ ($F\in\sF_\loc$) as
\begin{align}\label{eq:dXF}
\del_X F=&\int dx\,\,\tr\bigl(X(x)\cdot\sD(x)\bigr)F\word{where}\sD(x)=t_a \,\sD_a(x)\word{and}\\
\sD_a(x)F:=&\,\del^\mu_x\frac{\dl F}{\dl A_a^\mu (x)}-\ka \Bigl[A^\mu(x),\frac{\dl F}{\dl A^\mu (x)}\Bigr]_a
-\ka \Bigl[u(x),\frac{\dl F}{\dl u(x)}\Bigr]_a-\ka \Bigl[\tilde u(x),\frac{\dl F}{\tilde u(x)}\Bigr]_a\ ,\nonumber
\end{align}
where we use the notation
\be
\Bigl[\vf(x),\frac{\dl F}{\dl\vf(x)}\Bigr]_a:=f_{acb}\,\vf_c(x)\,\frac{\dl F}{\dl\vf_b(x)}=[t_a,\vf(x)]_b\,\frac{\dl F}{\dl \vf_b(x)}
\word{for $\vf=A^\mu, u, \tilde u$.}
\ee
\blue{With this preparation, the claim \eqref{eq:dXS} follows from the assumption \eqref{eq:g=1} and the following result:
\begin{prop}\label{prop:dXS}
For $x\in g^{-1}(1)^\circ\equiv\{z\in\bM\,\big\vert\,g(z)=1\}^\circ$  (where the upper index ``$\,\circ$'' denotes the 
interior of the pertinent set) it holds that
\begin{align}
&\sD_a(x)\,S_\YM=0\word{and}\label{eq:D(SYM)}\\
&\sD_a(x)\,S=\del_\mu J_a^\mu(x) \word{with the ``non-Abelian gauge current''}\label{eq:D(S_0+Sint)=dJ}\\
&J^\mu\equiv J^\mu_a t_a:=\la\square A^\mu+\ka\bigl(\la [(\del A),A^\mu]+[\tilde u,D^\mu u]\bigr)\ .\label{eq:J}
 \end{align}
 The first two terms of $J^\mu$ are the contribution coming from $\sD_a(x) S_\gf$ -- they depend on the gauge fixing parameter 
 $\la$, the last term of $J^\mu$ is the contribution of $\sD_a(x) S_\gh$. 
\end{prop}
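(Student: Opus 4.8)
The plan is to treat the three summands of $S=S_\YM+S_\gf+S_\gh$ separately, computing $\sD_a(x)$ on each by feeding the relevant functional derivatives into the definition \eqref{eq:dXF} of $\sD_a$, and then checking that $\sD_a S_\gf+\sD_a S_\gh$ organizes into a total divergence $\del_\mu J^\mu_a$ with $J^\mu$ as in \eqref{eq:J}. Throughout I use that on $g^{-1}(1)^\circ$ we have $g\equiv1$ and $\del g=0$ locally, so every $\del g$-term produced by integration by parts drops out and the covariant derivative reduces to $D^\mu=\del^\mu+\ka[\,\bullet\,,A^\mu]$. The key structural observation is that, by \eqref{eq:del_X-D}, the $A$-part of $\sD_a$ is exactly the covariant divergence $D^\mu_{ab}\,\dl/\dl A^\mu_b$ of the $A$-functional-derivative; this is what makes all three computations short.

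For the Yang--Mills term I would use that $\dl S_\YM/\dl A^\mu_b=(D^\nu F_{\nu\mu})_b$ (the source-free field equation) for $x\in g^{-1}(1)^\circ$, so that $\sD_a S_\YM=(D^\mu D^\nu F_{\nu\mu})_a$. Antisymmetrizing in $\mu,\nu$ (the symmetric part annihilates $F_{\nu\mu}=-F_{\mu\nu}$) turns this into $\tfrac12([D^\mu,D^\nu]F_{\nu\mu})_a$, and the identity $[D^\mu,D^\nu]V=\ka[V,F^{\mu\nu}]$ (a one-line consequence of the Jacobi identity) reduces it to $\tfrac{\ka}2[F_{\nu\mu},F^{\mu\nu}]_a=0$, the summed commutator vanishing by relabeling. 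This proves \eqref{eq:D(SYM)}. Equivalently one may invoke the already established pointwise identity $\del_X L_\YM=0$ of \eqref{eq:dX(LYM)}: it gives $\int dx\,X_a\,\sD_a S_\YM=\del_X S_\YM=0$ for every admissible $X$, hence $\sD_a S_\YM=0$ on $g^{-1}(1)^\circ$.

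For the gauge-fixing term the only functional derivative is $\dl S_\gf/\dl A^\mu_b=\la\,\del_\mu(\del A)_b$, and applying the covariant divergence gives $\sD_a S_\gf=\la\,\square(\del A)_a-\ka\la\,[A^\mu,\del_\mu(\del A)]_a$. Using $[(\del A),(\del A)]=0$ one recognizes the right-hand side as $\del_\mu\bigl(\la\,\square A^\mu+\ka\la\,[(\del A),A^\mu]\bigr)_a$, which supplies the first two terms of $J^\mu$. For the ghost term I would insert the three ghost functional derivatives --- $\dl S_\gh/\dl A^\mu=\ka[\del_\mu\tilde u,u]$ together with the two ghost field equations \eqref{eq:FE-tilde-u}, \eqref{eq:FE-u} for $\dl S_\gh/\dl u$ and $\dl S_\gh/\dl\tilde u$ --- into the full operator $\sD_a=D^\mu_{ab}\,\dl/\dl A^\mu_b-\ka[u,\dl/\dl u]_a-\ka[\tilde u,\dl/\dl\tilde u]_a$. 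Collecting terms by powers of $\ka$, the $O(\ka)$ part should assemble to $\ka([\del_\mu\tilde u,\del^\mu u]+[\tilde u,\square u])_a$ and the $O(\ka^2)$ part to the remaining pieces, so that the total is the divergence $\ka\,\del_\mu[\tilde u,D^\mu u]_a$, the last term of $J^\mu$. Adding the three contributions and using $\sD_a S_\YM=0$ yields \eqref{eq:D(S_0+Sint)=dJ} with $J^\mu$ as claimed.

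The main obstacle is the ghost computation: it requires careful bookkeeping of the fermionic (Grassmann) signs --- in particular that the graded bracket of two odd fields is symmetric, e.g.\ $[u,\square\tilde u]=[\square\tilde u,u]$ --- and of the correct left/right conventions for the fermionic functional derivatives, so that the signs of $\dl S_\gh/\dl u$ and $\dl S_\gh/\dl\tilde u$ match \eqref{eq:FE-tilde-u}, \eqref{eq:FE-u}. The cancellation of the $O(\ka^2)$ terms into $[\del_\mu\tilde u,[u,A^\mu]]$ is precisely a graded Jacobi identity for the odd--odd--even triple $(\del_\mu\tilde u,u,A^\mu)$; getting its sign right is the one delicate point. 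Everything else is a short, direct manipulation once the covariant-divergence form of $\sD_a$ is in hand.
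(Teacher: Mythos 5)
Your proposal is correct, and its core is the same as the paper's: insert the explicit functional derivatives of the action into $\sD_a(x)$, integrate by parts using that $g$ is locally constant, and let the Jacobi identity dispose of the $\ka^2$-terms. The difference is in the organization, and it is a genuine one. The paper proves \eqref{eq:D(SYM)} exactly as in your backup argument (via the pointwise invariance \eqref{eq:dX(LYM)} of $L_\YM$), and then groups the rest of the computation by \emph{derivative type}: first all $\dl/\dl A$-contributions of $S_\gf+S_\gh$ (formula \eqref{eq:dX(S)}), whose non-divergence $\ka^2$-leftover proves the no-go statement \eqref{eq:no-symmetry} as a byproduct, and then the ghost-derivative contributions \eqref{eq:dXu(S)}, with three of the four $\ka^2$-terms cancelling by the Jacobi identity at the very end. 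You instead group by \emph{action summand}, computing $\sD_a S_\YM$, $\sD_a S_\gf$ and $\sD_a S_\gh$ separately; this makes the attribution asserted in the proposition (first two terms of $J^\mu$ from $S_\gf$, last term from $S_\gh$) immediate, at the price of not obtaining \eqref{eq:no-symmetry} along the way (which is not part of the statement). Your primary argument for \eqref{eq:D(SYM)} is also genuinely different from the paper's and is sound: $\sD_a S_\YM=(D^\mu D^\nu F_{\nu\mu})_a=\tfrac12([D^\mu,D^\nu]F_{\nu\mu})_a=\tfrac{\ka}{2}[F_{\nu\mu},F^{\mu\nu}]_a$, and the last expression vanishes because $\sum_{\mu,\nu}F_{\nu\mu,b}F^{\mu\nu}_{\ c}$ is symmetric in $(b,c)$ while $f_{abc}$ is antisymmetric; the curvature identity $[D^\mu,D^\nu]V=\ka[V,F^{\mu\nu}]$ indeed follows from the Jacobi identity on $g^{-1}(1)^\circ$. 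Finally, your ghost computation closes as claimed: with $\dl S_\gh/\dl u=D^\mu\del_\mu\tilde u$, $\dl S_\gh/\dl\tilde u=-\del_\mu D^\mu u$ and $\dl S_\gh/\dl A^\mu_b=\ka[\del_\mu\tilde u,u]_b$, the $O(\ka)$ terms assemble to $\ka\,\del_\mu[\tilde u,\del^\mu u]_a$ (using the symmetry $[u,\square\tilde u]=[\square\tilde u,u]$ of the odd-odd bracket), and the $O(\ka^2)$ cancellation $[[\del_\mu\tilde u,u],A^\mu]-[u,[\del_\mu\tilde u,A^\mu]]=[\del_\mu\tilde u,[u,A^\mu]]$ is exactly the graded Jacobi identity for the odd-odd-even triple, giving $\sD_a S_\gh=\ka\,\del_\mu[\tilde u,D^\mu u]_a$; these are the same cancellations as in the paper, merely relocated inside the $S_\gh$ block.
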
}

\begin{proof}
\blue{The first identity \eqref{eq:D(SYM)} follows directly from
\be
\int dx\,\,X_a(x)\,\sD_a(x)\,S_\YM=\del_X\,S_\YM=\int dy\,\,\del_X\,L_\YM(y)\overset{\eqref{eq:dX(LYM)}}{=}0\ ,
\quad\forall X_a\in\sD(\bM,\bR)\ .
\ee}
 
\blue{To verify the second relation \eqref{eq:D(S_0+Sint)=dJ}, we first compute the contribution coming from
$D^\mu(x)\,\frac{\dl }{\dl A^\mu(x)}$ to $\sD(x)(S_\gf+S_\gh)$: Since 
\be\label{eq:dlSgf}
\frac{\dl S_\gf}{\dl A_b^\mu(x)}=\la\,\del_\mu(\del A_b)(x)\ ,\quad
\frac{\dl S_\gh}{\dl A_b^\mu(x)}
=\ka[\del_\mu\tilde u(x),u(x)]_b\ ,
\ee 
we obtain
\begin{align}
D^\mu_{ab}(x)\,\frac{\dl S_\gf}{\dl A_b^\mu(x)}\,t_a=&\,\la\,\del_\mu\del^\mu (\del A)(x)+\la\,\ka[\del_\mu(\del A)(x),A^\mu(x)]
\nonumber\\
=&\,\del_\mu^x\bigl(\la\,\square A^\mu(x)+\la\,\ka[(\del A)(x),A^\mu(x)]\bigr)\ ,\label{eq:DSgf}\\
D^\mu_{ab}(x)\,\frac{\dl S_\gh}{\dl A_b^\mu(x)}\,t_a=&\ka\,\del^\mu_x [\del_\mu\tilde u(x),u(x)]
+\ka^2\,[[\del_\mu\tilde u(x),u(x)],A^\mu(x)]\ .\label{eq:DSgf}
\end{align}
Using additionally the result \eqref{eq:D(SYM)} and that $S_0+S_\inter = S_\YM+S_\gf+S_\gh$, we get
\begin{align}\label{eq:dX(S)}
&D^\mu_{ab}(x)\,\frac{\dl (S_0+S_\inter)}{\dl A_b^\mu(x)}\,t_a\\
&=\del_\mu^x \Bigl(\la\,\square A^\mu(x)+\ka\bigl(\la [(\del A)(x),A^\mu(x)]+[\del^\mu\tilde u(x),u(x)]\bigr)\Bigr)
+\ka^2\,[[\del_\mu\tilde u(x),u(x)],A^\mu(x)]\ .\nonumber
\end{align}
The fact that the $\ka^2$-term is not the divergence of a local field polynomial proves the statement \eqref{eq:no-symmetry}.}

\blue{To compute the contributions to $\sD(x)$ coming from $\del_X^{\tilde u}$ and $\del_X^u$, we use the results for
$\frac{\dl (S_0+S_\inter)}{\dl\tilde u}=\frac{\dl S_\gh}{\dl\tilde u}$ and 
$\frac{\dl (S_0+S_\inter)}{\dl u}=\frac{\dl S_\gh}{\dl u}$ given in \eqref{eq:FE-tilde-u} and
\eqref{eq:FE-u}, respectively; this yields
 \begin{align}
 -\ka& \Bigl[u(x),\frac{\dl (S_0+S_\inter)}{\dl u(x)}\Bigr]-\ka \Bigl[\tilde u(x),\frac{\dl (S_0+S_\inter)}{\tilde u(x)}\Bigr]\label{eq:dXu(S)}\\
 &=-\ka[\square \tilde u,u]-\ka^2[[\del_\mu\tilde u,A^\mu],u]+\ka[\tilde u,\square u]+\ka^2[\tilde u,\del_\mu[u,A^\mu]]
 \nonumber\\
 &=\ka\bigl(-\del_\mu[\del^\mu \tilde u,u]+\del_\mu[\tilde u,\del^\mu u]\bigr)+\ka^2\bigl(-[[\del_\mu\tilde u,A^\mu],u]
 +\del_\mu [\tilde u,[u,A^\mu]]-[\del_\mu\tilde u,[u,A^\mu]]\bigr)\ .\nonumber
 \end{align}
 Finally, in the sum \eqref{eq:dX(S)}$+$\eqref{eq:dXu(S)}, three of the four $\ka^2$-terms cancel out by the Jacobi identity for 
 the Lie bracket and we obtain the assertion \eqref{eq:D(S_0+Sint)=dJ}-\eqref{eq:J}.}
\end{proof}

\blue{The crucial relation \eqref{eq:D(S_0+Sint)=dJ} does not only state that $\del_X$ is a symmetry of the considered model,
but it also states that $J^\mu$  is the pertinent \emph{Noether current}, see \cite[Chap.~4.2.3]{D19} 
and Sect.~\ref{sec:global-trafo}. 
Obviously, $J^\mu$ is not an observable, since it contains the unphysical FP ghost fields and depends on the gauge fixing 
parameter $\la$. Note also that $(J_a^\mu)$ is a Lorentz- and a $\LieG$-vector (see \eqref{eq:LieG-vector} for the latter) and satisfies
\be\label{eq:ghost,dim(J)}
\dl_u(J^\mu)=0\ ,\quad (J^\mu)^*=J^\mu\word{and}\dim(J^\mu)=3\ ,
\ee
where $\dl_u(J^\mu)$ and $\dim(J^\mu)$ are the \emph{ghost number} and the \emph{mass dimension} of $J^\mu$, respectively. 
These notions are defined as follows: for a field \emph{monomial} $B\in\sP$, $\dl_u(B)$ is the number of factors 
$(\del^r)u$ minus the number of factors $(\del^s)\tilde u$ (where we ignore the partial derivatives $\del^r$ and $\del^s$);
and $\dim B$ is the number of basic fields (i.e., $A^\mu,u,\tilde u$) in $B$ plus the total number of partial derivatives 
on these basic fields, e.g., 
$\dim((\del_\mu A^\mu)A^\rho\cdot\del^r\tilde u\wedge\del^s u)=4+1+|r|+|s|$, see \cite[Def.~3.1.18]{D19}.
In \eqref{eq:ghost,dim(J)} we have extended these definitions in an obvious way to the field polynomial $J^\mu$, 
using that each monomial contributing to $J^\mu$ has the same ghost number and the same mass dimension.}

\medskip

We still need to prove that $X\mapsto \del_X$ is indeed a Lie algebra representation (which is crucial for the anomaly consistency 
condition of the MWI -- see Sect.~\ref{sec:ccAMWI}), i.e., we have to prove the following:

\begin{lema}\label{lem:dX=repr} For $X,Y\in\LieGc$ it holds that
\be\label{eq:[dX,dY]}
[\del_X,\del_Y]=\ka\,\del_{[X,Y]}\ ,
\ee
where we use $[\bullet,\bullet]$ for two different operations: on the l.h.s.~it is the commutator of functional differential operators
and on the r.h.s.~it is the Lie bracket.
\end{lema}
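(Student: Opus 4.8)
The plan is to reduce the identity to a check on the three basic fields. First I would note that $\del_X$, $\del_Y$, and $\del_{[X,Y]}$ are all derivations with respect to the classical product, that each of them commutes with $\del^\mu$ (this is \eqref{eq:dXdA=ddXA} for the gauge field, and the same short computation gives $\del_X(\del^\mu u)=\del^\mu(\del_X u)$ and likewise for $\tilde u$), and that the commutator of two derivations is again a derivation and commutes with $\del^\mu$ as well. Since every element of $\sP$ is a polynomial in the basic fields $A^\mu,u,\tilde u$ and their derivatives, a derivation on $\sP$ that intertwines with $\del^\mu$ is completely fixed by its values on the undifferentiated basic fields. Hence both sides of \eqref{eq:[dX,dY]} will coincide on all of $\sF_\loc$ once I have checked them on $A^\mu(x)$, $u(x)$, and $\tilde u(x)$.

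For that check the key input is that $X,Y\in\LieGc$ are fixed test functions, so $\del_X$ kills $\del^\mu Y$, $[X,Y]$, etc., and differentiates only the basic fields in its argument. Using \eqref{eq:del_X-basic}, a direct application of $\del_Y$ followed by $\del_X$ to the gauge field gives (I suppress the argument $x$)
\[
\del_X\del_Y A^\mu=\ka\,[Y,\del^\mu X]+\ka^2\,[Y,[X,A^\mu]]\ ,
\]
so that
\[
[\del_X,\del_Y]A^\mu=\ka\bigl([Y,\del^\mu X]-[X,\del^\mu Y]\bigr)+\ka^2\bigl([Y,[X,A^\mu]]-[X,[Y,A^\mu]]\bigr)\ .
\]
The ghost fields are treated identically, the only difference being that they carry no inhomogeneous shift term, so only the order-$\ka^2$ piece survives there.

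The remaining work is purely Lie-algebraic. The order-$\ka$ bracket collapses by the Leibniz rule $\del^\mu[X,Y]=[\del^\mu X,Y]+[X,\del^\mu Y]$ to $-\ka\,\del^\mu[X,Y]$, and the order-$\ka^2$ bracket collapses by the Jacobi identity to $-\ka^2\,[[X,Y],A^\mu]$ (resp.\ with $u$, $\tilde u$ for the ghosts). Comparing with $\ka\,\del_{[X,Y]}$ evaluated on the same basic fields via \eqref{eq:del_X-basic}, namely $\ka\,\del_{[X,Y]}A^\mu=-\ka\bigl(\del^\mu[X,Y]+\ka[[X,Y],A^\mu]\bigr)$, reproduces these expressions exactly, and likewise for $u,\tilde u$. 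I do not expect a genuine obstacle here: the whole content is the interplay of the Leibniz rule at order $\ka$ and the Jacobi identity at order $\ka^2$. The only points needing mild care are the affine field-shift term in $\del_X^A$ -- which feeds precisely the order-$\ka$ cross term above -- and the fermionic sector, where, because $X$ is bosonic and $[X,\bullet]$ acts only on the $\LieG$-index, the left/right conventions for the fermionic functional derivatives are immaterial and the derivation property carries over unchanged.
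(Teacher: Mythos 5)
Your proposal is correct and in substance coincides with the paper's own proof: the paper computes the commutators $[\del^A_X,\del^A_Y]$, $[\del^u_X,\del^u_Y]$, $[\del^{\tilde u}_X,\del^{\tilde u}_Y]$ directly as functional differential operators, which is exactly your evaluation on the basic fields $A^\mu$, $u$, $\tilde u$ (the operator coefficients are nothing but $[\del_X,\del_Y]\vf(x)$), and uses the same two inputs — the Leibniz rule $\del^\mu[X,Y]=[\del^\mu X,Y]+[X,\del^\mu Y]$ at order $\ka$ and the Jacobi identity at order $\ka^2$. Your explicit justification of the reduction step (derivations commuting with $\del^\mu$ are fixed by their values on undifferentiated basic fields) is a slightly more pedantic framing of what the paper leaves implicit, but it is the same argument.
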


\begin{proof}
Obviously it holds that 
\be\label{eq:dXA+dXu+dX-u}
[\del_X,\del_Y]=[\del^A_X,\del^A_Y]+[\del_X^u,\del_Y^u]+[\del_X^{\tilde u},\del_Y^{\tilde u}]\ .
\ee
To compute the commutators on the r.h.s., we insert the definitions \eqref{eq:del_X}, this yields:
\begin{align}\label{eq:[dX,dY]=d[X,Y]}
[\del_X^A,\del_Y^A]=&\,\ka\int dx\,\,\Bigl([Y(x),(\del^\mu X(x)+
\ka [X(x),A^\mu(x)])]-(X\leftrightarrow Y)\Bigr)\,\frac{\dl}{\dl A^\mu (x)}\nonumber\\
=\,\ka\int dx&\,\,\Bigl(\del^\mu [Y(x),X(x)]+\ka\bigl([Y(x),[X(x),A^\mu(x)]]+[X(x),[A^\mu(x),Y(x)]]\bigr)\Bigr)\,\frac{\dl}{\dl A^\mu (x)}
\nonumber\\
=&\,\ka\,\del_{[X,Y]}^A\nonumber\\
[\del_X^u,\del_Y^u]=&\,\ka^2\int dx\,\,\Bigl([Y(x),[X(x),u(x)]]-(X\leftrightarrow Y)\Bigr)\,\frac{\dl}{\dl u(x)}\nonumber\\
=&\,\ka^2\int dx\,\,[[X(x),Y(x)],u(x)]\,\frac{\dl}{\dl u(x)}=\ka\,\del_{[X,Y]}^u
\end{align}
and similarly $[\del_X^{\tilde u},\del_Y^{\tilde u}]=\ka\,\del_{[X,Y]}^{\tilde u}$,
by using the Jacobi identity for the Lie bracket in each computation. Inserting these results into 
\eqref{eq:dXA+dXu+dX-u}, we obtain the assertion.
\end{proof}
The validity of \eqref{eq:[dX,dY]} is the reason for the global minus sign in the definition of $\del_X$ \eqref{eq:del_X}.
 


\section{Anomalous Master Ward Identity}\label{sec:AMWI}

\paragraph{Interacting quantum fields as formal power series in the interaction.} In view of perturbation theory,
we split the total action $S=S_\YM+S_\gf+S_\gh$ into a free and an interacting part, the former consists of the terms being
quadratic in the basic fields:
\begin{align}\label{eq:S=S0+Sint}
&S=S_0+S_\inter\word{with} 
\nonumber\\
&S_0:=\int dx\,\Bigl(-\tfrac{1}4\,\tr\bigl( (\del^\mu A^\nu(x)-\del^\nu A^\mu(x))\cdot (\del_\mu A_\nu(x)-\del_\nu A_\mu(x))\bigr)
\nonumber\\
&\qquad\qquad\quad\,\,-\tfrac{\la}2\,\tr\bigl(\del A)(x)\cdot (\del A)(x)\bigr)\Bigr)\ ,\\
&S_\inter\equiv S_\inter(g) :=\int dx\,\,\Bigl(\ka g(x)\,L_1(x)+(\ka g(x))^2\,L_2(x)\Bigr)\ ,\word{where}\nonumber\\
&L_1(x):=\tfrac{1}2\,\tr\bigl((\del^\mu A^\nu(x)-\del^\nu A^\mu(x))\cdot [A_\mu(x),A_\nu(x)]\bigr)+
\tr\bigl(\del_\mu \tilde u(x)\cdot [u(x),A^\mu(x)]\bigr)\ ,\nonumber\\
&L_2(x):=-\tfrac{1}4\,\tr\bigl([A^\mu(x),A^\nu(x)]\cdot [A_\mu(x),A_\nu(x)]\bigr)\ .\nonumber
\end{align}

We will study the anomalous MWI (AMWI) in terms of \emph{interacting fields}. The interacting field $G_F$ to 
the interaction $F\in\sF_\loc$ and belonging to $G\in\sF_\loc$, more precisely $G_F\vert_{F=0}=G$, is a
\emph{formal power series} in $F$, given in terms of the \emph{retarded product} $R\equiv (R_{n,1})$:%
\footnote{To simplify the notations, we write $\sF$ also for the space of formal power series with coefficients in $\sF$, 
and similarly for $\sF_\loc$.}
\be\label{eq:int-field}
G_F=R\bigl(e_\ox^{F /\hbar},G\bigr)\equiv\sum_{n=0}^\infty\frac{1}{n!\,\hbar^n}\,R_{n,1}(F^{\ox n},G)\in\sF
\ee
The retarded product is defined by the following axioms (for details see \cite{DF04} or \cite[Chap.~3.1]{D19}),
which we split into the basic axioms and renormalization conditions. The former are:
\footnote{Retarded and time ordered products are connected by Bogoliubov's formula, 
\be\label{eq:Bogoliubov}
R\bigl(e_\ox^{F/\hbar}\,,\,G\bigr)=\ovl T\bigl(e_\ox^{-i\,F/\hbar}\bigr)\star\  T\bigl(e_\ox^{i\,F/\hbar}\ox G\bigr)\ ,
\ee
where $\ovl T\bigl(e_\ox^{-i\,F/\hbar}\bigr)$ is the inverse w.r.t.~$\star$-product of the $S$-matrix
$\nS(F):=T\bigl(e_\ox^{i\,F/\hbar}\bigr)$. The basic axioms and renormalization
conditions, given here for the $R$-product, are equivalent to an off-shell version of Epstein-Glaser's axiomatic definition \cite{EG73} 
of the time ordered product, see \cite[Chap.~3.3]{D19}.}
\begin{itemize}
\item Linearity: $R_{n,1}: \sF_\loc^{n+1}\to\sF$ is linear.\\
\blue{The inductive construction of the sequence $(R_{n,1})_{n\in\bN}$ proceeds in terms of the $\sF$-valued distributions
$R_{n,1}\bigl(P_1(x_1)\ox\cdots,P(x)\bigr)\in\sD'(\bM^{n+1},\sF)$, $P_1,\dots P\in\sP$, which are related to 
$R_{n,1}: \sF_\loc^{n+1}\to\sF$ by
\be
\int dx_1\cdots dx\,\,R_{n,1}\bigl(P_1(x_1)\ox\cdots,P(x)\bigr)\,g(x_1)\cdots g(x)=R_{n,1}\bigl(P_1(g_1)\ox\cdots,P(g)\bigr)\ ,
\ee
where $P(g):=\int dx\,\,P(x)\,g(x)$, $g\in\sD(\bM)$. 
The requirement that $R_{n,1}$ depends only on (local) functionals, implies the Action Ward Identity (AWI):
\be\label{eq:AWI}
\del_{x_j}R(\cdots \ox P_j(x_j)\ox\cdots)=R(\cdots \ox(\del P_j)(x_j)\ox\cdots)\ ,\quad\forall P_j\in\sP\ ,
\ee
and similarly for the last entry $P(x)$; the AWI plays the role of an additional renormalization condition.}

\item Symmetry: $R_{n,1}$ is symmetrical in the first $n$ arguments,

\item Initial condition: $R_{0,1}(F)=F\ $,

\item Causality: $R(e_\ox^{(F+H) /\hbar},G)=R(e_\ox^{F /\hbar},G)\ $ if $\ (\supp G+\ovl{V}_-)\cap\supp H=\emptyset$,

\item GLZ-relation:
$$
\tfrac{1}{i}
\bigl[R\bigl(e_\ox^{G/\hbar}, F\bigr),
R\bigl(e_\ox^{G/\hbar}, H\bigr)\bigr]_\star
 = R\bigl(e_\ox^{G/\hbar} \ox F, H\bigr)-
R\bigl(e_\ox^{G/\hbar} \ox H, F\bigr) 
$$
(where on the l.h.s.~there is the commutator w.r.t.~the star product);
\end{itemize}
and the renormalization conditions read:
\begin{itemize}
\item Field independence: 
$\fd{}{x}R(e_\ox^{F /\hbar},G)=\frac{1}{\hbar}\,R(e_\ox^{F /\hbar}\ox\fd{F}{x},G)+R(e_\ox^{F /\hbar},\fd{G}{x})$
for $\vf=A^\mu_a,u_a,\tilde u_a$,

\item $*$- structure: $R(e_\ox^{F /\hbar},G)^*=R(e_\ox^{F^* /\hbar},G^*)$ ,

\item Poincar\'e covariance: $\bt_{\La,a}\circ R_{n,1}=R_{n,1}\circ\bt_{\La,a}^{\,\,\ox(n+1)}$ for the natural linear action
$\sP_+^\uparrow\ni(\La,a)\mapsto\bt_{\La,a}$ of $\sP_+^\uparrow$ on $\sF$,

\item $\LieG$-covariance: for $s_a:\sF\to\sF$ being the $\LieG$-rotation defined in \eqref{eq:LieG-rot}-\eqref{eq:LieG-rot-1},
it is required that
\be\label{eq:[sa,R]=0}
s_a\circ R_{n,1}=R_{n,1}\circ\sum_{k=1}^{n+1}(\id\oxyox s_a\oxyox\id)\quad\forall a\ ,
\ee
where (on the r.h.s.)~$s_a$ is the $k$th factor. In Prop.~\ref{pr:[Sa,R]=0} it is proved that this renormalization condition can be 
fulfilled by a symmetrization of $R_{n,1}$, which maintains the validity of all other renormalization conditions.

\item Preservation of ghost number: 
\blue{By $\dl_u$ we do not only denote the map giving the ghost number of a field monomial (as defined after \eqref{eq:ghost,dim(J)}),}
we also define $\dl_u$ as an operator $\dl_u:\sF\to\sF$ by $\dl_u F:=\int dx\,\Bigl(u_a(x)\wedge\frac{\dl F}{\dl u_a(x)}
-\tilde u_a(x)\wedge\frac{\dl F}{\dl\tilde u_a(x)}\Bigr)$. With this, it is required that
$$
\dl_u\Bigl(R_{n,1}\bigl(B_1(x_1)\ox\dots,B_{n+1}(x_{n+1})\bigr)\Bigr)=\Bigl(\sum_{j=1}^{n+1}\dl_u(B_j)\Bigr)\cdot 
R_{n,1}\bigl(B_1(x_1)\ox\dots,B_{n+1}(x_{n+1})\bigr)
$$
for all monomials $B_j\in\sP$.

\item Off-shell field equation: 
\be\label{eq:FE}
\square_\la \vf_{F}(x)= \square_\la\vf(x) + \bigl( \fd{F}{x} \bigr)_{F}\ ,
\ee 
where $\square_\la:=g^{\mu\nu}\square-(1-\la)\del^\mu\del^\nu$ if $\vf=A_\nu$
and $\square_\la:=\square$ if $\vf=u$ or $\vf=\tilde u$.

\item Almost homogeneous scaling: For all field {\it monomials} $B_1,\dots,B_{n+1}\in \sP$, the vacuum expectation values
$$
\sD'(\bR^{4n})\ni r(B_1,\dots;B_{n+1})(x_1 - x_{n+1},\dots):= \om_0\Bigl( R_{n,1}\bigl(
B_1(x_1) \ox\dots ;B_{n+1}(x_{n+1}) \bigr) \Bigr)
$$
scale almost homogeneously, i.e., homogeneously up to logarithmic terms (see e.g.~\cite[Def.~3.1.17]{D19}), 
with degree $\sum_{j=1}^n\dim B_j$, \blue{where the mass dimension of a field monomial is defined after \eqref{eq:ghost,dim(J)}.} 

\item $\hbar$-dependence: renormalization is done in each order of $\hbar$ individually.
\end{itemize}

\blue{By a direct inductive construction of the sequence $(R_{n,1})_{n\in\bN}$ (which is a version of the famous Epstein--Glaser 
construction \cite{EG73}), it has been proved that solutions of all these axioms exist (including the AWI), 
see \cite{DF04} or \cite[Chap.~3.2]{D19}. In this construction renormalization appears as the mathematical problem of
extending the distributions $r(x_1-x,\ldots):=\om_0\bigl(R_{n,1}(P_1(x_1)\ox\cdots,P(x))\bigr)$ from $\sD'(\bR^{4n}\setminus\{0\})$
to $\sD'(\bR^{4n})$ such that the renormalization conditions are maintained.
In addition, this construction yields \emph{all} solutions of the axioms. By an improved version of Stora's Main Theorem of perturbative
 renormalization (see \cite[Thm.~4.2]{DF04} or \cite[Thm.~3.6.3]{D19}) the set of solutions 
 can be understood as the orbit of the Stückelberg‒Petermann renormalization group \cite[Def.~3.6.1]{D19}
 when acting on a particular solution (any solution may be chosen as starting point).} 
 
\blue{For $F,G\sim\hbar^0$, the interacting field $G_F$ \eqref{eq:int-field} is a formal power series in 
 $\hbar$ -- this can be understood by studying the (Feyman) diagrams contributing to $R_{n,1}\bigl((F/\hbar)^{\ox n},G\bigr)$ \cite{DF01}:
 each interaction vertex $F/\hbar$ contains a factor $\hbar^{-1}$, but each propagator (i.e., each inner line) is accompanied by a
 factor $\hbar$ (cf.~\eqref{eq:star-product}); in addition, in contrast to the time ordered product,
 solely \emph{connected diagrams} contribute (due to the causal support of 
 $R_{n,1}$, see the axiom Causality) and  renormalization is done in each order of $\hbar$ individually.}

\paragraph{Anomalous Master Ward Identity (AMWI) for the extended, local gauge transformation $\del_X$.} 
We now apply the crucial theorem about the AMWI 
(see \cite[Sect.~5.2]{Brennecke08}, \cite[Thm.~5.2]{BrenneckeD09} or \cite[Chap.~4.3]{D19})
to the action $S=S_0+F$ (with $F\in\sF_\loc$ arbitrary) and the extended, infinitesimal, local gauge 
transformation $\del_X$. This yields the following results:
\footnote{Since the cited references use different notations for the anomaly map, we give the following identifications:
\begin{align}
q(x)Q(x)\equiv\vec q(x)\cdot\vec Q(x)=&\,-\bigl(D^\mu(x)X(x),\ka[X(x),u(x)],\ka[X(x),\tilde u(x)]\bigr)\\
\blue{-\int dx\,\,q(x)\,\Dl\bigl(e_\ox^F;Q(x)\bigr)\equiv}   -\Dl\bigl(e_\ox^F;qQ\bigr)=&\,\Dl X(F)\ ,
\end{align}
where $F\in\sF_\loc$ arbitrary and $\vec q\equiv (q_j)$ with $q_j\in\sD(\bM,\bR)$,  $\vec Q\equiv (Q_j)$ with 
$Q_j(x)\in\{\sum_aB_a(x)\,t_a\,\vert\,B_a\in\sP\}$. \blue{The minus sign in front of $\Dl X$ in the definition \eqref{eq:AMWI} 
of $\Dl X$ is chosen in accordance with references \cite{BDFR21,BDFR23}.}
Since $\Dl\bigl(e_\ox^F;qQ\bigr)$ depends linearly on $qQ$ we obtain
\begin{align}
\Dl X(F)=&\,\Dl\bigl(e_\ox^F;(\del^\mu X,0,0)\bigr)+\ka\Dl\bigl(e_\ox^F;([X,A^\mu],[X,u],[X,\tilde u])\bigr)\nonumber\\
=&\,\ka\int dx\,\,X_a(x)\,\Dl\bigl(e_\ox^F;([t_a,A^\mu(x)],[t_a,u(x)],[t_a,\tilde u(x)])\bigr)\ ,\label{eq:Dl=intdx}
\end{align}
where we use that the field shift does not contribute to the anomaly, i.e., 
$\Dl\bigl(e_\ox^F;(\del^\mu X,0,0)\bigr)$ for all $X\in\LieGc$, \blue{since generally it holds that
$\Dl\bigl(e_\ox^F;q\,1\bigr)=0$ (i.e., $Q=1$) due to the validity of the off-shell field equation for
the $R$-product (see \cite[Lemma 8]{Brennecke08} or the last part of \cite[Thm.~4.3.1(b)]{D19}).} 
The last expression in \eqref{eq:Dl=intdx} may be identified with the l.h.s.~of \eqref{eq:Dla(x)}.\label{fn:identify}} 
\begin{thm}\label{thm:AMWI}
\begin{itemize}
\item[(a)] Existence and uniqueness of the anomaly map. \blue{Given a retarded product $R$ satisfying the above listed basic 
axioms and renormalization conditions}, there exists a \emph{unique} sequence of \emph{linear} maps
$\Dl\equiv (\Dl^n)_{n\in\bN}$,
\be\label{eq:anom-terms}
\Dl^n :\sF_\loc^{\ox n}\ox\LieGc\longrightarrow \sF_\loc
\ee
which are invariant under permutations of the first $n$ factors and fulfill the AMWI:
\begin{align}\label{eq:AMWI}
R\Bigl(e_\ox^{F /\hbar},&\,\bigl(\del_X(S_0+F)-\Dl X(F)\bigr)\Bigr)=\int dx\,\,
R\Bigl(e_\ox^{F /\hbar},\,\del_X A_a^\mu(x)\Bigr)\cdot\frac{\dl S_0}{\dl A_a^\mu(x)}\nonumber\\
&+R\Bigl(e_\ox^{F /\hbar},\,\del_X u_a(x)\Bigr)\cdot\frac{\dl S_0}{\dl u_a(x)}
+R\Bigl(e_\ox^{F /\hbar},\,\del_X \tilde u_a(x)\Bigr)\cdot\frac{\dl S_0}{\dl \tilde u_a(x)}\ ,
\end{align}
where $\Dl X(F)$ is the formal power series
\be
\Dl X(F):=\sum_{n=1}^\infty \frac{1}{n!}\, \Dl^n(F^{\ox n}; X)\ .
\ee
Obviously, the ``anomaly map'' $\Dl X:\sF_\loc\to\sF_\loc$%
\footnote{Sometimes we use the name ``anomaly map(s)'' also for the maps $\Dl^n$ \eqref{eq:anom-terms} and also for
$\Dl:\sF_\loc\ox\LieGc\to\sF_\loc\,;\,(F;X)\mapsto\Dl X(F)$.}
 depends on the renormalization prescription for the retarded product $R$. 
For its derivative we write:
\be\label{eq:DlX'}
\langle(\Dl X)'(F),G\rangle:=\frac{d}{d\la}\Big\vert_{\la =0}\Dl X(F+\la G)=
\sum_{n=0}^\infty \frac{1}{n!}\, \Dl^{n+1}(F^{\ox n}\ox G; X)
\ee
for all $F,G\in\sF_\loc$.

\item[(b)] Properties of the anomaly map. \blue{The so-defined maps $\Dl^n$ have the following propwerties:}
\begin{itemize}

\item[(b1)] (Locality and Translation covariance)
There exist \emph{linear} maps $P^n_{a,r}\:\sP^{\ox n}\to \sP$  
which are \emph{uniquely} determined by
\begin{align}\label{eq:Pnar}
\Dl^n\bigl(\ox_{j=1}^n & B_j(g_j); X\bigr) =\int dx\,\,X_a(x)\int dy_1\cdots dy_n\,\,g_1(y_1)\cdots g_n(y_n)\nonumber\\
&\cdot\sum_{r\in (\bN_0^4)^n}\del^r\dl(y_1-x,\ldots,y_n-x)\,P^n_{a,r}(\ox_{j=1}^n B_j)(x)\ ,
\end{align}
where $B(g):=\int dy\,B(y)\,g(y)\in\sF_\loc$ (with $B,g$ as in \eqref{eq:local-fields}), $X(x):=X_a(x)\,t_a$ and
the sum over ``$r$'' is finite. In addition, in the particular case that $B_k=c\in\bR$ for some $1\leq k\leq n$, it holds that
$P^n_{a,r}(\ox_{j=1}^n B_j)=0$ for all $a,r$ and $n$.

\item[(b2)] ($\hbar$-dependence) $\Dl X(F)=\sO(\hbar)$ if $F\sim\hbar^0$.

\item[(b3)] (Field independence) 
\be\label{eq:FI-anomaly}
\frac{\dl\Dl X(F)}{\dl \vf(y)}=\Big\langle(\Dl X)'(F),\frac{\dl F}{\dl \vf(y)}\Big\rangle\word{for}\vf=A^\mu_a,u_a,\tilde u_a\ .
\ee

\item[(b4)] (Homogeneous scaling) If in \eqref{eq:Pnar} all $B_j$ are homogeneous in the mass dimension, then
$P^n_{a,r}(\ox_{j=1}^n B_j)$ is also homogeneous in the mass dimension and the multi-derivative $r$ is of order
\be\label{eq:Dl-homogen}
|r|=\sum_{j=1}^n\dim (B_j)-4(n-1)-\dim \bigl(P^n_{a,r}(\ox_{j=1}^n B_j)\bigr)\ .
\ee

\item[(b5)] (Lorentz covariance) In terms of $\Dl_a(x)$ (defined below in \eqref{eq:Dla(x)}) this reads:
\be
\bt_\La\bigl(\Dl_a(x)(F)\bigr)=\Dl_a(\La x)\bigl(\bt_\La(F)\bigr)\word{for all} \La\in\sL^\uparrow_+,\,F\in\sF_\loc\ .
\ee

\item[(b6)] (Consequence of Off-shell field equation) As a consequence of the validity of the Off-shell field equation for the $R$-product,
it holds that
\footnote{For $B\in\sP$, $\langle(\Dl X)'(F),B(y)\rangle$ is defined in the obvious way (see \eqref{eq:DlX'}-\eqref{eq:Pnar}): 
\be\label{eq:DlX'-B}
\int dy\,\,g(y)\,\langle(\Dl X)'(F),B(y)\rangle:=\langle(\Dl X)'(F),B(g)\rangle\word{for all $g$ as in \eqref{eq:local-fields}.}
\ee}
\be\label{eq:Dl(vfox...)} 
\langle(\Dl X)'(F),\del^r\vf(x)\rangle=0\word{for} r\in\bN_0^4,\,\,\vf=A^\mu,\,u,\,\tilde u\ .
\ee
\end{itemize}
\end{itemize}
\end{thm}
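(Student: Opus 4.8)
The plan is to \emph{define} the anomaly map as the quantum defect of the Master Ward Identity, to prove by a causal-factorisation argument that this defect is the interacting field of a \emph{local} field, and then to read the properties (b1)--(b6) directly off the renormalisation conditions imposed on $R$. Throughout I fix $X\in\LieGc$ and treat everything as a formal power series in $F\in\sF_\loc$.

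First I would set
\be\label{eq:defect}
\sA_X(F):=R\bigl(e_\ox^{F/\hbar},\del_X(S_0+F)\bigr)-\int dx\,\sum_\vf R\bigl(e_\ox^{F/\hbar},\del_X\vf(x)\bigr)\cdot\frac{\dl S_0}{\dl\vf(x)}\ ,
\ee
where the sum runs over $\vf=A^\mu_a,u_a,\tilde u_a$; note that $\del_X(S_0+F)\in\sF_\loc$ since $X$ has compact support. The AMWI \eqref{eq:AMWI} is then precisely the statement that $\sA_X(F)=R\bigl(e_\ox^{F/\hbar},\Dl X(F)\bigr)$ for some local $\Dl X(F)$. \emph{Uniqueness} follows at once: because $R\bigl(e_\ox^{F/\hbar},G\bigr)=G+\sO(F)$, the interacting-field map $G\mapsto R\bigl(e_\ox^{F/\hbar},G\bigr)$ is invertible as a formal power series (equivalently one inverts it via $\ovl T\bigl(e_\ox^{-iF/\hbar}\bigr)$ in Bogoliubov's formula \eqref{eq:Bogoliubov}), so $\Dl X(F)$ is fixed by $\sA_X(F)$; expanding in powers of $F$ determines the maps $\Dl^n$ recursively, and their invariance under permutations of the first $n$ arguments is inherited from the Symmetry axiom for $R_{n,1}$. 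The reduction \eqref{eq:Dl=intdx}, in which the pure field shift $\del^\mu X$ drops out because the Off-shell field equation forces $\Dl(e_\ox^F;q\,1)=0$, is then a formal consequence of linearity.

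The heart of the argument is \emph{existence together with locality} (b1). Working order by order in $F$ --- equivalently, along the inductive Epstein--Glaser construction of $(R_{n,1})$ --- I would study the $\sF$-valued distributional kernel of the order-$n$ coefficient of \eqref{eq:defect} in the interaction points $y_1,\dots,y_n$ and the insertion point $x$ of $\del_X\vf$. The key claim is that this kernel is supported on the total diagonal $y_1=\dots=y_n=x$. To prove it I would argue away from the diagonal: whenever the points split into two causally separated clusters, the GLZ-relation together with the Causality axiom and the (anti-)time-ordered factorisation of $R$ and $\ovl T$ express the coefficient through strictly lower-order retarded products, for which \eqref{eq:AMWI} already holds by the inductive hypothesis, while the \emph{classical} identity \eqref{eq:MWI-class}--\eqref{eq:prod-class-int-fields} makes the $\hbar^0$ part vanish identically; hence the obstruction can survive only where all arguments collapse. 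A distribution supported at a single point is a finite sum of derivatives of $\dl$, which yields exactly the representation \eqref{eq:Pnar}, defining $\Dl^n(\ox_j B_j;X)\in\sF_\loc$ and producing at once translation covariance and the vanishing when some $B_k$ is a constant (a $c$-number argument carries no functional derivative and cannot feed the obstruction). This causal-factorisation step, which requires tracking the supports of the several retarded and inverse time-ordered factors simultaneously, is the main obstacle.

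It remains to read off the analytic properties from the renormalisation conditions. Property (b2) follows from the exactness of the classical MWI \eqref{eq:MWI-class}: its $\hbar^0$ part vanishes, while every propagator contraction in \eqref{eq:star-product} carries a factor $\hbar$ and renormalisation is done in each order of $\hbar$ separately, so $\sA_X(F)=\sO(\hbar)$. For (b3) I would apply $\tfrac{\dl}{\dl\vf(y)}$ to \eqref{eq:defect} and invoke the Field independence condition for $R$, which reproduces \eqref{eq:FI-anomaly}. Property (b4) is inherited from Almost homogeneous scaling: the prescribed scaling degree of the vacuum expectation values $\om_0\bigl(R_{n,1}(\cdots)\bigr)$ fixes, by dimensional analysis, the order $|r|$ of the $\dl$-derivatives in \eqref{eq:Pnar} to the value \eqref{eq:Dl-homogen}. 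Property (b5) is a direct transcription of the Poincar\'e covariance of $R$. Finally (b6) expresses that the anomaly is insensitive to linear (free-field) perturbations of the interaction: since each basic field enters $S_0$ only quadratically, a linear insertion $\del^r\vf$ merely shifts the free action and is governed by the Off-shell field equation \eqref{eq:FE}, which holds without anomaly, whence $\langle(\Dl X)'(F),\del^r\vf(x)\rangle=0$, i.e.~\eqref{eq:Dl(vfox...)}.
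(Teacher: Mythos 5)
Your sketches for (a) and (b1)--(b5) retrace, in outline, the proofs that the paper itself does not give but cites from \cite[Thm.~7]{Brennecke08} and \cite[Thm.~4.3.1]{D19}: the anomaly is defined as the defect of the MWI, uniqueness follows because $G\mapsto R\bigl(e_\ox^{F/\hbar},G\bigr)=G+\sO(F)$ is invertible as a formal power series, locality is obtained by a causal-factorization argument along the inductive Epstein--Glaser construction, and (b2)--(b5) are read off the corresponding renormalization conditions for $R$. As outlines these are sound, and you rightly flag the cluster-splitting/support argument as the hard step -- a step you do not carry out, but neither does the paper, which defers it to the cited references.

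The genuine gap is your treatment of (b6), which is precisely the one item of the theorem the paper \emph{does} prove itself (in the Appendix), stating explicitly that a proof of (b6) ``is not yet published anywhere'' (only a weaker version appears in \cite[Exer.~4.3.3]{D19}). Your argument -- a linear insertion ``merely shifts the free action'' and is ``governed by the Off-shell field equation, which holds without anomaly, whence'' the claim -- is not a deduction. First, it conflates two different statements: the vanishing of the anomaly when the \emph{transformation} is a pure field shift (the case $Q=1$ of footnote \ref{fn:identify}), which indeed follows directly from the off-shell field equation, versus the vanishing of the derivative of the anomaly in a direction $\del^r\vf(g)$ of the \emph{interaction} slot, which is the content of \eqref{eq:Dl(vfox...)}. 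Second, the split into $S_0$ plus interaction is fixed in the Epstein--Glaser framework; absorbing a linear term into $S_0$ would change the two-point functions and hence the entire construction of $R$, so comparing the two constructions is itself a nontrivial statement (of perturbative-agreement type), not a consequence of \eqref{eq:FE}. The paper's actual proof is an induction on $n$: it inserts the identity $R_{n,1}\bigl(\vf(x)\ox F^{\ox (n-1)},G\bigr)=-\hbar\int dz\,\Dl^\reta(z-x)\,\fd{}{z}R_{n-1,1}\bigl(F^{\ox (n-1)},G\bigr)$ (a consequence of the off-shell field equation \emph{together with} Field independence, \cite[Lem.~1(B)]{DF99}) into the inductive formula \eqref{eq:Dl(vf;X)} defining $\Dl^n\bigl(\vf(x)\ox F^{\ox(n-1)};X\bigr)$, and then cancels all terms using the AMWI at order $n-1$ and $(\square+m^2)\Dl^\reta=-\dl$; the extension from $\vf$ to $\del^r\vf$ additionally requires the Action Ward Identity for $\Dl^n$ (Remark \ref{rem:AWI-Dl}), which your proposal never addresses. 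Without this (or an equivalent) computation, your ``whence'' in (b6) is an assertion, not a proof.
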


\blue{The last statement in \emph{(b1)} can equivalently be written as $\Dl^n(F^{\ox (n-1)}\ox c;X)=0$ for $c\in\bC$. 
The latter follows from locality of $\Dl X$, see e.g.~\cite[Thm.~7,~property~(ii)]{Brennecke08} 
and take into account that $\fd{c}{x}=0$.
A proof of \emph{(b6)} is not yet published anywhere (only a weaker version is verified in \cite[Exer~4.3.3]{D19}), 
we close this gap in the Appendix. We explain there also, why the Action Ward Identity (AWI) \eqref{eq:Dl-AWI}
holds for the anomaly maps $\Dl^n(\bullet ;X)$.} 

\medskip
As we see from \eqref{eq:Pnar}, we may define a map $\Dl_a(x):\sF_\loc\to\sD'(\bM,\sF_\loc)$ by
\be\label{eq:Dla(x)}
\sum_a\int dx\,\,X_a(x)\,\Dl_a(x)(F):=\Dl X(F)\ , \quad\forall F\in\sF_\loc\ ,\quad X(x)=X_a(x)\,t_a\in\LieGc\ ;
\ee
obviously, its derivative,
\be\label{eq:Dl'}
\langle \Dl_a(x)'(F),\,G\rangle:=\frac{d}{d\la}\Big\vert_{\la =0}\Dl_a(x)(F+\la G)\ , \quad\forall F,G\in\sF_\loc\ ,
\ee
satisfies the analogous relation $\sum_a\int dx\,\,X_a(x)\,   \langle \Dl_a(x)'(F),\,G\rangle=\langle(\Dl X)'(F),G\rangle$.

Taking into account also the relations \eqref{eq:del_X-D} and \eqref{eq:dXF}, we may write the AMWI 
\eqref{eq:AMWI} as
\begin {align}\label{eq:AMWI-x}
R\Bigl(e_\ox^{F /\hbar},&\,\Bigl(\sD_a(x)\,(S_0+F)-\Dl_a(x)(F)\Bigr)\Bigr)\\
=&\,R\Bigl(e_\ox^{F /\hbar},\, D^\mu_{ab}(x)\Bigr)\cdot\Bigl(\square A_{\mu,b}(x)-(1-\la)\del_\mu(\del A_b)(x)\Bigr)\nonumber\\
&+\ka\,f_{abc}\, R\Bigl(e_\ox^{F /\hbar},\,u_c(x)\Bigr)\cdot\square\tilde u_b(x)
-\ka\, f_{abc}\,R\Bigl(e_\ox^{F /\hbar},\,\tilde u_c(x)\Bigr)\cdot\square u_b(x)\nonumber
\end{align}
for $x\in g^{-1}(1)^\circ$, with
\be
R\Bigl(e_\ox^{F /\hbar},\,D^\mu_{ab}(x)\Bigr):=\dl_{ab}\del_x^\mu+
\ka\,f_{abc}\,R\Bigl(e_\ox^{F /\hbar},\,A^\mu_c(x)\Bigr)\ .
\ee

\medskip

\paragraph{Restriction to the Yang-Mills interaction with FP ghosts.}
Of particular interest is the case $F=S_\inter$. \blue{As proved in Prop.~\ref{prop:dXS}, $\del_X$ is a classical symmetry of this model 
and the pertinent Noether current is the non-Abelian gauge current $J^\mu$ \eqref{eq:J}; therefore,
the AMWI \eqref{eq:AMWI-x} for $F=S_\inter$ can be interpreted as conservation of the corresponding interacting, quantum gauge 
current (up to anomalies) -- this is essentially the content of the following Corollary.}

\begin{corl}\label{prop:AMWI}
 For $x\in g^{-1}(1)^\circ$ the AMWI for the extended local gauge transformation $\del_X$ \eqref{eq:del_X} and the 
 Yang-Mills-interaction with ghosts, $S_\inter$, can be written as
\begin{align}\label{eq:AMWI-explicit}
&\del_\mu^x R\Bigl(e_\ox^{S_\inter /\hbar},\, J_a^\mu(x)\Bigr)=R\Bigl(e_\ox^{S_\inter /\hbar},\,\Dl_a(x)(S_\inter)\Bigr)\\
&\qquad+R\Bigl(e_\ox^{S_\inter /\hbar},\, D^\mu_{ab}(x)\Bigr)\cdot\Bigl(\square A_{\mu,b}(x)
-(1-\la)\del_\mu(\del A_b)(x)\Bigr)\nonumber\\
&\qquad +\ka\,f_{abc}\, R\Bigl(e_\ox^{S_\inter /\hbar},\,u_c(x)\Bigr)\cdot\square\tilde u_b(x)
-\ka\, f_{abc}\,R\Bigl(e_\ox^{S_\inter /\hbar},\,\tilde u_c(x)\Bigr)\cdot\square u_b(x)\ .\nonumber
\end{align}
Working on-shell (i.e., restricting this identity to configurations solving the free field equations), the last three terms on the 
r.h.s.~vanish; hence, this identity states that the interacting gauge current is conserved up to anomalies (given by the 
first term on the r.h.s.):
\be
\del_\mu^x\,J^\mu_{a,S_\inter,0}(x)=\bigl(\Delta_a(x)(S_\inter)\bigr)_{S_\inter,0}
 \word{for $x\in g^{-1}(1)^\circ$,}
 \ee
 where \blue{the lower index `$0$' denotes restriction of the functionals to $\sC_0$, see} \eqref{eq:on-shell}.
\end{corl}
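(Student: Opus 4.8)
The plan is to specialize the general form \eqref{eq:AMWI-x} of the AMWI to the Yang--Mills interaction $F=S_\inter$ and to feed in the purely classical identity already established in Proposition~\ref{prop:dXS}. First I would note that the hypothesis $x\in g^{-1}(1)^\circ$ is exactly the situation \eqref{eq:g=1} in which Proposition~\ref{prop:dXS} holds, so that $\sD_a(x)\,(S_0+S_\inter)=\del_\mu J^\mu_a(x)$ with $J^\mu$ as in \eqref{eq:J}. Substituting this into the last-slot argument of the retarded product on the left-hand side of \eqref{eq:AMWI-x} (taken at $F=S_\inter$) turns that side into $R\bigl(e_\ox^{S_\inter /\hbar},\,\del_\mu J^\mu_a(x)-\Dl_a(x)(S_\inter)\bigr)$.

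Next I would use the Linearity axiom of $R$ in its last entry to split this as $R\bigl(e_\ox^{S_\inter /\hbar},\del_\mu J^\mu_a(x)\bigr)-R\bigl(e_\ox^{S_\inter /\hbar},\Dl_a(x)(S_\inter)\bigr)$, and then invoke the Action Ward Identity \eqref{eq:AWI}, in the version valid for the last argument, to pull the spacetime derivative out of the retarded product: $R\bigl(e_\ox^{S_\inter /\hbar},\del_\mu J^\mu_a(x)\bigr)=\del_\mu^x R\bigl(e_\ox^{S_\inter /\hbar},J^\mu_a(x)\bigr)$. Equating the resulting left-hand side with the unchanged right-hand side of \eqref{eq:AMWI-x} and transposing the anomaly term to the right reproduces \eqref{eq:AMWI-explicit} verbatim, which settles the first assertion.

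For the on-shell statement I would restrict the whole of \eqref{eq:AMWI-explicit} to the solution space $\sC_0$ of the free field equations, cf.~\eqref{eq:on-shell}. Each of the last three terms on the right-hand side carries, as a pointwise classical factor multiplying a retarded product, precisely one free field operator: the combination $\square A_{\mu,b}-(1-\la)\del_\mu(\del A_b)$ for the gauge field and $\square\tilde u_b$, $\square u_b$ for the ghosts. These are, up to sign, the functional derivatives $\dl S_0/\dl A^\mu_b$, $\dl S_0/\dl u_b$, $\dl S_0/\dl\tilde u_b$, hence they vanish identically on $\sC_0$ by the very definition of $\sC_0$. Consequently the last three terms drop out on-shell and one is left with $\del_\mu^x\,J^\mu_{a,S_\inter,0}(x)=\bigl(\Dl_a(x)(S_\inter)\bigr)_{S_\inter,0}$, the desired claim.

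The hard part will not be conceptual but a matter of careful bookkeeping at two spots: (i) confirming that the Action Ward Identity applies to the last entry of $R$ (not only to the interaction vertices), so that $\del_\mu^x$ may be extracted cleanly; and (ii) verifying that restriction to $\sC_0$ commutes with $\del_\mu^x$ and with the formation of the interacting field, i.e.\ that $\bigl(\del_\mu^x R(e_\ox^{S_\inter /\hbar},J^\mu_a)\bigr)\big\vert_{\sC_0}=\del_\mu^x\bigl(R(e_\ox^{S_\inter /\hbar},J^\mu_a)\big\vert_{\sC_0}\bigr)$. Both are guaranteed by the induced on-shell product \eqref{eq:on-shell-star-product} together with the AWI, so no input beyond Proposition~\ref{prop:dXS} and the $R$-product axioms is required; the genuine content of the corollary resides entirely in the classical Noether identity \eqref{eq:D(S_0+Sint)=dJ}.
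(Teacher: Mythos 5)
Your proposal is correct and follows exactly the paper's route: specialize the AMWI in the form \eqref{eq:AMWI-x} to $F=S_\inter$, insert the Noether identity $\sD_a(x)(S_0+S_\inter)=\del_\mu J^\mu_a(x)$ from Prop.~\ref{prop:dXS}, extract the derivative via linearity and the AWI, and observe that the remaining terms are proportional to the free field equations and hence vanish on $\sC_0$. The paper's own proof is just a one-sentence compression of these same steps, so your elaboration (including the two bookkeeping points about the AWI in the last slot and the compatibility of on-shell restriction with $\del_\mu^x$) adds nothing beyond what the paper implicitly uses.
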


\begin{proof}
The claim \eqref{eq:AMWI-explicit} follows immediately from Prop.~\ref{prop:dXS} and the version \eqref{eq:AMWI-x}
of the AMWI; to wit, by inserting the crucial relation \eqref{eq:D(S_0+Sint)=dJ} into the latter.
\end{proof}

A contribution to $\Dl_a(x)(S_\inter)$ being of the form 
\be\label{eq:triv-anom}
\del^x_\mu \tilde P^\mu_a(x)\word{with} \tilde P^\mu_{a}=\sum_{l=1}^\infty\ka^l \,\tilde P_{a,l}^\mu,\quad \tilde P^\mu_{a,l}\in\sP
\word{is a ``trivial'' anomaly,}
\ee
if  $\tilde P^\mu_{a}$ is Lorentz- and $\LieG$-covariant and fulfills $\dim \tilde P^\mu_a=3$,  $\dl_u(\tilde P_a^\mu)=0$
and $(\tilde P_a^\mu)^*=\tilde P_a^\mu$;
since it can be removed by adding terms $\sim\ka^l$ with $l\geq 1$ to $J^\mu_a$, i.e.,
\be\label{eq:J-renormal}
J^\mu_a\to \tilde J^\mu_a:=J^\mu_a-\tilde P^\mu_a\ .
\ee
\blue{Most of} these assumptions about $\tilde P^\mu_{a}$ are no restriction. 
To wit, below (in \eqref{eq:form-Dl(Sint)} and subsequently) we explain 
that $P_a(x):=\Dl_a(x)(S_\inter)\in\ka\sP\pw{\ka}$ fulfills $\dim(P_a)=4$, $\dl_u(P_a)=0$,
$(P_a)^*=P_a$ and is Lorentz-invariant and a $\LieG$-vector (see \eqref{eq:LieG-vector}). \blue{That is, $\del^x_\mu \tilde P^\mu_a$
has these properties and we may choose $\tilde P^\mu_a$ such that $\dim \tilde P^\mu_a=3$,  $\dl_u(\tilde P_a^\mu)=0$
and $(\tilde P_a^\mu)^*=\tilde P_a^\mu$. (For example, the latter property can be achieved by replacing $\tilde P_a^\mu$
by $\tfrac12(\tilde P_a^\mu+(\tilde P_a^\mu)^*$, using that $\del_\mu(\tilde P_a^\mu)^*=\del_\mu\tilde P_a^\mu$.)
In additon, working always manifestly Lorentz- and $\LieG$-covariant, we may expect 
that the arising $\tilde P_a^\mu$ are also Lorentz- and $\LieG$-covariant.}

\medskip

The finite renormalization of $J^\mu_a$ \eqref{eq:J-renormal} can also be interpreted as a \emph{finite renormalization of the $R$-product:}
$R\mapsto \wh R$. To wit, let $Z$ be an element of the Stückelberg-Petermann renormalization group $\sR$ (see
\cite[Def.~3.6.1]{D19}) satisfying
\footnote{\blue{A proof that there exists a $Z$ fufilling all the defining properties of $\sR$ and \eqref{eq:Z} is not worked out here -- 
we do not see any serious obstacle. For example, Field Independece of such a $Z$ is satisfied by constructing $Z$ in terms of the
(causal) Wick expansion.}}
\be\label{eq:Z}
Z(S_\inter)=S_\inter\word{and} \langle Z'(S_\inter),J_a^\mu(x)\rangle = J^\mu_a(x)-\tilde P^\mu_a\ .
\ee
The second part of the Main Theorem of perturbative renormalization \cite[Thm.~3.6.3, Exer.~3.6.11]{D19} 
states that (for any $Z\in\sR$) the linear map $\wh R:\ovl{\bC\oplus\bigoplus_{n=1}^\infty\sF_\loc^{\ox n}}\ox\sF_\loc\to\sF$, 
which is symmetrical in all arguments except the last one, and is defined by 
\be
\wh R\bigl(e_\ox^{F/\hbar},G\bigr):=R\bigl(e_\ox^{Z(F)/\hbar},\langle Z'(F),G\rangle\bigr)\ ,
\ee 
satisfies also the axioms for an $R$-product. For our particular $Z\in\sR$ \eqref{eq:Z} we obtain
\be\label{eq:hat(R)}
\wh R\bigl(e_\ox^{S_\inter/\hbar},J^\mu_a(x)\bigr):=R\bigl(e_\ox^{Z(S_\inter)/\hbar},\langle Z'(S_\inter),J_a^\mu(x)\rangle\bigr)
=R\bigl(e_\ox^{S_\inter/\hbar},(J_a^\mu-\tilde P_a^\mu)(x)\bigr).
\ee


\begin{remk}
As shown in  \cite[Prop.~13]{Brennecke08}, one can reach that 
\be
\langle\Dl_a(x)'(0),S_\inter\rangle=0
\ee 
by a finite admissible renormalization of $R_{1,1}$ \blue{(remember \eqref{eq:int-field})}. With that, we have $\Dl_a(x)(S_\inter)=\sO(S_\inter^{\,\,\,2})=\sO(\ka^2)$, 
that is, the ``renormalization'' of $J$ \eqref{eq:J-renormal} does not affect the $\ka^1$-term of $J$.
\end{remk}


\begin{remk}\label{rem:class-dJ=0}
Since the MWI holds in classical field theory, the perturbative, classical, retarded field
\footnote{\blue{Compared to \eqref{eq:MWI-class}-\eqref{eq:prod-class-int-fields} we somewhat simplify the notation:
we write $B_{S_\inter}^\class(x)$ instead of $\bigl(B(x)\bigr)_{S_\inter}^\class(x)$ (for $B\in\sP$). 
In \cite{D19} the  perturbative, classical, retarded fields are denoted by $B_{S_\inter}^{\mathrm{ret}}(x)$; 
here we think that the notation $B_{S_\inter}^\class(x)$ is clearer.}}
\be\label{eq:class-int-field}
J^{\mu\,\ret}_{S_\inter,0}(x)\equiv J^{\mu\,\ret}_{a,S_\inter,0}(x)\,t_a :=
R\Bigl(e_\ox^{S_\inter /\hbar},\, J^\mu(x)\Bigr)_0\Big\vert_{\hbar=0}
\ee
(where the lower index `$0$' denotes restriction to $\sC_0$, see \eqref{eq:on-shell}) 
fulfills the on-shell version of \eqref{eq:AMWI-explicit} without anomaly, i.e., $\del_\mu^x J^{\mu\,\ret}_{S_\inter,0}(x)=0$
\blue{for $x\in g^{-1}(1)^\circ$}. It is instructive to verify this explicitly. To do this, we use the field equations 
\eqref{eq:FE-A}-\eqref{eq:FE-u} and the classical factorization 
\be\label{eq:factorization}
(B\cdot C)^\ret_{S_\inter,0}(x)=B^\ret_{S_\inter,0}(x)\cdot C^\ret_{S_\inter,0}(x)
\word{for} B,C\in\sP\ .
\ee
In detail, writing $A^\mu$ for $A^{\mu\,\ret}_{S_\inter,0}(x)$ and analogously for $u$ and $\tilde u$, we obtain%
\footnote{Note that by e.g. $[A_\mu,\square A^\mu](x)$ we mean
\begin{align*}
[A_{\mu,S_\inter,0}^\ret (x),\square_x A^{\mu,\ret}_{S_\inter,0}(x)]=&\,
t_a f_{abc}\,A_{b,\mu,S_\inter,0}^\ret (x)\cdot\square_x A^{\mu,\ret}_{c,S_\inter,0}(x)
\overset{\eqref{eq:factorization}}{=}t_a f_{abc}(A_{b,\mu}\,\square A_c^\mu)^\ret_{S_\inter,0}(x)\\
=&\,[A_\mu,\square A^\mu]^\ret_{S_\inter,0}(x)\ .
\end{align*}}
\begin{align}\label{eq:1}
\la\,\del_\mu\square\, A^\mu\overset{\eqref{eq:FE-A}}{=}&\ka\bigl(-[A_\nu,\square A^\nu]+[A_\nu,\del^\nu(\del A)]
-\del_\mu[\del^\mu\tilde u,u]\bigr)\nonumber\\
&-\ka^2\bigl([\del_\mu A_\nu,[A^\nu,A^\mu]]+[A_\nu,\del_\mu[A^\nu,A^\mu]]\bigr)\ .
\end{align}
Next we insert the field equation \eqref{eq:FE-A} into the first term on the r.h.s.:
\begin{align}\label{eq:2}
\ka\,[A_\mu,\square A^\mu]\overset{\eqref{eq:FE-A}}{=}&\,\ka(1-\la)[A_\mu,\del^\mu(\del A)]\nonumber\\
&+\ka^2\bigl([A_\mu,\del_\nu [A^\nu,A^\mu]]-[A_\mu,[A_\nu,(\del^\mu A^\nu-\del^\nu A^\mu)]]
-[A_\mu,[\del^\mu\tilde u,u]]\bigr)\nonumber\\
&+\ka^3[A_\mu,[A_\nu,[A^\nu,A^\mu]]]\ .
\end{align}
The $\ka^3$-term vanishes by the antisymmetry and Jacobi identity of the Lie bracket, the last term in \eqref{eq:1} 
and the first $\ka^2$-term in \eqref{eq:2} cancel out, the 
second last term of \eqref{eq:1} and the second $\ka^2$-term  of \eqref{eq:2} cancel also out by the 
antisymmetry and Jacobi identity of the Lie bracket. With this result for $\la\,\del_\mu\square\, A^\mu$, we obtain
\begin{align}\label{eq:dJ}
\del_\mu J^\mu=&\,\ka\Bigl( (\la-1)[A_\mu,\del^\mu(\del A)]+[A_\nu,\del^\nu(\del A)]-\del_\mu[\del^\mu\tilde u,u]
+\la\,\del_\mu[(\del A),A^\mu]+\del_\mu[\tilde u,\del^\mu u]\Bigr)\nonumber\\
&\,+\ka^2\Bigl([A_\mu,[\del^\mu\tilde u,u]]+\del_\mu[\tilde u,[u,A^\mu]]\Bigr)\ .
\end{align}
Obviously, the first, second and fourth $\ka$-term cancel out. By using the field equations \eqref{eq:FE-tilde-u} and \eqref{eq:FE-u}, the sum of the remaining $\ka$-terms gives
\begin{align}
\ka\bigl(-\del_\mu[\del^\mu\tilde u,u]+\del_\mu[\tilde u,\del^\mu u]\bigr)=&\,\ka\bigl(
-[\square \tilde u,u]+[\tilde u,\square u]\nonumber\\
=&\,\ka^2\bigl([[\del_\mu\tilde u,A^\mu],u]-[\tilde u,\del_\mu[u,A^\mu]]\bigr)\ .
\end{align}
Inserting this result into \eqref{eq:dJ} we end up with
\be
\del_\mu J^\mu=\ka^2\bigl([A_\mu,[\del^\mu\tilde u,u]]+[\del_\mu\tilde u,[u,A^\mu]]
+[[\del_\mu\tilde u,A^\mu],u]\bigr)=0
\ee
by means of the Jacobi identity for the Lie bracket.

Trying to transfer this derivation to pQFT, this is possible for the first step \eqref{eq:1}, but for all further steps
this does not work, since the classical factorization \eqref{eq:factorization} does not hold in pQFT.
Thus, the MWI \eqref{eq:AMWI-explicit} is a non-trivial Ward identity; even if the $R$-product 
is renormalized such that the off-shell field equations for $ A_{S_\inter}^\mu, \, u_{S_\inter},\,\tilde u_{S_\inter}$
hold, it may be violated by an anomaly term.
\end{remk}

\paragraph{The most general form of the anomaly $\Delta_a(x)(S_\inter)$.} 
Applying \eqref{eq:Pnar} to $\Delta_a(x)(S_\inter)$ and setting
$g_1(x):=g(x)$ (multiplying $L_1$) and $g_2(x):=(g(x))^2$ (multiplying $L_2$) we obtain
\begin{align}\label{eq:int-Dln}
\Delta_a(x)(S_\inter)=&\sum_{n=1}^\infty\frac{1}{n!}\sum_{j_1,\ldots,j_n=1,2}\ka^{j_1+\ldots+j_n}
\int dy_1\cdots dy_n\,\,g_{j_1}(y_1)\cdots g_{j_n}(y_n)\nonumber\\
&\cdot \Dl^n_a\bigl(\ox_{k=1}^n L_{j_k}(y_k);x\bigr)\word{where}\nonumber\\
\Dl^n_a\bigl(\ox_{k=1}^n L_{j_k}(y_k);x\bigr)=&\sum_r\del^r\dl(y_1-x,\ldots,y_n-x)\,P^n_{a,r}(L_{j_1}\oxyox L_{j_n})(x)\ ,
\end{align}
with $P^n_{a,r}(L_{j_1}\oxyox L_{j_n})\in\sP$; due to \eqref{eq:Dl-homogen} and $\dim(L_1)=4=\dim(L_2)$ it holds that
\be
\dim \bigl(P^n_{a,r}(L_{j_1}\oxyox L_{j_n})\bigr)=4-|r|\ .
\ee
Since $x\in g^{-1}(1)^\circ$ only $r=(0,\ldots,0)$ contributes
\footnote{\blue{Note that
\begin{align*}
\int & dy_1\cdots \,\,g_{j_1}(y_1)\cdots \del_{y_1}^\mu\dl(y_1-x,\ldots)\,P(x)\\
&=-\int dy_1\cdots \,\,\del^\mu g_{j_1}(y_1)\cdots \dl(y_1-x,\ldots)\,P(x)
=\del^\mu g_{j_1}(x)\cdots \,P(x)=0\ ,
\end{align*}
since $x\in g^{-1}(1)^\circ$.}}
 and it remains
\begin{itemize}
\item[(i)]\be\label{eq:form-Dl(Sint)}
\Delta_a(x)(S_\inter)=P_a(x)\word{with $P_a\in\ka\,\sP\pw{\ka}$ and $\dim P_a=4$.}
\ee 
\end{itemize}

\noindent 
In addition, we know that $P_a$ satisfies the following properties:
\begin{itemize}
\item[(ii)] Since $\Dl_a(x)(F)$ satisfies Field independence (Thm.~\ref{thm:AMWI}(b3)) it also satisfies the
causal Wick expansion w.r.t.~$F$, see \cite[Chap.~3.1.4]{D19}. For $\Delta_a(\bullet)(S_\inter)=P_a(\bullet)$ this implies that
$P_a$ is a polynomial in the basic fields appearing in $L_1$ and $L_2$ (i.e., $A^\mu,\,u$ and $\tilde u$)
and partial derivatives of these basic fields.
\item [(iii)] $P_a$ is Lorentz invariant, this follows immediately from Lorentz covariance of $\Dl X$ (Thm.~\ref{thm:AMWI}(b5)).
\item [(iv)] $P(x):=P_b(x)t_b$ is a $\LieG$-vector, see \eqref{eq:LieG-vector}. 
\item[(v)] $P_a$ is real: $P_a^*=P_a$.
\item[(vi)] The ghost number is $\dl_u(P_a)=0$.
\end{itemize}

The last four properties are obtained as follows: 
One proves a corresponding statement for $\Dl^n\bigl(\ox_{k=1}^n L_{j_k}(y_k);x\bigr):=\Dl^n_b(\cdots)t_b$ \eqref{eq:int-Dln}
by induction on $n$, using the inductive construction of the sequence of maps $(\Dl^n)$
in the proof of Thm.~\ref{thm:AMWI} (given in \cite[formula (5.15)]{Brennecke08} or \cite[formula (4.3.11)]{D19}), explicitly
\begin{align}\label{eq:recursion-Dln}
\Dl^n\bigl(&\ox_{k=1}^n L_{j_k}(y_k);x\bigr)
= \Bigl(R\Bigl(\bigox_{k=1}^n L_{j_k}(y_k),\, D^\mu_{bc}(x)t_b\Bigr)\cdot
\frac{\dl S_0}{\dl A^\mu_c(x)}+\ldots\Bigr)\nonumber\\
&-R\Bigl(\bigox_{k=1}^n L_{j_k}(y_k),\, \Bigl(D^\mu_{bc}(x)t_b
\cdot\frac{\dl S_0}{\dl A^\mu_c(x)}+\ldots\Bigr)\Bigr)\nonumber\\
&-\sum_{l=1}^n R\Bigl(\bigox_{k=1\,(k\neq l)}^n
L_{j_k}(y_k),\, \Bigl(D^\mu_{bc}(x)t_b\cdot\frac{\dl L_{j_l}(y_l)}{\dl A^\mu_c(x)}+\ldots\Bigr)\Bigr)\nonumber\\
&-\sum_{I\subset\{1,\dots,n\}\,,\,I^\comp \neq \emptyset}
R\Bigl(\bigox_{r\in I^\comp} L_{j_r}(y_r),\,
\Dl_b^{|I|}\bigl(\ox_{k\in I} L_{j_k}(y_k); x\bigr)t_b\Bigr)\ ,
\end{align}
where the dots stand for the two terms which are obtained by replacing $D^\mu_{bc}(x)t_b\cdot\frac{\dl }{\dl A^\mu_c(x)}$
by $\ka f_{bcd}u_d(x)t_b\cdot\frac{\dl}{\dl u_c(x)}$ and $\ka f_{bcd}\tilde u_d(x)t_b\cdot\frac{\dl}{\dl \tilde u_c(x)}$, respectively.
With that we see that the claims follow from the validity of the corresponding renormalization conditions for the $R$-product
and the corresponding properties of $L_1$ and $L_2$. In particular, to verify (iv) one proves that
\be
s_a\,\Dl^n\bigl(\ox_{k=1}^n L_{j_k}(y_k);x\bigr)=[t_a\,,\,\Dl^n\bigl(\ox_{k=1}^n L_{j_k}(y_k);x\bigr)]
\ee
by induction on $n$; besides \eqref{eq:recursion-Dln} the validity of \eqref{eq:[sa,R]=0}, $s_a\,L_j=0$ (for $j=1,2$) \blue{and 
the $\LieG$-covariance  of the further appearing expressions (i.e.,
$\frac{\dl S_0}{\dl A^\mu_c(x)}$, $\frac{\dl L_{j_l}(y_l)}{\dl A^\mu_c(x)}$, $\Dl_b^{|I|}(\cdots)$ are $\LieG$-vectors and 
$D^\mu_{bc}(x)$ is a $\LieG$-tensor of 2nd rank, see \eqref{eq:LieG-vector}-\eqref{eq:LieG-tensor})
are used for that. (Note also that $R(\ox_jF_j,\, F_bt_b)=R(\ox_jF_j,\, F_b)\,t_b$ for $\bC$-valued fields $F_j$ and $F_b$.)}

\medskip

As explained in \eqref{eq:triv-anom}-\eqref{eq:J-renormal}, contributions to $P_a$ 
being of the form $\del_\mu \tilde P^\mu_a$ can be neglected.
We give some examples for possible, nontrivial contributions to $P_a$, that is, field monomials satisfying (i)-(vi):
\begin{itemize}
\item[(a)] terms bilinear in the basic fields: $g^1_{abc}\, (\del^\mu A_b^\nu-\del^\nu A_b^\mu) 
(\del_\mu A_{c,\nu}-\del_\nu A_{c,\mu})$, 
$\,g^2_{abc}\, \del^\mu\tilde u_b \del_\mu u_c$,
\item[(b)] terms trilinear in the basic fields: $g^3_{abcd}\,A_{b,\mu} (\del^\mu A_c^\nu-\del^\nu A_c^\mu) A_{d,\nu}$,
$\,g^4_{abcd}\,A_{b,\mu} \del^\mu\tilde u_c  u_d$,
\item[(c)] terms quadrilinear in the basic fields: $g^5_{abcde}\, A^{\mu}_b A^{\nu}_c A_{d,\mu}A_{e,\nu}$,
$\,g^6_{abcde}\, A^{\mu}_b A_{c,\mu} \tilde u_d u_e$,
\end{itemize}
with $\LieG$-covariant tensors $g^k_{\cdots}\in\bR$ of rank $3,4$ or $5$, respectively.

 
\section{Consistency condition for the anomaly of the MWI}\label{sec:ccAMWI}

\paragraph{Anomaly consistency condition for the extended, local gauge transformation $\del_X$.}
Due to the knowledge about the anomaly  $\Delta_a(x)(S_\inter)=P_a(x)$ obtained so far, there still remain a lot of 
field monomials which may contribute to $P_a$. The main message of this paper is that, ignoring ``trivial'' contributions (i.e.,
terms of the form $\del_\mu\tilde P^\mu_a$), \emph{all possible contributions to $P_a$ are excluded by the anomaly
consistency condition} derived in \cite{BDFR23}.

Since $\del_X A^\mu(x)$,  $\del_X u(x)$ and $\del_X \tilde u(x)$
are at most linear in the basic fields (i.e., $\del_X$ is an affine field transformation)
the derivation of the anomaly consistency condition given in \cite[Sect.~4]{BDFR23} applies to the AMWI \eqref{eq:AMWI}
(the only modification is that the factor $\ka$ appearing in the relation \eqref{eq:[dX,dY]} yields a factor $\ka$
on the l.h.s.~of \eqref{eq:WZ}); hence the anomaly map $\Dl$ satisfies 
\begin{equation}
\begin{split}\label{eq:WZ}
    \ka\,\Delta([X,Y])(F)&=\big\langle(\Delta Y)'(F),\Delta X(F) \big\rangle-\big\langle(\Delta X)'(F),\Delta Y(F) \big\rangle\\ 
    &\quad+\partial_X(\Delta Y(F))-\partial_Y(\Delta X(F))\\
    &\quad-\big\langle(\Delta Y)'(F),\partial_X(S_0+F)\big\rangle +\big\langle(\Delta X)'(F),\partial_Y(S_0+F)\big\rangle\ .
\end{split}
\end{equation}
Proceeding analogously to the step from \eqref{eq:AMWI} to \eqref{eq:AMWI-x} and using 
the definition \eqref{eq:DlX'-B} for $\Dl_a(x)'(F)$ in place of $(\Dl X)'(F)$,
the anomaly consistency condition \eqref{eq:WZ} can be written as
\begin{align}\label{eq:WZ-xy}
   \ka\,\dl(x-y)\,f_{abc}\, &\Delta_c(x)(F)=\big\langle\Delta_b(y)'(F),\Delta_a(x)(F) \big\rangle-
   \big\langle\Delta_a(x)'(F),\Delta_b(y)(F) \big\rangle\nonumber\\ 
    &+\sD_a(x)\bigl(\Delta_b(y)(F)\bigr)-\sD_b(y)\bigl(\Delta_a(x)(F)\bigr)\\
    &-\big\langle\Delta_b(y)'(F),\sD_a(x)(S_0+F)\big\rangle 
    +\big\langle\Delta_a(x)'(F),\sD_b(y)(S_0+F)\big\rangle\nonumber
\end{align}
for $x,y\in g^{-1}(1)^\circ$, which is manifestly antisymmetric  under $(x,a)\leftrightarrow (y,b)$.  From 
the locality of the anomaly maps $\Dl^n$ \eqref{eq:Pnar}, 
we know that each term on the r.h.s.~has support on the diagonal $x=y$ --
in accordance with the factor $\dl(x-y)$ on the l.h.s..

\medskip

\paragraph{Restriction to the Yang-Mills interaction with FP ghosts and reduced anomaly consistency condition.}
In the following, we are going to investigate the case $F=S_\inter$ more in detail. 
In addition, we study the consistency condition \eqref{eq:WZ-xy} only for $x,y\in g^{-1}(1)^\circ$. 
That is, we think of \eqref{eq:WZ-xy} as being integrated out with test functions $X_a(x)$ and 
$Y_b(y)$ (no sum over $a,b$) with $(\supp X_a\cup\supp Y_b)\subset g^{-1}(1)^\circ$.

Our aim is to remove the anomaly $\Dl_a(x)(S_\inter)$ (for all $a$ and all $x\in g^{-1}(1)^\circ$) by finite renormalizations of 
the $R$-product and by proceeding by induction on the power of $\hbar$.
Proceeding this way, the first two terms on the r.h.s.~of  the consistency condition \eqref{eq:WZ-xy} do not 
contribute. To wit, the inductive assumption states that
\be
\Dl_a(x)(S_\inter)=\mathcal{O}(\hbar^k) \quad\text{for all $a$ and all $x\in g^{-1}(1)^\circ$.}
\ee
Since generally it holds that $\Dl_b(y)=\mathcal{O}(\hbar)$ (Thm.~\ref{thm:AMWI}(b2)), we see that
\be
\big\langle\Dl_b(y)'(S_\inter),\Dl_a(x)(S_\inter)\big\rangle=\mathcal{O}(\hbar^{k+1}).
\ee

Turning to the terms in  the last line of \eqref{eq:WZ-xy}, we take into account the \blue{crucial result 
$\sD_a(x)\,(S_0+S_\inter)=\del_\mu J_a^\mu(x)$ (Prop.~\ref{prop:dXS})} 
and that generally it holds that 
$$
\big\langle(\Dl X)'(G),\del_\mu B(x)\big\rangle=\del_\mu^x\big\langle(\Dl X)'(G), B(x)\big\rangle
\word{for all $B\in\sP$, $X\in\LieGc$ and $G\in\sF_\loc$}
$$
\blue{due to the AWI for $\Dl^n$, see Remark \ref{rem:AWI-Dl}.}
In a second step we use \blue{the explicit formula for $J^\mu$ \eqref{eq:J} and property \emph{(b6)} of 
$\Dl X$ (given in Thm.~\ref{thm:AMWI}).}
With these results, these terms can be written as
\begin{align}\label{eq:Cc-3}
-\del_\mu^x&\big\langle\Delta_b(y)'(S_\inter),\,J^\mu_a(x)\big\rangle+
\del_\mu^y\big\langle\Delta_a(x)'(S_\inter),\,J^\mu_b(y)\big\rangle\\
&=-\Bigl(\ka\,\del_\mu^x\Big\langle\Delta_b(y)'(S_\inter),\,
\bigl(\la [(\del A), A^\mu]_a(x)+[\tilde u,D^\mu u]_a(x)\bigr)\Big\rangle\Bigr)
+\Bigl((x,a)\leftrightarrow (y,b)\Bigr)\nonumber
\end{align}
for $x,y\in g^{-1}(1)^\circ$. 

Motivated by \eqref{eq:triv-anom}\blue{-\eqref{eq:J-renormal}, we 
investigate what type of terms emerge when we insert a trivial anomaly term into the ``main'' terms of the 
anomaly consistency condition:}
\begin{itemize}
\item Substituting $\del_\mu \tilde P_c^\mu(x)$ (where $\tilde P_c^\mu$ is Lorentz- and $\LieGc$-covariant and
 $\dim \tilde P^\mu_c=3$, $\dl_u(\tilde P^\mu_c)=0$, $(\tilde P^\mu_c)^*=\tilde P^\mu_c$) for 
$\Dl_c(x)(S_\inter)$ on the l.h.s.~of \eqref{eq:WZ-xy}, we obtain
\be\label{eq:P-left}
(\del_\mu^x+\del_\mu^y)\bigl(\dl(x-y)\,f_{abc}\,\tilde P_c^\mu(x)\bigr)\ ;
\ee

\item substituting $\del_\nu \tilde P_b^\nu(y)$ (with the same properties of $\tilde P^\nu_b$ as before) 
for $\Dl_b(y)(S_\inter)$ in the second line on the r.h.s.~of \eqref{eq:WZ-xy}, and using
$\frac{\dl(\del_\nu \tilde P^\nu)(y)}{\dl\vf(x)}=\del_\nu^y\frac{\dl\,\tilde P^\nu(y)}{\dl\vf(x)(x)}$ for $\vf=A_a^\mu,u_a,\tilde u_a$,
we obtain
\be\label{eq:P-right}
\sD_a(x)\bigl(\del_\nu \tilde P^\nu_b(y)\bigr)=\del_\nu^y\,\sD_a(x)\bigl(\tilde P^\nu_b(y)\bigr)=
\del_\nu^y\Bigl(\sum_{|r|\leq 3}(\del^r\dl)(x-y)\,Q^\nu_{ab,r}(y)\Bigr)
\ee
for some $Q^\nu_{ab,r}\in\sP$ being Lorentz-covariant 
having mass dimension $\dim Q^\nu_{ab,r}=3-|r|$
and satisfying $\dl_u(Q^\nu_{ab,r})=0$, $(Q^\nu_{ab,r})^*=Q^\nu_{ab,r}$, \blue{in addition, 
$Q^\nu_{ab,r}$ is such that the whole expression standing on the r.h.s.~of\eqref{eq:P-right} is $\LieG$-covariant 
(see \eqref{eq:LieG-tensor}).}
\end{itemize}

\blue{We conclude that terms in the anomaly consistency condition being of the form given on the r.h.s.~of \eqref{eq:P-right}
or the analogous form obtained by $(x,a)\leftrightarrow (y,b)$
are not of interest for our purpose, since they can be changed by addding a trivial anomaly term to $\Dl_b(y)(S_\inter)$.
(Note that \eqref{eq:P-left} belongs also to this class of terms.)
As we will see, for working out the restrictions coming from the anomaly consistency condition, 
the following observation is very helpful: also the terms \eqref{eq:Cc-3} belong to this class of 
``uninteresting'' terms; this is due to
the mentioned properties of the anomaly maps $\Dl^n$, in particular, locality \eqref{eq:Pnar} and 
homogeneous scaling \eqref{eq:Dl-homogen}.}

Due to these results, we analyze the anomaly consistency condition \eqref{eq:WZ-xy} for $F=S_\inter$
only \emph{modulo terms being of the form of the r.h.s.~of \eqref{eq:P-right} or the analogous form obtained by 
$(x,a)\leftrightarrow (y,b)$.}%
\footnote{The neglected terms are $\sim\del X_a$ or $\sim\del Y_b$ after having integrated out 
the consistency condition with $X_a(x)$ and $Y_b(y)$.}
We write ``$\simeq$'' for an equality modulo such terms;
obviously, $\simeq$ is an equivalence relation. 
\blue{(Note that by computing explicitly the l.h.s.~of \eqref{eq:P-right}, we obtain field 
polynomials $Q^\nu_{ab,r}$ which are $\LieG$-covariant by itself (since $(\tilde P^\nu_b)$ is a $\LieG$-vector);
however, for the definition of the relation $\simeq$ we only require
that the whole expression standing on the r.h.s.~of \eqref{eq:P-right} is $\LieG$-covariant.)
A crucial} advantage of this procedure is that
the terms \eqref{eq:Cc-3} are neglected, that is, in the remaining, ``reduced'' consistency condition, 
\begin{align}
\ka\,\dl(x-y)\,f_{abc}&\, \Delta_c(x)(S_\inter)\simeq \sD_a(x)\bigl(\Delta_b(y)(S_\inter)\bigr)-
 \sD_b(y)\bigl(\Delta_a(x)(S_\inter)\bigr)\label{eq:Cc-reduced}\\
 \simeq &\,\ka\, f_{acd}\Bigl(A^\mu_d(x)\,\frac{\dl\Delta_b(y)(S_\inter)}{\dl A^\mu_c(x)}+
 u_d(x)\,\frac{\dl\Delta_b(y)(S_\inter)}{\dl u_c(x)}+\tilde u_d(x)\,\frac{\dl\Delta_b(y)(S_\inter)}{\dl\tilde u_c(x)}\Bigr)\nonumber\\
&\,-\Bigl((x,a)\leftrightarrow (y,b)\Bigr)\nonumber\ ,
\end{align}
solely the anomaly $\Delta_\bullet(\bullet)(S_\inter)$ itself appears and no derivatives of it (i.e., $\Delta_\bullet(\bullet)'(S_\inter)$).
In the second $\simeq$-sign 
we have used that $\del^\mu_x\frac{\dl\Delta_b(y)(S_\inter)}{\dl A^\mu_a(x)}$ is of the form given on the r.h.s.~of
\eqref{eq:P-right} -- this follows from the results (i) \eqref{eq:form-Dl(Sint)} and (iii)-(vi) about the structure 
of $\Dl_b(y)(S_\inter)=P_b(y)$.

\begin{remk}\label{rem:WZ}
From \cite[Sect.~4]{BDFR23} \blue{(see also Sect.~3 of that reference)} we recall that for \emph{quadratic} functionals 
$F$ the ``extended Wess-Zumino consistency condition''
\eqref{eq:WZ} reduces to the original Wess-Zumino consistency condition \cite{WZ71}. More in detail, for those functionals, $\Delta X(F)$ 
is a constant functional, this follows from Thm.~\ref{thm:AMWI}(b3,b6):
\footnote{\blue{In view of the axial anomaly, note that the restriction to quadratic functionals does not exclude triangle diagrams 
without external legs; and since such diagrams correspond to constant functionals, also the pertinent
possible anomalies are constant functionals -- in agreement with \eqref{eq:anom(quadratic)}.}}
\be\label{eq:anom(quadratic)}
\fd{\Delta X(F)}{x}\overset{\eqref{eq:FI-anomaly}}{=}\Big\langle(\Dl X)'(F),\frac{\dl F}{\dl \vf(y)}\Big\rangle
\overset{\eqref{eq:Dl(vfox...)}}{=}0\ .
\ee
Then $\partial_Y\Delta X(F)$ and $\langle\Delta Y'(F),\Delta X(F)\rangle$ vanish for $X,Y\in\LieGc$
(for the latter statement see Thm.~\ref{thm:AMWI}(b1)).
That is, on the r.h.s.~of \eqref{eq:WZ}, it is solely the third line which contributes to the original Wess-Zumino consistency condition. 
But in the procedure worked out in this Sect., it is solely the second line which gives the relevant contributions; that is, 
the results derived in this Sect.~are due to the \emph{extension} of the original   Wess-Zumino condition.
\end{remk}

\paragraph{Exclusion of a non-trivial anomaly by the reduced anomaly consistency condition.}
Now, the crucial observation is the following:

\begin{lema}\label{lem:PinCc}
Since the possible anomaly term $\Delta_a(x)(S_\inter)\,t_a=P_a(x)\,t_a\equiv P(x)$ 
(with $P_a\in\sP$ and $\dim P_a=4$) \eqref{eq:form-Dl(Sint)} 
is a $\LieG$-vector (i.e., $s_a(P(x))=[t_a,P(x)]$), it satisfies the identity
\begin{align}\label{eq:P-in-cc}
\dl(x-y)\,f_{abc}\, P_c(x)&\simeq  f_{acd}\Bigl(A^\mu_d(x)\,\frac{\dl P_b(y)}{\dl A^\mu_c(x)}+
 u_d(x)\,\frac{\dl P_b(y)}{\dl u_c(x)}+\tilde u_d(x)\,\frac{\dl P_b(y)}{\dl\tilde u_c(x)}\Bigr)\nonumber\\
 &\simeq -f_{bcd}\Bigl(A^\mu_d(y)\,\frac{\dl P_a(x)}{\dl A^\mu_c(y)}+
 u_d(y)\,\frac{\dl P_a(x)}{\dl u_c(y)}+\tilde u_d(y)\,\frac{\dl P_a(x)}{\dl\tilde u_c(y)}\Bigr)\ ,
 \end{align}
where ``$\simeq$'' is defined directly before \eqref{eq:Cc-reduced}. 
 \end{lema}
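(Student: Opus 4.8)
The plan is to read the two right-hand sides of \eqref{eq:P-in-cc} as \emph{localized} $\LieG$-rotations and then to reduce the whole statement to the pointwise $\LieG$-vector property of the anomaly. Concretely, I introduce the rotation acting only at the point $x$,
\[
s_a^x F:=-f_{abc}\Bigl(A^\mu_c(x)\,\frac{\dl F}{\dl A^\mu_b(x)}+u_c(x)\,\frac{\dl F}{\dl u_b(x)}+\tilde u_c(x)\,\frac{\dl F}{\dl\tilde u_b(x)}\Bigr)\ ,
\]
which is just the integrand of the functional rotation \eqref{eq:LieG-rot-1}. By the index conventions of \eqref{eq:LieG-rot}--\eqref{eq:LieG-rot-1} (using $f_{acd}=-f_{adc}$) the first line on the r.h.s.~of \eqref{eq:P-in-cc} equals $-s_a^x\bigl(P_b(y)\bigr)$ and the second line equals $s_b^y\bigl(P_a(x)\bigr)$. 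Hence the assertion follows once I establish the single distributional identity
\[
s_a^x\bigl(B(y)\bigr)\simeq \dl(x-y)\,s_aB(y)\qquad\text{for every }B\in\sP\ ,
\]
where $s_aB\in\sP$ is the pointwise $\LieG$-rotation (the derivation on $\sP$ with $s_a(\del^r\vf_b)=-f_{abc}\,\del^r\vf_c$). Applying this with $B=P_b$ and $B=P_a$, using the $\LieG$-vector property $s_aP_b=-f_{abc}P_c$ and the elementary identity $\dl(x-y)P_c(y)=\dl(x-y)P_c(x)$, turns both lines into $\dl(x-y)\,f_{abc}\,P_c(x)$.

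To prove the displayed identity I would first write, for the \emph{local} field polynomial $B(y)$,
\[
\frac{\dl B(y)}{\dl\vf_b(x)}=\sum_{r}\frac{\del B}{\del(\del^r\vf_b)}(y)\,\del_y^{\,r}\dl(y-x)\ ,
\]
and then move the external factor $\vf_c(x)$ onto the $y$-variable via the distributional Leibniz rule
\[
\vf_c(x)\,\del_y^{\,r}\dl(y-x)=\sum_{s\le r}\binom{r}{s}\,(\del^s\vf_c)(y)\,\del_y^{\,r-s}\dl(y-x)\ .
\]
The contribution $s=r$ carries no derivative on the delta and, after collecting the field types, reproduces exactly $\dl(x-y)\,s_aB(y)$; the $\del^r$-terms of the pointwise rotation arise precisely from the $s=r$ slot. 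Every remaining term carries $\del_y^{\,m}\dl(y-x)$ with $|m|\ge 1$ multiplied by a field polynomial in the \emph{single} variable $y$; integrated against $X_a(x)Y_b(y)$ such a term produces a factor $\del^m X_a(y)$, hence is $\sim\del X_a$ and therefore $\simeq 0$ in the sense fixed directly before \eqref{eq:Cc-reduced}. This gives the identity.

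The only genuinely delicate point is the bookkeeping around $\simeq$: one must verify that all subleading terms are admissible. This is clean here exactly because $B$ is local, so the coefficients of the $\del^m\dl(y-x)$ ($|m|\ge1$) are honest field polynomials in $y$ alone, and the integration-by-parts characterization of $\simeq$ applies verbatim. A secondary item to handle carefully is the passage from the functional $\LieG$-vector property \eqref{eq:LieG-vector} of $P$ to its pointwise counterpart $s_aP_b=-f_{abc}P_c$ on $\sP$; this follows from the relation $s_a\bigl(B(g)\bigr)=\int dy\,g(y)\,s_aB(y)$ (obtained from \eqref{eq:LieG-rot-1} by one integration by parts) together with the arbitrariness of $g$. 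Finally, the two lines of \eqref{eq:P-in-cc} are consistent with each other through the antisymmetry $(x,a)\leftrightarrow(y,b)$ and $f_{abc}=-f_{bac}$, matching the manifest antisymmetry already recorded for \eqref{eq:WZ-xy}.
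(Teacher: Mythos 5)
Your proposal is correct, but it proves the lemma by a genuinely different route than the paper. The paper's proof is structural: it observes that the hypothesis $s_a(P(y))=[t_a,P(y)]$, with $s_a$ the integrated rotation \eqref{eq:LieG-rot-1}, says precisely that the $x$-integral of the difference $F_{ab}(x,y):=[\text{l.h.s.}]-[\text{r.h.s.}]$ of the first $\simeq$-sign in \eqref{eq:P-in-cc} vanishes; since $F_{ab}$ is supported on the diagonal, the Poincar\'e Lemma for $\sF_\loc$-valued distributions \cite[Lemma 4.5.1]{D19} then guarantees the existence of polynomials $U^\mu_{ab,r}$ with $\dim U^\mu_{ab,r}=3-|r|$ such that $F_{ab}$ is a total $x$-divergence of the form neglected by $\simeq$. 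You instead work constructively: you localize the rotation to $s_a^x$, expand $\dl P_b(y)/\dl\vf_c(x)$ of the local polynomial in derivatives of $\dl(y-x)$, and use the distributional Leibniz rule to peel off the $\dl(x-y)\,s_aB(y)$ term, exhibiting the remainder explicitly as terms carrying $\del^m\dl(y-x)$ with $|m|\ge 1$, which integrate to expressions $\sim\del X_a$ (resp.\ $\sim\del Y_b$) and are hence $\simeq 0$; only then do you invoke the (pointwise form of the) $\LieG$-vector property. What each approach buys: yours is elementary and self-contained -- the divergence terms are constructed rather than obtained from an existence statement -- while the paper's is shorter modulo the cited lemma and delegates the diagonal-support bookkeeping to it. Two points you should tighten to match the paper's definition of $\simeq$: (i) the neglected terms are required to be total derivatives of delta-expansions whose coefficients have the right mass dimension ($3-|r|$), ghost number zero, reality, Lorentz covariance, and such that the whole expression is $\LieG$-covariant; your subleading terms do satisfy all of this (the dimension count gives $4-|m|$ for the coefficient of $\del^m\dl$, and the full remainder is a difference of $\LieG$-covariant expressions), but you assert rather than verify it -- the paper's proof spends its final lines on exactly these checks for $U^\mu_{ab,r}$; (ii) your remainder terms naturally have coefficients at $y$ multiplying $x$-derivatives of $\dl$, so one additional standard diagonal-support manipulation (re-expanding $W(y)\,\del^{m'}\dl(y-x)$ with coefficients at $x$) is needed to land literally in the class written in \eqref{eq:P-right} with $(x,a)\leftrightarrow(y,b)$; this is harmless but worth stating.
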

 
 \begin{proof}
The second $\simeq$-sign follows immediately from the first $\simeq$-sign, since the expression 
most to the left is antisymmetric under $(a,x)\leftrightarrow (b,y)$.

To prove the first $\simeq$-sign, we write out the assumption, i.e.,  
\be\label{eq:sa(P)=[t_a,P]}
-[t_a,P(y)]=-s_a(P(y))\ .
\ee
\begin{itemize}
\item For the l.h.s.~\blue{of \eqref{eq:sa(P)=[t_a,P]}} we obtain
\be\label{eq:[ta,P]}
-[t_a,P(y)]_b\,t_b=\Bigl(\int dx\,\dl(x-y)\,f_{abc}\, P_c(x)\Bigr)\,t_b\ ,
\ee
which is the integral over $x$ of l.h.s.~of the first $\simeq$-sign \blue{in \eqref{eq:P-in-cc}} (multiplied with $t_b$). 
\item Using \eqref{eq:LieG-rot-1}, the r.h.s.~\blue{of \eqref{eq:sa(P)=[t_a,P]}} is equal to
\be\label{eq:int(rhs)}
\blue{-s_a(P(y))=}\int dx\,\,f_{acd}\Bigl(A^\mu_d(x)\,\frac{\dl P_b(y)}{\dl A^\mu_c(x)}+
 u_d(x)\,\frac{\dl P_b(y)}{\dl u_c(x)}+\tilde u_d(x)\,\frac{\dl P_b(y)}{\dl\tilde u_c(x)}\Bigr)\,t_b\ ,
\ee
which is the integral over $x$ of the r.h.s.~of the first $\simeq$-sign \blue{in \eqref{eq:P-in-cc}} (multiplied with $t_b$). 
\end{itemize}
 
 By the Poincar\'e Lemma for $\sF_\loc$-valued distributions \cite[Lemma 4.5.1]{D19} it follows the assertion. In detail:
Both, the left and the right hand side of the first $\simeq$-sign \blue{in \eqref{eq:P-in-cc}}
are linear combinations of partial derivatives of $\dl(x-y)$
and, hence, this holds also for their difference $F_{ab}(x,y):=$[l.h.s.]$-$[r.h.s.], that is,
 \be
 F_{ab}(x,y)=\sum_{|r|\leq 4}(\del^r\dl)(y-x)\,W_{ab,r}(x)\word{with} W_{ab,r}\in\sP\,,\quad\dim W_{ab,r}=4-|r|\ .
 \ee
 \blue{The equality of the expressions given on the r.h.s.~of \eqref{eq:[ta,P]} and \eqref{eq:int(rhs)}, respectively, means} that 
 $\int dx\,\,F_{ab}(x,y)=0$. Hence, we can apply the cited Lemma, which states that there exist
 field polynomials $U_{ab,r}^\mu\in\sP\,,\,\dim U_{ab,r}^\mu=3-|r|$, such that
 \be\label{eq:F=dU}
 F_{ab}(x,y)=\del_\mu^x\Bigl(\sum_{|r|\leq 4}(\del^r\dl)(y-x)\,U_{ab,r}^\mu(x)\Bigr)\ .
 \ee
 \blue{Both the left- and the right-hand side of
 the first $\simeq$-sign in \eqref{eq:P-in-cc} are $\LieG$-covariant (i.e., they satisfy \eqref{eq:LieG-tensor}
 \footnote{\blue{To verify this explicitly for the r.h.s.~of the first $\simeq$-sign in \eqref{eq:P-in-cc}, use the formula \eqref{eq:sa-fd}.}}
 ), since they are composed of $\LieG$-covariant vectors/tensors (in particular we use that $(P_b)$ is a $\LieG$-vector);
 hence, also  $F_{ab}(x,y)$ \eqref{eq:F=dU} is $\LieG$-covariant. Working manifestly Lorentz covariant in 
 construction of $U_{ab,r}^\mu$ given in the proof of the cited Lemma, $U_{ab,r}^\mu$ is a Lorentz vector.}
 That $U_{ab,r}^\mu$ can be chosen such that it satisfies $\dl_u(U^\nu_{ab,r})=0$ and
 $(U^\nu_{ab,r})^*=U^\nu_{ab,r}$ follows from \blue{the validity of these properties for $P_a$ and, hence, also for $F_{ab}(x,y)$.} 
 \end{proof}
 
{\bf Conclusions:} \emph{Since the structure constants are non-vanishing \eqref{eq:f-nonvanishing},
the reduced anomaly consistency condition \eqref{eq:Cc-reduced} excludes all possible,
non-trivial anomaly terms in the same way: $A\simeq A+A\,\,\Rightarrow\,\,A\simeq 0$.}

\emph{Due to this result, we have proved that by a finite renormalization of the gauge current $J^\mu$ \eqref{eq:J-renormal},
or equivalently a finite renormalization $R\to\wh R$ of the $R$-product given in \eqref{eq:Z}-\eqref{eq:hat(R)}, we can reach
that on-shell (i.e., restricted to configurations solving the free field equations) the interacting gauge current is conserved, i.e.,}
\be
\del_\mu^x J_{a,S_\inter}^\mu(x)\big\vert_{\sC_0}\equiv
\del_\mu^x R\Bigl(e_\ox^{S_\inter/\hbar},J_a^\mu(x)\Bigr)\Big\vert_{\sC_0}=0\word{for all}1\leq a\leq K,\,\,x\in g^{-1}(1)^\circ\ .
\ee


\section{Generality of results}\label{sec:generality}
From the procedure in Sects.~\ref{sec:gauge-trafo}-\ref{sec:ccAMWI} we recognize the following schemes: 
We consider an arbitrary model with free action $S_0$ and interaction $F\in\sF_\loc$ and a local, infinitesimal field transformation
with compact support,%
\be\label{eq:dlqQ}
\dl_{qQ}:=\int dx\,\sum_jq_j(x)Q_j(x)\,\fd{}{x}\ ,\quad q_j\in\sD(\bM,\bK)\ ,\,\,Q_j=\sum_kp_k(x)P_k(x)
\ee
with $p_k\in C^\infty(\bM,\bK)$, $P_k\in\sP$ and  $\bK=\bR$ or $\bK=\bC$. If \emph{$\dl_{qQ}$ is a (classical) symmetry of this model} 
in the sense that
\be\label{eq:dqQ=sym}
\dl_{qQ}(S_0+F)=\sum_j\int dx\,\,q_j(x)(x)\,\del_\mu J^\mu_j(x)\word{for some} J^\mu_j\in\sP\ ,
\ee
then the AMWI states that the interacting quantum currents $(J^\mu_{j,F})_j$ are conserved on-shell -- up to an anomaly:
\be\label{eq:AMWI-general}
\del_\mu^x R\Bigl(e_\ox^{F/\hbar},J_j^\mu(x)\Bigr)=R\Bigl(e_\ox^{F/\hbar},\Dl_j(x)(F)\Bigr)+\text{[terms vanishing on-shell]}\ .
\ee
for some anomaly map $\sF_\loc\ni F\mapsto\Dl_j(x)(F)\in\sP$ fulfilling Thm.~\ref{thm:AMWI}. This is 
a quantum version of \emph{Noether's Theorem} in terms of the (A)MWI.

To obtain the anomaly consistency condition of \cite{BDFR23}, we additionally need some geometric structure: 
To wit, let $\sV$ be a \emph{Lie algebra}, $(v^a)$ a basis of $\sV$ and the pertinent structure constants are denoted by 
$[v^a,v^b]=f^{ab}_{\,\,c}\,v^c$.%
\footnote{In all preceding sects.~the structure constants are assumed to be totally antisymmetric, hence all $\LieG$-indices
are written as lower indices. Here, this assumption is not made; hence, we distinguish lower and upper $\sV$-indices. Repeated indices 
are always summed over, also if both are lower indices. For example, in \eqref{eq:dX=int} there appears the $\LieG$-trace 
$\tr\bigl(X(x)\cdot\sD(x)\bigr)$.} 
In addition, let $\rho$ be some finite dimensional representation of $\sV$ with representation space $\sW$, 
and we assume that the basic field $\vf(x)$ is a functional (on the configuration space) with values in $\sW$, that is, 
$\vf(x)=\vf_j(x)\,e_j$ (sum over $j$) 
with $(e_j)$ a basis of $\sW$ and $\vf_j$ a $\bK$-valued field. We also assume that $\dl_{qQ}$ 
may be interpreted as the functional derivative in the direction of a (compactly supported)
vector field $\rho\bigl(X(x)\bigr)=X_a(x)\,\rho(v^a)$, where $X(x)=:X_a(x)\,v^a\in\sD(\bM,\sV)$ with  $X_a\in\sD(\bM,\bR)$, 
hence we write $\del_X$ instead of  $\dl_{qQ}$.  Note that \eqref{eq:dlqQ} can now be written as
\be
\del_X=\int dx\,\,(\del_X\vf)_j(x)\,\frac{\dl}{\dl\vf_j(x)}\ . 
\ee
Additionally we assume that
\be
\word{$(\del_X\vf)(x)$ is affine in $\vf(x),\del^\mu\vf(x)$}
\ee
(i.e., $(\del_X\vf)_j(x)$ is a polynomial of first order in $\vf_r(x)$, $\del^\mu\vf_s(x)$).
Writing $\del_X$ as
\be\label{eq:dX=int}
\del_X=\int dx\,\,X_a(x)\,\sD_a(x) \word{with a functional differential operator $\sD_a(x)$}
\ee
not depending on $X$, 
the above made assumption \eqref{eq:dqQ=sym} that $\dl_{qQ}$ (i.e., $\del_X$) is a symmetry 
of the model with action $(S_0+F)$ takes now the form 
\be\label{eq:dX=sym}
\sD_a(x)(S_0+F)=\del_\mu J^\mu_a(x)\word{for some} J^\mu_a\in\sP\ .
\ee
With that, the AMWI \eqref{eq:AMWI-general} can be written as
\be
\del_\mu^x (J_a^\mu)_F(x)=(\Dl_a(x)(F))_F+\text{[terms vanishing on-shell]}
\ee
for all $a$ with anomaly map $\Dl_a(x)(\bullet)$. 
If, in addition, it holds that $[\del_X,\del_Y]=\ka\,\del_{[X,Y]}$ (Lemma \ref{lem:dX=repr},
i.e., $X\mapsto \del_X$ is a \emph{Lie algebra representation}), then $\Dl_a(x)(\bullet)$ satisfies the consistency condition:
\begin{align}\label{eq:Cc-general}
 \ka\,\dl(x-y)\,f^{ab}_{\,\,c}\, \Delta_c(x)(F)=&\,\Bigl(\big\langle\Delta_b(y)'(F),\Delta_a(x)(F) \big\rangle 
+\sD_a(x)\bigl(\Delta_b(y)(F)\bigr)\\
&\,-\del_\mu^x\big\langle\Delta_b(y)'(F),J^\mu_a(x)\big\rangle\Bigr)-\Bigl((x,a)\leftrightarrow (y,b)\Bigr)\ .\nonumber
\end{align}
\blue{Removing the possible anomalies by induction on the order of $\hbar$, the first term on the r.h.s.~of \eqref{eq:Cc-general}
does not contribute. In addition, ignoring terms being $\sim\del X_a$ or $\sim\del Y_b$ after having integrated out 
the consistency condition with $X_a(x)$ and $Y_b(y)$ -- easily removable anomalies (i.e., ``trivial'' anomalies, see 
\eqref{eq:triv-anom}-\eqref{eq:J-renormal}) are of this type -- the third term on the r.h.s.~of \eqref{eq:Cc-general}
does not contribute. The resulting, reduced anomaly consistency condition,
\be\label{eq:CC-reduced}
 \ka\,\dl(x-y)\,f^{ab}_{\,\,c}\, \Delta_c(x)(F)\simeq \sD_a(x)\bigl(\Delta_b(y)(F)\bigr)-\bigl((x,a)\leftrightarrow (y,b)\bigr)\ ,
\ee
contains solely $\Dl_\bullet(\bullet)(F)$ and, hence, directly restricts the anomaly term of interest.}

In order that the reduced anomaly consistency condition \eqref{eq:CC-reduced} excludes all non-trivial, possible anomalies,
we need additional assumptions. To wit, the representation $\rho$ (in which the basic field $\vf(x)$ is) 
is the \emph{adjoint representation} of $\sV$, i.e., $\sW=\sV$ and
$\vf(x)=\vf_a(x)\,t^a$ with $(t^a)_b^{\,\,c}:=\rho(v^a)_b^{\,\,c}=f^{ac}_{\,\,\,\,b}$. In addition,
for the \emph{global} transformation belonging to $\del_X$ (which is a representation of $\sV$ on $\sF$,
see \eqref{eq:Sa}-\eqref{eq:[SP1,SP2]}), we assume that $S_0$ and $F$ are $\sV$-invariant
and that the $R$-product commutes with this global transformation 
(more precisely, it satisfies the renormalization condition \eqref{eq:[sa,R]=0},
see Prop.~\ref{pr:[Sa,R]=0}), then $\Delta(x)(F)=\Delta_c(x)(F)t_c$ is a $\sV$-vector in the adjoint representation. 
We also need that $\del_X$ 
acts on the basic field as a $\sV$-rotation up to a term $\sim\del X$, i.e.,
\be\label{eq:dXvf-gen}
\del_X\vf(x)\simeq -\ka [X(x),\vf(x)]\ .
\ee
With these additional assumptions, the reduced anomaly consistency condition \eqref{eq:CC-reduced} reads
\be\label{eq:cc-reduced}
 \dl(x-y)\,f^{ab}_{\,\,c}\, \Delta_c(x)(F)\simeq -\sum_{\vf=A^\mu,u,\tilde u}[t^a,\vf(x)]_c\frac{\dl\Dl_b(y)(F)}{\dl\vf_c(x)}
 -\bigl((x,a)\leftrightarrow (y,b)\bigr)\ ,
 \ee
 When integrating the first term on r.h.s.~over $x$ and multiplying with $t^b$ , we obtain $-s_a\bigl(\Dl(y)(F)\bigr)$ 
 (with $s_a$ defined in \eqref{eq:LieG-rot-1}).
Lemma \ref{lem:PinCc} additionally uses that the \emph{structure constants are totally antisymmetric,}%
\footnote{The validity of \eqref{eq:[ta,P]} requires this: in detail, in order that the transformation appearing on the 
l.h.s.~of \eqref{eq:cc-reduced}, i.e.,
$$
\Dl(x)(F)=\Dl_b(x)(F)\,t^b\,\,\longmapsto\,\, f^{ab}_{\,\,c}\,\Dl_c(x)(F)\,t^b\ ,
$$
is a $\sV$-rotation given by $(-t^a)$, i.e.,
$$
P(x)=P_b(x)\,t^b\,\,\overset{(-t^a)}{\longmapsto}\,\, -[t^a,P(x)]=-f^{ac}_{\,\,\,b}\,P_c(x)\,t^b\ ,
$$
it must hold that $f^{ab}_{\,\,c}=-f^{ac}_{\,\,\,b}$.} 
\blue{then it holds that $\int dx\,\mathrm{[l.h.s.~of \eqref{eq:cc-reduced}]}\,t^b=-[t^a,\Dl(y)(F)]$.
Since $\Dl(y)(F)$ is a $\sV$-vector these two integrals agree; hence,}
Lemma \ref{lem:PinCc} applies. If, in addition, the \emph{structure constants are non-vanishing} \eqref{eq:f-nonvanishing} 
(see also \eqref{eq:f-not=0-geom}), then,  the reduced consistency condition excludes all non-trivial, possible anomalies.


\section{The global transformation corresponding to the studied local gauge transformation $\del_X$}\label{sec:global-trafo}

In this section we investigate the following questions (which are essentially due to Klaus Fredenhagen): 
as we will see, the \emph{global} transformation $(\sS_a)_{a=1,\ldots,K}$ 
corresponding to the extended \emph{local} gauge transformation $\del_X$ \eqref{eq:del_X} is a classical symmetry 
(i.e., the total Lagrangian is invariant). In which sense does the pertinent Noether current $j^\mu(g;x)$ agree with the 
non-Abelian gauge current $J^\mu(x)$ \eqref{eq:J}? A main advantage of the global transformation over the local one is that 
the Lie algebra underlying $(\sS_a)$ is the Lie algebra of a
\emph{compact} Lie group; hence, the Haar meausure (on this group) is available. 
By using that measure, it is possible to symmetrize  the $R$- (or $T$-) product w.r.t.~this group, such
that this product commutes with $\sS_a$ (Prop.~\ref{pr:[Sa,R]=0}). What are the consequences of this result for 
the possible anomaly of the conservation of the interacting quantum current $(j^\mu(g;x))_{S_\inter(g)}$ 
\blue{without assuming $x\in g^{-1}(1)^\circ$} (Prop.~\ref{pr:int-anomaly})?

\paragraph{Global transformation and pertinent classical Noether current $j^\mu$.}
We study the \emph{global}, infinitesimal field transformation corresponding to $\del_X$, that is, 
in $X(x)=X_b(x)t_b\in\LieGc$ we replace $X_b(x)$ by $\dl_{ab}$.
\footnote{\blue{This section is independent of the procedure in the preceding Sections, i.e., the results are 
not obtained from the results derived in Sects.~\ref{sec:AMWI} and \ref{sec:ccAMWI} by replacing $X_b(x)$ by $\dl_{ab}$.}}
This yields
\be\label{eq:Sa}
\sS_a\,:\,\sF\longrightarrow\sF\,;\,\sS_a=\int dx\,\,\sD_a(x)\ .
\ee
Inserting the explicit formula \eqref{eq:dXF} for $\sD_a(x)$, we recognize that the first term, i.e., the contribution of the 
infinitesimal field shift to $\del_X^A$, does not contribute since the integration runs only over the support of $F$ (which is compact).
Therefore, $\sS_a$ is solely a $\LieG$-rotation:
\begin{align}\label{eq:Sa-explicit}
\sS_a(F)=\,&\,-\ka\int dx\,\Bigl(\Bigl[A^\mu(x),\frac{\dl F}{\dl A^\mu (x)}\Bigr]_a
+\Bigl[u(x),\frac{\dl_l F}{\dl u(x)}\Bigr]_a+\Bigl[\tilde u(x),\frac{\dl_l F}{\tilde u(x)}\Bigr]_a\Bigr)\nonumber\\
=\,&\,-\ka\int dx\,\Bigl([t_a,A^\mu(x)]_b\frac{\dl F}{\dl A_b^\mu (x)}
+[t_a,u(x)]_b\frac{\dl_l F}{\dl u_b(x)}+[t_a,\tilde u(x)]_b\frac{\dl_l F}{\tilde u_b(x)}\Bigr)\nonumber\\
=\,&\,-\ka\,s_a(F)
\end{align}
with $s_a$ defined in \eqref{eq:LieG-rot-1}. Note that $\om_0\circ\sS_a=0$ \blue{with $\om_0$ being the vacuum state
\eqref{eq:vacuum}. To wit, for all basic fields $\vf$ it holds that 
$\om_0\bigl([t_a,\vf(x)]\cdot\fd{F}{x}\bigr)=\om_0([t_a,\vf(x)])\cdot\om_0\bigl(\fd{F}{x}\bigr)$ and $\om_0([t_a,\vf(x)])=0$.} 
We also point out that
\be\label{eq:P->SP} 
\LieG\ni P=P_aT_a\longmapsto S_P:=P_a\,\sS_a\word{is a representation of $\LieG$ on $\sF$.}
\ee 
To wit, using the relations $[\del_X,\del_Y]=\ka\,\del_{[X,Y]}$ \eqref{eq:[dX,dY]} and 
$$
\del_{[X,Y]}=\int dxdy\,\,\dl(x-y)\,X_a(x)Y_b(y)\,f_{abc}\,\sD_c(x)
$$ 
and going over to the corresponding global transformations in the former relation (as explained before \eqref{eq:Sa}), we obtain%
\footnote{Analogously to \eqref{eq:[dX,dY]}, $[\sS_a,\sS_b]$ denotes the commutator of two functional differential operators, 
and similarly for $[\sS_{P_1},\sS_{P_2}]$. \blue{The identity \eqref{eq:[SP1,SP2]} can also be obtained directly from 
\eqref{eq:Sa-explicit} by a reduced version of the proof of Lemma \ref{lem:dX=repr}.}}
\be\label{eq:[SP1,SP2]}
[\sS_a,\sS_b]=\ka\,f_{abc}\,\sS_c\ ,\word{that is,} [\sS_{P_1},\sS_{P_2}]=\ka\,\sS_{[P_1,P_2]}\ .
\ee 
Obviously, $\sS_a$ is a derivation w.r.t.~the classical product  and, due to
$\frac{\dl(\del_\nu B)(x)}{\dl \vf(y)}=\del_\nu^x\Bigl(\frac{\dl B(x)}{\vf(y)}\Bigr)$, $ \vf=A^\mu_a,u_a,\tilde u_a$,
it holds that
\be
\sS_a\bigl(\del^\mu B(x)\bigr)=\del^\mu_x\,\sS_a\bigl(B(x)\bigr)\word{for $B\in\sP$.}
\ee

That $\sS_a$ is a derivation also with respect to the star-product \eqref{eq:star-product} relies on the cancelations appearing in
the commutator
\begin{align}
&[\sD\,,\,\sS_a\ox\id+\id\ox\sS_a]=\ka\int dx\,dy\,\Bigl(D^{\mu\nu,+}_\la(x-y)\,
\frac{\dl}{\dl A_{b}^\mu(x)}\ox\frac{\dl}{\dl A^\nu_c(y)}\nonumber\\
&+D^+(x-y)\Bigl(-\frac{\dl_r}{\dl u_b(x)}\ox\frac{\dl_l}{\dl \tilde u_c(y)}+\frac{\dl_r}{\dl \tilde u_b(x)}\ox\frac{\dl_l}{\dl u_c(y)}\Bigr)\Bigr)
\cdot(f_{abc}+f_{acb})=0\ ;
\end{align} 
hence, we obtain
\begin{align}\label{eq:Sa-starproduct}
\sS_a(F\star G)=&\,\sS_a\circ\sM\circ e^{\hbar\sD}(F\ox G)=\sM\circ (\sS_a\ox\id+\id\ox\sS_a)\circ e^{\hbar\sD}(F\ox G)\nonumber\\
=&\,\sM\circ e^{\hbar\sD}\bigl(\sS_a(F)\ox G+F\ox\sS_a(G)\bigr)=\sS_a(F)\star G+F\star\sS_a(G)\ .
\end{align}

In \eqref{eq:sa(tr)}-\eqref{eq:sa(LYM)} we already verified that 
\be
\sS_a\bigl(L_\YM(g;x)\bigr)=0\ ,\quad \sS_a\bigl(L_\gf(g;x)\bigr)=0\ ,\quad \sS_a\bigl(L_\gh(g;x)\bigr)=0
\ee
\blue{(remember $\sS_a=-\ka\,s_a$)};
hence, the total Lagrangian $L(g;x)=L_\YM(g;x)+L_\gf(g;x)+L_\gh(g;x)$ is invariant.
At this stage, $F^{\mu\nu}$ and $D^\mu$ contain the switching function $g$, hence, this holds also for the Lagrangians; 
we indicate this by writing $L(g;\bullet)$. We derive the pertinent classical Noether current $j^\mu_a$ 
in the usual way (cf.~e.g.~\cite[Chap.~4.2.3]{D19}):
\begin{align}\label{eq:Noether-Sa}
0={}&\sS_a\bigl(L(g;x)\bigr)=\sum_k \sS_a\bigl(\vf_k(x)\bigr)\cdot\frac{\del L(g;\bullet)}{\del\vf_k}(x)+
\sS_a\bigl(\del_\mu\vf_k(x)\bigr)\cdot\frac{\del L(g;\bullet)}{\del(\del_\mu\vf_k)}(x)
\\
={}& \del^x_\mu \sum_k\Bigl(\sS_a\bigl(\vf_k(x)\bigr)\cdot\frac{\del L(g;\bullet)}{\del(\del_\mu\vf_k)}(x)\Bigr)
+\sum_k \sS_a\bigl(\vf_k(x)\bigr)\cdot\Bigl[\frac{\del L(g;\bullet)}{\del\vf_k}(x)-
\del_\mu\frac{\del L(g;\bullet)}{\del(\del_\mu\vf_k)}(x)\Bigr]\ ,\nonumber
\end{align}
where the sum over $k$ is the sum over $\vf_k=A_b^\mu,\,u_b,\,\tilde u_b$. Therefore, in classical field theory
\blue{the interacting field belonging to} the Noether current
\be\label{eq:current-Sa}
j_a^\mu(g;x):=-\Bigl(\sS_a\bigl(A^\nu_b(x)\bigr)\frac{\del L(g;\bullet)}{\del(\del_\mu A^\nu_b)}(x)
+\sS_a\bigl(u_b(x)\bigr)\frac{\del L(g;\bullet)}{\del(\del_\mu u_b)}(x)+
\sS_a\bigl(\tilde u_b(x)\bigr)\frac{\del L(g;\bullet)}{\del(\del_\mu\tilde u_b)}(x)\Bigr)
\ee
is conserved modulo the interacting field equations. \blue{(Note that \eqref{eq:Noether-Sa} and \eqref{eq:current-Sa}
are equations for field polynomials with coefficients containing the function $g$.)}
In terms of the perturbative, classical, retarded fields
\eqref{eq:class-int-field} this result reads%
\footnote{We recall that the lower index `$0$' denotes restriction of the functional to the space $\sC_0$
of solutions of the free field equations.}
\be
\del_\mu^x j_a^\mu(g;x)^\ret_{S_\inter(g),0}=0\ .
\ee
Computing $j^\mu(g;x):=j^\mu_a(g;x)\,t_a$ for the model at hand, we obtain
\be\label{eq:j}
j^\mu(g;x)=\ka\Bigl(-[A_\nu(x),F^{\mu\nu}(x)]+\la[(\del A)(x),A^\mu(x)]-[\del^\mu\tilde u(x),u(x)]+
[\tilde u(x),D^\mu u(x)]\Bigr);
\ee
this result differs from the non-Abelian gauge current $J^\mu$ given in \eqref{eq:J}\blue{: the first and the third term do 
not appear in $J^\mu$, and $J^\mu$ has a term $\la\square A^\mu$ which does not emerge in $j^\mu$.}  
Setting $g(x)=0$, we obtain the corresponding current for the free theory,
\begin{align}\label{eq:j-g=0}
j^\mu(0;x)=\ka\Bigl(&-[A_\nu(x),(\del^\mu A^\nu(x)-\del^\nu A^\mu(x))]+\la[(\del A)(x),A^\mu(x)]\nonumber\\
&-[\del^\mu\tilde u(x),u(x)]+[\tilde u(x),\del^\mu u(x)]\Bigr),
\end{align}
which is conserved modulo the free field equations, i.e., 
\be\label{eq:dj(0)=0}
\del_\mu^x j^\mu(0;x)_0=0\ .
\ee

\paragraph{In which sense agrees $j^\mu$ with the non-Abelian gauge current $J^\mu$?}
This paragraph applies to both (perturbative) classial and quantum field theory; for the interacting currents we use the notation of pQFT.
To study the difference $(J^\mu-j^\mu)$ we restrict to $x\in g^{-1}(1)^\circ$ and we write
$j^\mu(x):=j^\mu(g;x)$ if $(x,g)$ satisfies this assumption. The crucial point is that
\emph{on-shell the difference $(J^\mu_{S_\inter,0}-j^\mu_{S_\inter,0})$ is a term whose divergence vanishes identically}.
To wit, by using the field equation \eqref{eq:FE-A},
\footnote{\blue{In pQFT the validity of the field equations is a renormalization condition for the $R$-product, see \eqref{eq:FE}.}}
the first term of $J^\mu_{S_\inter,0}$ can be written as
\begin{align}
\la \square A^\mu_{S_\inter,0}(x)={}&\square A^\mu_{S_\inter,0}(x)-(1-\la)\square A^\mu_{S_\inter,0}(x)\\
={}&(1-\la)\del^\mu\del^\nu (A_\nu)_{S_\inter,0}(x)+\ka\Bigl(\del^x_\nu([A^\nu(x),A^\mu(x)])_{S_\inter,0}\nonumber\\
&-([A_\nu(x),F^{\mu\nu}(x)])_{S_\inter,0}-([\del^\mu\tilde u(x),u(x)])_{S_\inter,0}\Bigr)-(1-\la)\square A^\mu_{S_\inter,0}(x)\ .
\nonumber
\end{align}
Inserting this result we obtain
\be\label{eq:J-j}
J^\mu_{S_\inter,0}(x)-j^\mu_{S_\inter,0}(x)=(1-\la)(\del^\mu\del^\nu-g^{\mu\nu}\square) (A_\nu)_{S_\inter,0}(x)
+\ka\,\del^x_\nu([A^\nu(x),A^\mu(x)])_{S_\inter,0}\ ;
\ee
obviously, the divergence $\del_\mu^x$[r.h.s.~of \eqref{eq:J-j}] vanishes identically, i.e., without restriction to $\sC_0$.
In pQFT the conclusion is that \emph{on-shell conservation of $j^\mu$ and $J^\mu$ is violated by the same anomaly term:}
\be\label{eq:dj=dJ}
 \del_\mu^x\,j^\mu_{S_\inter,0}(x)=\del_\mu^x\,J^\mu_{S_\inter,0}(x)=\bigl(\Delta(x)(S_\inter)\bigr)_{S_\inter,0}
 \word{for $x\in g^{-1}(1)^\circ$,}
\ee
where $\Delta(x)(S_\inter)=\Delta_a(x)(S_\inter)\,t_a$.

\paragraph{How to fulfil invariance of the $R$-product w.r.t.~the global transformation, i.e.,
the renormalization condition $\LieG$-covariance?}
A crucial advantage of $\sS_a$ over $\del_X$ is that $\sS_a$ is a \emph{global} transformation.
In addition, since the structure constants are totally antisymmetric and non-vanishing, the Killing form is negative definite. In detail,
the Killing form is defined  by 
\be
\langle T_a,T_b\rangle:=\mathrm{Tr}(t_at_b)=\sum_{c,d}f_{adc}f_{bcd}
\ee
(where $\mathrm{Tr}(t_at_b)$ is the matrix-trace of the matrix product $t_a\cdot t_b$) and, by the assumed properties of 
the structure constants (see the beginning of Sect.~\ref{sec:YM-basics}), it holds that
\be
\langle T_a,T_a\rangle=-\sum_{c,d}(f_{adc})^2<0\ .
\ee
Since the Killing form is negative definite, $\LieG$ is the Lie algebra of a semisimple compact Lie group $\G$. Due to the compactness,
the Haar measure is available. By using the latter a quite simple proof can be given of the following statement.

\begin{prop}\label{pr:[Sa,R]=0}
In the inductive Epstein-Glaser construction of the sequence $(R_{n,1})_{n\in\bN}$ (see \cite{DF04} or \cite[Chaps.~3.1,3.2]{D19}), 
the symmetry relation
\be\label{eq:[Sa,R]=0}
\sS_a\circ R_{n-1,1}=R_{n-1,1}\circ\sum_{k=1}^{n}(\id\oxyox\sS_a\oxyox\id)\quad\forall a
\ee
(where on the r.h.s.~$\sS_a$ is the $k$th factor) is a renormalization condition, which can be 
satisfied by a symmetrization, which maintains the validity of all other renormalization conditions.
 \end{prop}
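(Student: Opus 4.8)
The plan is to exploit the compactness of $\G$ by a Haar-averaging (Weyl ``unitarian trick'') argument, exactly as the sentences preceding the statement suggest. First I would exponentiate the infinitesimal $\LieG$-rotation to a group action. Since $[\sS_a,\sS_b]=\ka f_{abc}\sS_c$ (equivalently $[s_a,s_b]=f_{abc}s_c$), the assignment $a\mapsto s_a$ is a representation of $\LieG$ on $\sF$ by \eqref{eq:[SP1,SP2]}; because each $s_a$ acts only on the internal $\LieG$-indices of the finitely many tensor factors of an $F\in\sF$, this representation integrates to a representation $\G\ni h\mapsto\al_h\in\mathrm{Aut}(\sF)$ of the compact group $\G$. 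The structural facts to record are that each $\al_h$ is an automorphism of both the classical product and the star product \eqref{eq:star-product} (the exponentiated form of \eqref{eq:Sa-starproduct}), that it acts by constant real orthogonal matrices on the $\LieG$-indices (so it respects the $*$-operation, the trace \eqref{eq:trace} and the grading by ghost number), that it leaves spacetime points untouched (hence preserves supports and wave front sets), and that it commutes with the Poincar\'e action $\bt_{\La,a}$, with $\square_\la$, and with the field derivatives $\fd{}{x}$ up to the covariant relabelling of indices given in \eqref{eq:sa-fd}.

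Next, starting from any retarded product $R=(R_{n,1})$ furnished by the Epstein--Glaser construction (which satisfies all basic axioms and all renormalization conditions except possibly \eqref{eq:[Sa,R]=0}), I would define the symmetrized product
\be
\tilde R_{n,1}:=\int_\G dh\,\,\al_h^{-1}\circ R_{n,1}\circ\al_h^{\ox(n+1)}\ ,
\ee
where $dh$ is the normalized (bi-invariant) Haar measure on $\G$. This integral is well defined because, for fixed local arguments with fixed compact supports, the integrand depends on $h$ only through the finitely many entries of the adjoint matrices $\Ad(h)$; hence $h\mapsto\al_h^{-1} R_{n,1}\al_h^{\ox(n+1)}(\cdots)$ is a polynomial (in particular smooth) $\sF$-valued map taking values in a fixed finite-dimensional subspace of $\sF$, so its $dh$-average lands back in $\sF$ and does not enlarge the wave front sets.

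Then I would verify the two assertions. For the invariance, the substitution $h\mapsto hG_0$ together with the invariance of $dh$ gives $\al_{G_0}\circ\tilde R_{n,1}=\tilde R_{n,1}\circ\al_{G_0}^{\ox(n+1)}$ for every $G_0\in\G$; differentiating this at the group identity in the direction $t_a$ yields exactly \eqref{eq:[Sa,R]=0}, the constant $-\ka$ relating $\sS_a$ and $s_a$ being irrelevant. For the preservation of the remaining renormalization conditions, I would run through them using that each $\al_h$ is equivariant for the corresponding structure and that $\tilde R$ is a $dh$-average of the equivariant operators $\al_h^{-1}R_{n,1}\al_h^{\ox(n+1)}$: normalization of $dh$ gives the Initial condition $\tilde R_{0,1}=\id$; symmetry of $\al_h^{\ox n}$ in the first $n$ slots gives Symmetry; support-preservation gives Causality; the star-automorphism property gives the GLZ-relation and the $*$-structure; commutation with $\bt_{\La,a}$, with $\square_\la$ and with the field derivatives gives Poincar\'e covariance, the Off-shell field equation and Field independence; and invariance of the gradings gives Preservation of ghost number and Almost homogeneous scaling, while the construction is manifestly order-by-order in $\hbar$.

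The main obstacle is the careful bookkeeping in the two checks where $\al_h$ does not commute strictly but only covariantly with the defining operations, above all Field independence, where $\fd{}{x}$ must be transported through $\al_h$ using the index rotation (the exponentiated analogue of the second relation in \eqref{eq:sa-fd}), together with the verification that the Haar integral is compatible with the formal-power-series and distribution-valued nature of $R_{n,1}$ (smoothness of $h\mapsto\al_h$ on the finite-dimensional internal space and stability of the wave front set under averaging). Once these are in place, the group-invariance computation and the equivariance of all other conditions are routine.
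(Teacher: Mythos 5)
Your overall instinct---exploit the compactness of $\G$ via Haar averaging---is the right one, and it is indeed what the paper does, but the place where you apply the average creates a genuine gap. You average the whole retarded product, $\tilde R_{n,1}:=\int_\G dh\,\,\al_h^{-1}\circ R_{n,1}\circ\al_h^{\ox(n+1)}$. Each individual conjugate $R^{(h)}:=\al_h^{-1}\circ R\circ\al_h^{\ox(\bullet)}$ does satisfy all axioms (this is where the star-automorphism property enters), but the axiom system is not affine in $R$: the GLZ relation is \emph{bilinear} in $R$. Concretely, at second order GLZ contains the commutator $[R_{1,1}(G,F),R_{1,1}(G,H)]_\star$; for your average this becomes a double integral $\int dh\,dh'\,\bigl[\al_h^{-1}R_{1,1}(\al_hG,\al_hF)\,,\,\al_{h'}^{-1}R_{1,1}(\al_{h'}G,\al_{h'}H)\bigr]_\star$, whereas the required right-hand side $\tilde R_{2,1}(G\ox F,H)-\tilde R_{2,1}(G\ox H,F)$ is a single (diagonal) integral over $h$; these agree only if $R_{1,1}$ is already invariant, which is precisely what is not assumed. (The lowest orders work only because one commutator factor is $R_{0,1}=\id$, which is trivially invariant.) Put structurally: by the Main Theorem the set of admissible $R$-products is an orbit of the St\"uckelberg--Petermann group, not a convex set, so an average of admissible products need not be admissible; your sentence ``the star-automorphism property gives the GLZ-relation'' is valid for each $R^{(h)}$ but not for their mean.

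The paper's proof avoids this by combining the Haar average with the Epstein--Glaser induction (carried out for the equivalent statement about $T_n$), averaging only the \emph{local ambiguity} at order $n$ rather than the product itself. With $T_k$, $k<n$, already invariant, the restriction $T_n^0$ to $\bM^n\setminus\Dl_n$---which is determined from the lower orders by causal factorization---is automatically invariant, $V(\g)T_n^0=T_n^0$, where $V(\g)T_n:=\g\circ T_n\circ(\g^{-1})^{\ox n}$. Hence, for any admissible extension $T_n$ of $T_n^0$, the difference $L_n(\g):=V(\g)T_n-T_n$ is supported on the diagonal, satisfies the cocycle relation $L_n(\g_1\g_2)=V(\g_1)\circ L_n(\g_2)+L_n(\g_1)$, and its Haar average $L_n^\sym:=\int_{\G_0}d\g\,L_n(\g)$ solves the coboundary equation $L_n(\g)=L_n^\sym-V(\g)L_n^\sym$, so that $T_n^\sym:=T_n+L_n^\sym$ is invariant. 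Because only the extension at order $n$ is modified, and the set of admissible extensions of a \emph{fixed} $T_n^0$ (with lower orders fixed) is affine---indeed $T_n+L_n(\g)=V(\g)T_n$ is admissible for every $\g$, hence so is the average---all other renormalization conditions survive, and GLZ in particular is untouched since at order $n$ it constrains only $T_n^0$. Your exponentiation of $\sS_a$ to the group action and your invariance-of-Haar-measure computation can be kept essentially verbatim, but they must be applied to the cocycle $L_n(\g)$ inside this induction, not to $R_{n,1}$ directly.
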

 
 \begin{proof} That \eqref{eq:[Sa,R]=0} is a renormalization condition is most easily seen by considering the analogous (and equivalent)
 relation for $T_n$ and by proceeding by induction on $n$, following the inductive Epstein-Glaser construction of the sequence $(T_n)$
 (\cite{EG73} or\cite[Chap.~3.3]{D19}). The claim follows then from the above obtained result that $\sS_a$ is a derivation 
 w.r.t.~the star product \eqref{eq:Sa-starproduct}. 
 
 With that, in the inductive step $(n-1)\to n$, we know that $T_n^0(\ldots)\in\sD'(\bM^n\setminus\Dl_n,\sF)$ 
 (where $\Dl_n:=\{(x_1,\ldots,x_n\in\bM^n\,\big\vert\,x_1=\ldots =x_n\}$)
 \footnote{Here we understand $T_n$ as a map $\sP^{\ox n}\to \sD'(\bM^n,\sF)$ and analogously for $T_n^0$,
 see e.g.~\cite[Chap.~3.1.1]{D19}.} 
  satisfies \eqref{eq:[Sa,R]=0}. Let $T_n$ be an ``admissible'' extension of $T_n^0$ to $\sD'(\bM^n,\sF)$, i.e., an extension fulfilling all further renormalization conditions. We aim to construct
 a symmetrization $T_n^\sym$ of $T_n$ which satisfies \eqref{eq:[Sa,R]=0} and maintains all other renormalization conditions.
 We do this by means of the Haar measure. Since this is a measure on the Lie group (and not on the Lie algebra),
 we study the \emph{finite} transformation belonging to $(\sS_a)_a$, more precisely,
 \be
 \g\equiv\g(\unl\la):=\exp(\unl\la\,\unl\sS)\word{with}\unl\la\,\unl\sS:=\sum_a\la_a\sS_a\ .
 \ee 
 Defining 
 \be
 V(\g)T_n:=\g\circ T_n\circ (\g^{-1})^{\ox n}\ ,
 \ee
 the symmetrized extension must satisfy
 \be\label{eq:V(g)T=T}
 V(\g)T_n^\sym=T_n^\sym\ ,
 \ee
 which  is \emph{equivalent} to the assertion \eqref{eq:[Sa,R]=0} for $T_n^\sym$ in place of $R_{n-1,1}$.
 To verify this equivalence, first note that application of $\frac{\del}{\del \la_a}\big\vert_{\unl\la=\unl 0}$ to 
$\g\circ T_n^\sym=T_n^\sym\circ \g^{\ox n}$ \eqref{eq:V(g)T=T}  
 yields \eqref{eq:[Sa,R]=0} for $T_n^\sym$. The reversed conclusion is obtained by multiple application of \eqref{eq:[Sa,R]=0}:
 \begin{align*}
\exp & (\unl\la\,\unl\sS)\,T_n^\sym=\sum_{j=0}^\infty\frac{(\unl\la\,\unl\sS)^j}{j!}\,T_n^\sym\overset{\eqref{eq:[Sa,R]=0}}{=}
T_n^\sym\circ\sum_{j=0}^\infty\frac{\Bigl(\sum_{k=1}^{n}(\id\oxyox\unl\la\,\unl\sS\oxyox\id)\Bigr)^j}{j!}\\
=&\,T_n^\sym\circ\sum_{j=0}^\infty\frac{1}{j!}\sum_{j_1+\ldots+j_n=j}\frac{j!}{j_1!\cdots j_n!}\,
(\unl\la\,\unl\sS\ox\id\oxyox\id)^{j_1}\cdots(\id\oxyox\id\ox\unl\la\,\unl\sS)^{j_n}\\
=&\,T_n^\sym\circ\sum_{j_1,\ldots,j_n=0}^\infty\frac{(\unl\la\,\unl\sS)^{j_1}}{j_1!}\oxyox\frac{(\unl\la\,\unl\sS)^{j_n}}{j_n!}\\
=&\,T_n^\sym\circ\Bigl(\exp\bigl(\unl\la\,\unl\sS\bigr)\Bigr)^{\ox n}\ ,
\end{align*} 
where the multinomial theorem is used.
 
 With that we can follow \cite[App.~D]{DF04} or\cite[Chap.~3.2.7]{D19}.
 Since \eqref{eq:[Sa,R]=0} is a renormalization condition, we know that
 \be
 V(\g)T_n^0=T_n^0\ .
 \ee
 Due to $V(\g_1\g_2)=V(\g_1)\circ V(\g_2)$, the map $\g\mapsto V(\g)$ is a representation. 
 
 Let an admissble extension $T_n$ of $T_n^0$ be given. One verifies that $V(\g)T_n$ is then also an admissible  extension of
  $V(\g)T_n^0=T_n^0$. Hence,
 \be
 L_n(\g):=V(\g)T_n-T_n 
 \ee
 is the difference of two admissible extensions, in particular it fulfills $\supp L_n(\g)\subset\Dl_n$.
 One straightforwardly verifies that the map $\g\mapsto L_n(\g)$ satisfies the ``cocycle'' relation
 \be\label{eq:cocycle}
 L_n(\g_1\g_2)=V(\g_1)\circ L_n(\g_2)+L_n(\g_1) \ .
 \ee
 
 We are searching an $L_n^\sym$ such that $T_n^\sym:=T_n+L_n^\sym$ is an admissible extension of $T_n^0$ (in particular, 
 it must hold that $\supp L_n^\sym\subset\Dl_n$) and that $T_n+L_n^\sym$ is invariant,
 \be
 T_n+L_n^\sym\overset{!}{=}V(\g)\bigl(T_n+L_n^\sym\bigr)=L_n(\g)+T_n+V(\g)L_n^\sym\ ,
 \ee
 which is equivalent to
 \be\label{eq:coboundary}
 L_n(\g)=L_n^\sym-V(\g)L^\sym_n\ .
 \ee
 We claim that $L_n^\sym$ is obtained by the symmetrization
 \be\label{eq:symmetrization}
 L_n^\sym:=\int_{\G_0}d\g\,\,L_n(\g)\ ,
 \ee
 where $\G_0$ is the connected component of the unit $\bf{1}=\g(\unl 0)\in\G$ and $d\g$ is the 
 Haar measure, that is, the uniquely determined measure on
 $\G_0$ which is invariant under left- and right-translations (i.e., $d(\h\g)=d\g=d(\g\h)$ for $\h\in\G_0$) and has norm $1$ 
 (i.e., $\int_{\G_0}d\g={\bf 1}$). We have to verify that  $L_n^\sym$ (given by \eqref{eq:symmetrization}) satsifies \eqref{eq:coboundary}:
 \begin{align*}
  L_n^\sym -V(\g)L^\sym_n\overset{\eqref{eq:symmetrization}}{=}{}&\int_{\G_0} d\h\,\bigl(L_n(\h)-V(\g)\circ L_n(\h)\bigr)\\
 \overset{\eqref{eq:cocycle}}{=}{}&\int_{\G_0} d\h\,\bigl(L_n(\h)-L_n(\g\h)+L_n(\g)\bigr)\\
 ={}&\int_{\G_0} d\h\,L_n(\h)-\int_{\G_0} d(\g\h)\,L_n(\g\h)+\Bigl(\int_{\G_0} d\h\Bigr)\,L_n(\g)\\
 ={}&L_n(\g)\ .
 \end{align*}
 That $T_n^\sym=T_n+L_n^\sym$ is an admissible extension of $T_n^0$ is a consequence of \eqref{eq:symmetrization}
 and the (above obtained) result that $T_n+L_n(\g)$ is an admissible extension of $T_n^0$ for all $\g\in\G_0$. 
 \end{proof}
 
 \paragraph{The possible anomaly of $\del_\mu^x\,j^\mu(g;x)_{S_\inter,0}$ without assuming  $x\in g^{-1}(1)^\circ$.}
In the remainder of this section, we study conservation of $j^\mu(g;x)_{S_\inter,0}$ for general $(g;x)$.
\blue{Inserting \eqref{eq:current-Sa} into \eqref{eq:Noether-Sa}} we obtain
\be\label{eq:dj}
\del_\mu^x\,j^\mu(g;x)=\sum_k\sS_a\bigl(\vf_k(x)\bigr)\cdot \frac{\dl\bigl(S_0+S_\inter(g)\bigr)}{\dl\vf_k(x)}\ ,
\ee
where the sum over $k$ is the sum over $\vf_k=A_b^\mu,\,u_b,\,\tilde u_b$. \blue{(Similarly to 
\eqref{eq:Noether-Sa}-\eqref{eq:current-Sa} this is an equation for field polynomials.)}
Hence we can apply the theorem about the AMWI (given in \cite[Sect.~5.2]{Brennecke08} or \cite[Chap.~4.3]{D19}, 
\blue{see also Thm.~\ref{thm:AMWI} and \eqref{eq:Dla(x)}}): there exists
a unique local functional
\be
\tilde\Dl(x)(S_\inter(g))=\tilde\Dl_a(x)(S_\inter(g))\,t_a\ ,\quad\tilde\Dl_a(x)(S_\inter(g))\in\sF_\loc\ ,
\ee
which satisfies certain properties (recalled in Thm.~\ref{thm:AMWI} for the case that the infinitesimal symmetry transformation is 
$\del_X$): most important are the AMWI, whose on-shell version reads
\be\label{eq:AMWI-j-onshell}
\del_\mu^x\,\bigl(j^\mu(g;x)\bigr)_{S_\inter(g),0}=\bigl(\tilde\Dl(x)(S_\inter(g))\bigr)_{S_\inter(g),0}\ ,
\ee
and locality, see \eqref{eq:Pnar}. \blue{Due to the latter, the map $\Dl_a(x)(\bullet)$ introduced in \eqref{eq:Dla(x)} satisfies
$\Dl_a(x)\bigl(B(g)\bigr)=0$, if $x\not\in\supp g$. For the possible anomaly $\tilde\Dl(x)(S_\inter(g))$ this means that}
\be\label{eq:Dl=0}
\tilde\Dl(x)(S_\inter(g))=0\word{for $x\not\in\supp g$.}
\ee

\begin{prop}\label{pr:int-anomaly}
 If the $R$-product commutes with the global transformation $\sS_a$ (more precisely, \eqref{eq:[Sa,R]=0} is satisfied),
then it holds that
\be\label{eq:int-dj(g;x)=0}
\int dy \,\,\bigl(\tilde\Dl(y)(S_\inter(g))\bigr)_{S_\inter(g),0}=0\ .
\ee
\end{prop}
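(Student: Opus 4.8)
The plan is to integrate the on-shell global AMWI \eqref{eq:AMWI-j-onshell} over $y$ and to show that the integrated divergence of the interacting current vanishes; the hypothesis \eqref{eq:[Sa,R]=0} will enter precisely at the point where the classical Noether identity is transported to the level of the $R$-product. First I would record the algebraic consequence of the hypothesis. Since $\sS_a=-\ka\,s_a$ (see \eqref{eq:Sa-explicit}) and $s_a$ annihilates the interaction densities $L_1,L_2$ (by the same computations as in \eqref{eq:sa(LYM)}), we have $\sS_a(S_\inter)=0$. Combined with \eqref{eq:[Sa,R]=0} this yields, for every $G\in\sF_\loc$,
$$
\sS_a\bigl(R(e_\ox^{S_\inter/\hbar},G)\bigr)=R\bigl(e_\ox^{S_\inter/\hbar},\sS_a G\bigr),
$$
i.e.\ the interacting-field map commutes with the global transformation $\sS_a$.

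Second, I would integrate \eqref{eq:AMWI-j-onshell} over $y$. By locality of the anomaly map together with \eqref{eq:Dl=0}, the functional $\tilde\Dl(y)(S_\inter(g))$ is supported in the compact set $\supp g$, so $\int dy\,\bigl(\tilde\Dl(y)(S_\inter(g))\bigr)_{S_\inter(g),0}$ is well defined and the claim \eqref{eq:int-dj(g;x)=0} reduces to proving
$$
\int dy\,\del_\mu^y\,\bigl(j^\mu(g;y)\bigr)_{S_\inter(g),0}=0 .
$$

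Third, I would establish this vanishing. Using the Action Ward Identity \eqref{eq:AWI} to pull $\del_\mu^y$ through the $R$-product, and then the Noether identity \eqref{eq:dj}, $\del_\mu^y j^\mu(g;y)=\sum_k\sS_a\bigl(\vf_k(y)\bigr)\cdot\frac{\dl(S_0+S_\inter)}{\dl\vf_k(y)}$, one gets
$$
\int dy\,\del_\mu^y R\bigl(e_\ox^{S_\inter/\hbar},j^\mu(y)\bigr)=\int dy\,R\Bigl(e_\ox^{S_\inter/\hbar},\textstyle\sum_k\sS_a\bigl(\vf_k(y)\bigr)\,\tfrac{\dl(S_0+S_\inter)}{\dl\vf_k(y)}\Bigr).
$$
The crucial input is that $\int dy\,\sum_k\sS_a\bigl(\vf_k(y)\bigr)\,\frac{\dl(S_0+S_\inter)}{\dl\vf_k(y)}=\sS_a(S_0+S_\inter)=0$, the statement that $\sS_a$ is an off-shell symmetry of the total action. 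I would argue that the right-hand side therefore equals $R\bigl(e_\ox^{S_\inter/\hbar},\sS_a(S_0+S_\inter)\bigr)=0$, so that the integrated current divergence vanishes identically, and in particular after restriction to $\sC_0$.

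The genuine difficulty — and the exact place where \eqref{eq:[Sa,R]=0} is indispensable — is the interchange of the $y$-integration with the $R$-product in the last display. The current $j^\mu(g;y)$ does not have compact support in $y$ (its free-field part $\sim\ka[A_\nu,\del^\mu A^\nu-\del^\nu A^\mu]$ persists far from $\supp g$), so a priori $\int dy\,\del_\mu^y(j^\mu)_{S_\inter,0}$ is a surface contribution at (future) infinity rather than zero. The global $\sS_a$-invariance of $R$ secured by Prop.~\ref{pr:[Sa,R]=0} is what guarantees that the non-compactly supported free-field contributions assemble into $\sS_a$ applied to the invariant functional $S_0+S_\inter$ and hence cancel — equivalently, that the outgoing interacting charge coincides with the incoming (free) one. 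Making this rigorous in the formal power-series setting, by controlling the behaviour of $(j^\mu)_{S_\inter}$ at infinity so that the interchange is legitimate, is the step I expect to require the most care; everything else follows directly from the AMWI (Thm.~\ref{thm:AMWI}), the AWI, locality, and the symmetrized $R$-product.
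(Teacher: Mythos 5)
Your reduction of the claim to $\int dy\,\del_\mu^y\bigl(j^\mu(g;y)\bigr)_{S_\inter(g),0}=0$ via \eqref{eq:AMWI-j-onshell} and \eqref{eq:Dl=0} is correct, and you have located the difficulty in exactly the right place; but what you leave as ``the step requiring the most care'' is not a technicality --- it is the entire content of the proof, and your sketch of it cannot work as stated. If the interchange $\int dy\,R\bigl(e_\ox^{S_\inter/\hbar},\sum_k\sS_a(\vf_k(y))\,\tfrac{\dl(S_0+S_\inter)}{\dl\vf_k(y)}\bigr)=R\bigl(e_\ox^{S_\inter/\hbar},\sS_a(S_0+S_\inter)\bigr)$ were legitimate, the proposition would follow with \emph{no} use of the hypothesis \eqref{eq:[Sa,R]=0} at all: your steps 2--3 use only the AWI and the classical Noether identity \eqref{eq:dj}, and the consequence of \eqref{eq:[Sa,R]=0} recorded in your first paragraph is never invoked in any computation. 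Since the integrated anomaly does \emph{not} vanish for an arbitrary admissible $R$-product (otherwise the hypothesis would be superfluous), the naive interchange must fail in general; the classical identity $\sS_a(S_0+S_\inter)=0$ does not pass through the $R$-product precisely because the last slot of $R_{n,1}$ accepts only \emph{local} (compactly supported) functionals, and $\del_\mu j^\mu(g;y)$, being nonzero far from $\supp g$, cannot be smeared with the constant function $1$. Asserting that \eqref{eq:[Sa,R]=0} ``guarantees that the non-compactly supported contributions assemble into $\sS_a(S_0+S_\inter)$'' names the desired conclusion, not a mechanism.

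The paper's proof supplies the missing mechanism, and it is structurally different from your sketch. One multiplies the anomaly by a cutoff $g_1\in\sD(\bM)$ with $g_1=1$ near a double cone $\sO\supset\supp g$ (harmless by \eqref{eq:Dl=0}), integrates by parts to get $-\int dy\,(\del^\mu g_1)(y)\,T\bigl(e_\ox^{iS_\inter(g)/\hbar}\ox j_\mu(g;y)\bigr)_0$ via Bogoliubov's formula, and then splits $\del^\mu g_1=a^\mu-b^\mu$ with $\supp a^\mu$ disjoint from $\sO+\ovl V_-$ and $\supp b^\mu$ disjoint from $\sO+\ovl V_+$. Causal factorization of the $T$-product converts the surface term into the commutator $[\,j^\mu(0;a_\mu)_0\,,\,T(e_\ox^{iS_\inter(g)/\hbar})_0\,]_\star$ plus a term $\propto j^\mu(0;\del_\mu g_1)_0$, which vanishes by \emph{free} current conservation \eqref{eq:dj(0)=0}. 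Choosing $a_\mu$ to approximate a constant-time slice and invoking Lemma \ref{lem:[Q,F]=S_a(F)} (the equal-time charge commutator, itself a nontrivial computation with the $\la$-dependent propagator \eqref{eq:D-la(0,x)}), this commutator equals $-i\hbar\,\sS_a\bigl(T(e_\ox^{iS_\inter(g)/\hbar})\bigr)_0$; only at this final step does the hypothesis \eqref{eq:[Sa,R]=0}, together with $\sS_a\bigl(S_\inter(g)\bigr)=0$, force the expression to vanish. In short: the surface term at infinity is not shown to vanish by invariance of the action, but is identified with the action of the free charge on the $S$-matrix --- that identification (causal factorization plus Lemma \ref{lem:[Q,F]=S_a(F)}) is the idea missing from your proposal.
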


The proof of this proposition uses the following result:
\begin{lema}\label{lem:[Q,F]=S_a(F)}
It holds that
\be\label{eq:[Q,F]=S_a(F)}
\int d^3y\,\,[j^0_a\bigl(0;(c,\vec y)\bigr)_0\,,\,F_0]_\star =i\hbar\,\sS_a(F)_0\ ,\quad F\in\sF\ ,
\ee
for any choice of $c\in\bR$, where $[\bullet,\bullet]_\star$ denotes the commutator w.r.t.~the on-shell star product 
\eqref{eq:on-shell-star-product}.
\end{lema}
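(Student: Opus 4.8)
The plan is to exploit that both sides of \eqref{eq:[Q,F]=S_a(F)} are derivations with respect to the on-shell star product, and thereby to reduce the identity to the case in which $F$ is one of the basic fields. First I would observe that for fixed $a,c$ the charge $Q_a:=\int d^3y\,j^0_a\bigl(0;(c,\vec y)\bigr)_0$ is a well-defined on-shell functional which, by the free conservation law \eqref{eq:dj(0)=0}, does not depend on the choice of $c$; hence I may later take $c=x^0$ and work at equal times (the integration is effectively over a compact region, since the equal-time commutator functions are supported at $\vec y=\vec x$). The map $F\mapsto\frac{1}{i\hbar}[Q_a,F_0]_\star$ is a derivation with respect to $\star$, being an inner derivation of the associative algebra $(\sF_0,\star)$; and $\sS_a$ is a derivation with respect to $\star$ by \eqref{eq:Sa-starproduct}. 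Since the basic fields $A^\mu_b,u_b,\tilde u_b$ generate $\sF_0$ as a $\star$-algebra — this follows from the intertwining homomorphism $\Phi$ recalled in the excerpt, under which the classical product corresponds to the normally ordered product and the star product to the operator product, so that every on-shell Wick monomial is a $\star$-polynomial in the basic fields — it suffices to verify \eqref{eq:[Q,F]=S_a(F)} for $F$ ranging over the basic fields.

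For the verification on basic fields I would carry out the standard canonical computation. Because $j^0_a(0;\cdot)$ is \emph{quadratic} in the basic fields (see \eqref{eq:j-g=0}) while a basic field is \emph{linear}, the bidifferential operator $\sD$ of \eqref{eq:star-product} can contract at most once in each of $Q_a\star\vf$ and $\vf\star Q_a$; hence the star-commutator truncates \emph{exactly} at order $\hbar$ and is computed by a single contraction, with no higher corrections. Taking $c=x^0$ and using the equal-time relations for the gauge-field commutator function \eqref{eq:D-la(0,x)} and for the ghost commutator function (both $D_\la^{\mu\nu}$ and $D$, together with their spatial derivatives, vanish at equal times, while $\del_0 D_\la^{\mu\nu}(0,\vec z)$ and $\del_0 D(0,\vec z)$ produce the $\delta(\vec z)$ that collapses the $\vec y$-integration), the commutator of $Q_a$ with each basic field reduces to a structure-constant expression, which I would check equals $i\hbar\,\sS_a(\vf)=-i\hbar\,\ka\,[t_a,\vf]$, treating the gauge and ghost sectors separately and keeping track of the left/right functional derivatives and the anti-real convention for $\tilde u$.

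The step I expect to be the main obstacle is the gauge sector, where the $\la$-dependence must cancel. The current \eqref{eq:j-g=0} contains the explicit term $\ka\la\,[(\del A),A^0]$, while the equal-time relation $\del_0 D_\la^{\mu\nu}(0,\vec z)=-g^{\mu\nu}\delta(\vec z)\bigl(1+\tfrac{1-\la}{\la}\delta_{\nu 0}\bigr)$ carries the reciprocal $\la$-dependence reflecting the modified canonical momentum of $A^0$ in the `t Hooft $\la$-gauge. I would organize the gauge terms so that the $\la$-factor accompanying the contraction of $A^0$ exactly compensates the factor $\la$ in the current, leaving the $\la$-independent result $-\ka f_{abc}A^\mu_c(x)$. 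An efficient bookkeeping is to recognize $Q_a$ as $\int d^3y\,\bigl(\pi^\nu\,\sS_a(A_\nu)+\pi_u\,\sS_a(u)+\pi_{\tilde u}\,\sS_a(\tilde u)\bigr)$ with $\pi^\nu,\pi_u,\pi_{\tilde u}$ the free canonical momenta, for which \eqref{eq:D-la(0,x)} are precisely the canonical (anti)commutation relations; the cancellation is then automatic, and the remaining work is purely a matter of matching signs in the fermionic ghost contributions.
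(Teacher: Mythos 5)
Your proposal is correct in substance, and its computational core --- the equal-time verification on the basic fields, with the explicit $\la$-dependence of the term $\ka\la[(\del A),A^0]$ cancelling against the $1/\la$ in $\del_0D_\la^{00}(0,\vec z)$ --- is exactly the computation the paper performs (cf.~\eqref{eq:[Q,A]}). Where you genuinely differ is the extension from basic fields to general $F\in\sF$: you argue that both $F\mapsto\frac{1}{i\hbar}[Q_a,F_0]_\star$ and $\sS_a$ are derivations w.r.t.~$\star$ and that the basic fields generate $(\sF_0,\star)$, the latter via the homomorphism $\Phi$ (i.e.\ the inverse Wick expansion). The paper instead works with the classical product, in which elements of $\sF$ are actually written: it proves by induction on $n$ the explicit formula \eqref{eq:[Q0,(n)]} for the star-commutator with a classical monomial $\vf^1_{b_1,0}(x_1)\cdots\vf^n_{b_n,0}(x_n)$, using the conversion identity \eqref{eq:star(n+1)} --- which is precisely the recursive step of your inverse Wick expansion --- together with $\om_0\circ\sS_a=0$ to dispose of the contraction terms. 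The two arguments are essentially equivalent (the paper's induction is an inline, self-contained proof of your generation claim); what yours buys is brevity, at the price of importing $\Phi$ and of having to state the generation claim carefully for smeared functionals, where ``$\star$-polynomial in the basic fields'' means distributional coefficients multiplied by two-point functions, so the wave-front-set conditions on the coefficients must be preserved.

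One point does need repair: you call $Q_a=\int d^3y\,j^0_a\bigl(0;(c,\vec y)\bigr)_0$ ``a well-defined on-shell functional'' and obtain the Leibniz rule by declaring the commutator an \emph{inner} derivation. As the paper stresses (footnote accompanying \eqref{eq:[Q,F]}), the charge itself is not constructed as an element of $\sF_0$; only the smeared commutator $\int d^3y\,[j^0_a\bigl(0;(c,\vec y)\bigr)_0,F_0]_\star$ is defined, because for each fixed $F$ spacelike commutativity gives the integrand compact support in $\vec y$. The same mechanism shows that this commutator map is nevertheless a derivation (an integral of inner derivations, each obeying Leibniz, with all integrals existing) and that it is $c$-independent (free current conservation \eqref{eq:dj(0)=0} plus Gauss' theorem applied inside the commutator). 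So your argument survives intact, but it must be phrased for the commutator throughout, never for $Q_a$ as a standalone object.
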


\emph{Proof of Prop.~\ref{pr:int-anomaly}.} Let $\sO\subset \bM$
be an open double cone containing $\supp g$, and let $g_1\in\sD(\bM)$
be a test function which is equal to $1$ on a neighbourhood of $\ovl\sO$.
With that and \eqref{eq:AMWI-j-onshell}-\eqref{eq:Dl=0}, \eqref{eq:on-shell-star-product} and the Bogoliubov formula
\eqref{eq:Bogoliubov}, we may write
\begin{align}\label{eq:int-anomaly-1}
\int dy\,\,\bigl(\tilde\Dl(y)(&S_\inter(g))\bigr)_{S_\inter(g),0}=\int dy\,\,g_1(y)\,\bigl(\tilde\Dl(y)(S_\inter(g))\bigr)_{S_\inter(g),0}\\
={}&-\ovl T\bigl(e_\ox^{-i\,S_\inter(g)/\hbar}\bigr)_0\star\int dy\,(\del^\mu g_1)(y)\, 
T\bigl(e_\ox^{i\,S_\inter(g)/\hbar}\ox j^\mu(g;y)\bigr)_0\ .
\nonumber
\end{align}

\blue{\begin{figure}[ht]
\centering
\begin{tikzpicture}[scale=0.7]
\coordinate (A) at (-4,0) ; \coordinate (B) at (0,-4) ;
\coordinate (C) at (4,0) ; \coordinate (D) at (0,4) ;
\draw (A) -- (B) -- (C) -- (D) -- (A); 
\coordinate (E) at (-2.5,0) ; \coordinate (F) at (0,-2.5) ;
\coordinate (G) at (2.5,0) ; \coordinate (H) at (0,2.5) ;
\draw (E) to[out=-90, in=180] (F) to[out=0, in=-90] (G) to[out=90, in=0] (H) to[out=180, in=90] (E);
\fill[black!15] (E) to[out=-90, in=180] (F) to[out=0, in=-90] (G) to[out=90, in=0] (H) to[out=180, in=90] (E);
\coordinate (I) at (-6,0.5) ; \coordinate (J) at (-5,-0.5) ; 
\coordinate (K) at (-0.5,4) ;\coordinate (L) at (0.5,4) ; 
\coordinate (M) at (5,-0.5) ;\coordinate (N) at (6,0.5) ; 
\coordinate (O) at (0.5,6) ;\coordinate (P) at (-0.5,6) ; 
\coordinate (Q) at (-6.2,0) ;\coordinate (R) at (6.2,0) ; 
\coordinate (S) at (-7,3) ; \coordinate (T) at (-7,-3) ;
\coordinate (U) at (7,-3) ; \coordinate (V) at (7,3) ;
\draw[dashed] (A) -- (S) ; \draw[dashed] (A) -- (T) ; \draw[dashed] (C) -- (U) ; \draw[dashed] (C) -- (V) ; 
\draw[dashed] (I) to[out=-135, in=90] (Q) to[out=-90, in=-135]
(J) -- (K) to[out=45, in=135] (L) --(M) to[out=-45, in=-90] (R) to[out=90, in=-45] (N) -- (O) to[out=135, in=45] (P) -- (I);
\fill[black!5]  (I) to[out=-135, in=90] (Q) to[out=-90, in=-135] (J) -- (K) to[out=45, in=135] (L) --
(M) to[out=-45, in=-90] (R) to[out=90, in=-45] (N) -- (O) to[out=135, in=45] (P) -- (I);
\coordinate (I1) at (-6,-0.5) ; \coordinate (J1) at (-5,0.5) ; 
\coordinate (K1) at (-0.5,-4) ;\coordinate (L1) at (0.5,-4) ; 
\coordinate (M1) at (5,0.5) ;\coordinate (N1) at (6,-0.5) ; 
\coordinate (O1) at (0.5,-6) ;\coordinate (P1) at (-0.5,-6) ; 
\draw[dashed] (I1) to[out=135, in=-90] (Q) to[out=90, in=135]  (J1) -- (K1) to[out=-45, in=-135] (L1) --(M1) 
 to[out=45, in=90] (R) to[out=-90, in=45]  (N1) -- (O1) to[out=-135, in=-45] (P1) -- (I1);
\fill[black!5]  (I1) to[out=135, in=-90] (Q) to[out=90, in=135]  (J1) -- (K1) to[out=-45, in=-135] (L1) --
(M1) to[out=45, in=90] (R) to[out=-90, in=45] (N1) -- (O1) to[out=-135, in=-45] (P1) -- (I1);
\coordinate (X) at (0,-3.4) ; \coordinate (Y) at (0,-2) ;
\draw (X)  node {$\sO$};\draw (Y)  node {$\supp g$};
\coordinate (Z) at (0,-5) ; \coordinate (Z1) at (0,5) ;
\draw (Z)  node {$\supp b^\mu$};\draw (Z1)  node {$\supp a^\mu$};
\draw (V)  node[left] {$\sO+\ovl V_+$};\draw (U)  node[left] {$\sO+\ovl V_-$};
\end{tikzpicture}
\caption{\small\blue{Illustration of the proof of Prop.~\ref{pr:int-anomaly}: double cone $\sO$ containing $\supp g$ (with its past $\sO+\ovl V_-$
and its future $\sO+\ovl V_+$) and respective supports of the (decomposed) test functions.}}
\label{fg:TSA} 
\end{figure}}

\noindent We decompose $\del^{\mu} g_1=a^{\mu}-b^{\mu}$
such that $\supp a^{\mu}\cap (\sO+\overline{V}_-)=\emptyset$ and
$\supp b^{\mu}\cap (\sO+\overline{V}_+)=\emptyset$. By causal
factorization of the $T$-product and \eqref{eq:on-shell-star-product} and $j^\mu(g;a^\mu)=j^{\mu}(0;a_{\mu})$
(since $\supp g\cap\supp a^\mu=\emptyset$), and similarly for $j^\mu(g;b^\mu)$, we obtain
\begin{align}
\int dy\,&(\del^\mu g_1)(y)\, 
T\bigl(e_\ox^{i\,S_\inter(g)/\hbar}\ox j^\mu(g;y)\bigr)_0\nonumber\\
&=j^{\mu}(0;a_{\mu})_0\star T\bigl(e_\ox^{i\,S_\inter(g)/\hbar}\bigr)_0-
T\bigl(e_\ox^{i\,S_\inter(g)/\hbar}\bigr)_0\star j^{\mu}(0;b_{\mu})_0\nonumber\\
&=[j^{\mu}(0;a_{\mu})_0\,,\,T\bigl(e_\ox^{i\,S_\inter(g)/\hbar}\bigr)_0]_\star
+T\bigl(e_\ox^{i\,S_\inter(g)/\hbar}\bigr)_0\star j^{\mu}(0;\partial_{\mu}g)_0\ .\label{eq:int-anomaly-2}
\end{align}
In the last term the second factor vanishes, due to \eqref{eq:dj(0)=0}. From Field
independence of $T$ we know that
$\,\supp T\bigl(e_\ox^{i\,S_\inter(g)/\hbar}\bigr)_0\subset{\cal O}$;
therefore, we may vary $a^\mu$ in the spacelike complement of $\sO$
without affecting the remaining commutator in \eqref{eq:int-anomaly-2}.
Taking additionally into account that $g_1$ is arbitrary
(up to $g_1\big\vert_{\ovl\sO}=1$),
we may choose for $a_\mu(y)$ a smooth approximation to $-\dl_{\mu 0}\,\dl(y^0-c)$,
where $c\in \bR$ is a sufficiently large constant:
$$
a_\mu(y)=-\dl_{\mu 0}\,h(y^0)\word{with} \int dy^0\,\, h(y^0)=1\ ,\quad
h\in\sD([c-\eps,c+\eps])
$$
for some $\eps >0$. Inserting this $a^\mu$ into \eqref{eq:int-anomaly-2} and using Lemma \ref{lem:[Q,F]=S_a(F)}, we get
\begin{align*}
[j^{\mu}(0;a_{\mu})_0\,,\,T\bigl(e_\ox^{i\,S_\inter(g)/\hbar}\bigr)_0]_\star={}& -\int dy^0\,\,h(y^0)\int d\vec y\,\,
[j^0(0;(y^0,\vec y))_0\,,\,T\bigl(e_\ox^{i\,S_\inter(g)/\hbar}\bigr)_0]_\star \\
={}& -\Bigl(\int dy^0\,\,h(y^0)\Bigr)\int d\vec y\,\,
[j^0(0;(c,\vec y))_0\,,\,T\bigl(e_\ox^{i\,S_\inter(g)/\hbar}\bigr)_0]_\star \\
={}&-i\hbar\,\sS_a\Bigl(T\bigl(e_\ox^{i\,S_\inter(g)/\hbar}\bigr)\Bigr)_0\ .
\end{align*}
By the validity of \eqref{eq:[Sa,R]=0} for $T_n$ and by $\sS_a\bigl(S_\inter(g)\bigr)=0$ we obtain the assertion \eqref{eq:int-dj(g;x)=0}.
$\qed$

 \emph{Proof of Lemma \ref{lem:[Q,F]=S_a(F)}.} 
 Due to spacelike commutativity the region of integration on the l.h.s.~of \eqref{eq:[Q,F]=S_a(F)} is a subset of
$$
\set{\vec y\,\big\vert\,(c,\vec y)\in\supp F+(\ovl V_+\cup\ovl V_-)}\ ,
$$
which is bounded for all $c\in\bR$; therefore, this integral exists indeed. In addition, due to \eqref{eq:dj(0)=0}
and Gauss' integral theorem, this integral does not depend on $c$. \blue{Hence, we may introduce the short notation
\footnote{\blue{For a rigorous definition of the pertinent charge $Q_0$, 
we refer to \cite{DF99} or \cite[Chap.~5.5.1]{D19} -- however, here we solely deal with the commutator 
\eqref{eq:[Q,F]}, which is well-defined as it is written here.}}
\be\label{eq:[Q,F]}
[Q_0,F_0]_\star:=\int d^3y\,[j^0_a\bigl(0;(c,\vec y)\bigr)_0\,,\,F_0]_\star\ ,\quad\forall F\in\sF\ .
\ee}

We first prove \eqref{eq:[Q,F]=S_a(F)} for $F=A_b^\rho(x)$: Inserting \eqref{eq:j-g=0} and using
$[A^\mu_a(y)\,,\,A^\rho_b(x)]_\star =-i\hbar\dl_{ab}\,D_\la^{\mu\rho}(y-x)$ we obtain
\begin{align}\label{eq:[Q,A]}
 \int d^3y\,\,[j^0_a&(0;y)_0\,,\,A_b^\rho(x)_0]_\star =i\hbar\ka\,f_{acd}\int d^3y\,
 \Bigl(\dl_{cb}\,\bigl(\del^0 A^\nu_d(y)-\del^\nu A^0_d(y)\bigr)\,D_{\nu,\,\la}^{\,\,\rho}(y-x)\nonumber \\
 &+\dl_{db}\,A_{\nu\, c}(y)_0\bigl(\del^0 D^{\nu\rho}_\la(y-x)-\del^\nu D^{0\rho}_\la(y-x)\bigr)\nonumber\\
&-\dl_{cb}\,A_d^0(y)_0\,\la\,\del_\mu D^{\mu\rho}_\la(y-x)-\dl_{db}\,(\del A_c)(y)_0\,\la\,D^{0\rho}_\la(y-x)\Bigr)\ .
\end{align}
Now we choose $y^0=x^0$ in order that we may use the equal-time commutation relations \eqref{eq:D-la(0,x)}.
Hence, solely the commutators $[\del^0 A(x^0,\vec y),A(x)]_\star\sim\del^0\,D_\la(0,\vec y-\vec x)\sim\dl(\vec y-\vec x)$ contribute.
By straightforward computation we obtain that the terms depending explicitly on $\la$ cancel out and that 
\be
\mathrm{\eqref{eq:[Q,A]}}=-i\hbar\ka\,f_{acb}\,A^\rho_c(x)_0=i\hbar\,\sS_a\bigl(A^\rho_b(x)\bigr)_0\ .
\ee
Proceeding analogously one verifies the assertion \eqref{eq:[Q,F]=S_a(F)} for $F=u(x)$ and $F=\tilde u(x)$; 
this verification is even somewhat simpler than for $F=A^\rho(x)$, because the free field equation is simpler. 

Since $\sS_a$ is a derivation w.r.t.~the classical product, more precisely
$\sS_a(F\cdot G)_0=\sS_a(F)_0\cdot G_0+F_0\cdot\sS(G)_0$, it remains to prove that 
\be\label{eq:[Q0,(n)]}
[Q_0,\vf^1_{b_1,0}(x_1)\cdot\ldots\cdot\vf^n_{b_n,0}(x_n)]_\star =i\hbar
\sum_{k=1}^n\vf^1_{b_1,0}(x_1)\cdot\ldots\cdot\sS_a\bigl(\vf^k_{b_k}(x_k)\bigr)_0\cdot\ldots\cdot\vf^n_{b_n,0}(x_n)\ ,
\ee
where $\vf^k\in\{A^\mu,\,u,\,\tilde u\}$. We proceed by induction on $n$. We have just verified the case $n=1$, i.e.,
$[Q_0,\vf_{b,0}(x)]_\star =i\hbar\,\sS_a\bigl(\vf_b(x)\bigr)_0=i\hbar\ka\, f_{abc}\,\vf_{c,0}(x)$. 
In the step $n\to n+1$ we use the relation
\footnote{\blue{This formula is obtained as follows: inserting the definition of the $\star$-product \eqref{eq:star-product}
into the first term term on the r.h.s., only the terms $1+\hbar\sD$ of $e^{\hbar\sD}$ contribute; $1$ gives the
term on the l.h.s.~and $\hbar\sD$ gives the second term on the r.h.s..}}
\begin{align}\label{eq:star(n+1)}
\vf^1_{b_1,0}&(x_1)\cdot\ldots\cdot\vf^{n+1}_{b_{n+1},0}(x_{n+1})=\bigl(\vf^1_{b_1,0}(x_1)\cdot\ldots\cdot\vf^n_{b_n,0}(x_n)
\bigr)\star \vf^{n+1}_{b_{n+1},0}(x_{n+1})\\
&-\sum_{k=1}^n\sgn(\pi_k)\,\bigl(\vf^1_{b_1,0}(x_1)\cdot\ldots\hat k\ldots\cdot\vf^n_{b_n,0}(x_n)\bigr)\cdot
\om_0\bigl(\vf^k_{b_k,0}(x_k)\star\vf^{n+1}_{b_{n+1},0}(x_{n+1})\bigr)\ ,\nonumber
\end{align}
where $\hat k$ means that $k$th factor is omitted and $\sgn(\pi_k)$ is the fermionic sign coming from the permutation
$\vf^1_{b_1,0}(x_1)\cdot\ldots\cdot\vf^n_{b_n,0}(x_n)\mapsto\vf^1_{b_1,0}(x_1)\cdot\ldots\hat k\ldots\cdot\vf^n_{b_n,0}(x_n)
\cdot\vf^k_{b_k,0}(x_k)$. We also take into account that the commutator $[Q_0,\bullet]_\star$ is a derivation w.r.t.~the 
on-shell $\star$-product. With that we obtain
\begin{align}\label{eq:[Q0,(n+1)]}
\frac{1}{i\hbar}[Q_0,\,&\vf^1_{b_1,0}(x_1)\cdot\ldots\cdot\vf^{n+1}_{b_{n+1},0}(x_{n+1})]_\star\nonumber\\
={}&\sum_{k=1}^n\bigl(\vf^1_{b_1,0}(x_1)\cdot\ldots\cdot\sS_a\bigl(\vf^k_{b_k}(x_k)\bigr)_0\cdot\ldots\cdot\vf^n_{b_n,0}(x_n)\bigr)
\star \vf^{n+1}_{b_{n+1},0}(x_{n+1})\nonumber\\
&+\bigl(\vf^1_{b_1,0}(x_1)\cdot\ldots\cdot\vf^n_{b_n,0}(x_n)\bigr)\star \sS_a\bigl(\vf^{n+1}_{b_{n+1}}(x_{n+1})\bigr)_0
\nonumber\\
&-\sum_{k=1}^n\sgn(\pi_k)\,\sum_{j(\not= k)}\bigl(\vf^1_{b_1,0}(x_1)\cdot\ldots\hat k\ldots
\cdot\sS_a\bigl(\vf^j_{b_j}(x_j)\bigr)_0\cdot\ldots\cdot\vf^n_{b_n,0}(x_n)\bigr)\nonumber\\
&\qquad\qquad\qquad\qquad\cdot\om_0\bigl(\vf^k_{b_k,0}(x_k)\star\vf^{n+1}_{b_{n+1},0}(x_{n+1})\bigr)\ .
\end{align}
Since it holds that
\begin{align*}
\om_0&\Bigl(\sS_a\bigl(\vf^k_{b_k}(x_k)\bigr)_0\star\vf^{n+1}_{b_{n+1}}(x_{n+1})_0\Bigr)+
\om_0\Bigl(\vf^k_{b_k}(x_k)_0\star\sS_a\bigl(\vf^{n+1}_{b_{n+1}}(x_{n+1})\bigr)_0\Bigr)\\
&\overset{\eqref{eq:Sa-starproduct}}{=}\om_0\Bigl(\sS_a\bigl(\vf^k_{b_k}(x_k)\star\vf^{n+1}_{b_{n+1}}(x_{n+1})\bigr)_0\Bigr)=0
\end{align*}
(by using $\om_0\circ\sS_a=0$), we may add
\begin{align*}
0={}&-\sum_{k=1}^n\sgn(\pi_k)\,\bigl(\vf^1_{b_1,0}(x_1)\cdot\ldots\hat k\ldots\cdot\vf^n_{b_n,0}(x_n)\bigr)\\
&\cdot\Bigl(\om_0\bigl(\sS_a(\vf^k_{b_k}(x_k))_0\star\vf^{n+1}_{b_{n+1}}(x_{n+1})_0\bigr)+
\om_0\bigl(\vf^k_{b_k}(x_k)_0\star\sS_a(\vf^{n+1}_{b_{n+1}}(x_{n+1}))_0\bigr)\Bigr)
\end{align*}
to the right hand side of \eqref{eq:[Q0,(n+1)]}; using again \eqref{eq:star(n+1)} we see that 
 the r.h.s~of \eqref{eq:[Q0,(n+1)]} is equal to the r.h.s.~of the assertion \eqref{eq:[Q0,(n)]} with $(n+1)$ in place of $n$.
$\qquad\qquad\qquad\qed$

\begin{remk}
The situation is similar to spinor QED (see \cite{DF99} or \cite[Chap.~5.2]{D19}) and scalar QED (see \cite{DPR21}): In these models the 
\emph{global} transformation is the \emph{charge number operator},
\be\label{eq:theta}
\th :=\int dx\,\,\dl_{Q(x)}\word{with}\dl_{Q(x)}F:=\Bigl(\frac{\dl_r F}{\dl\psi(x)}\wedge \psi(x)-
\overline \psi(x)\wedge\frac{\dl F}{\dl\overline\psi(x)}\Bigr)\ ,\quad F\in\sF\ ,
\ee
for spinor QED and $\dl_{Q(x)}:=\phi(x)\,\fd{}{x}-\phi^*(x)\,\frac{\dl}{\dl\phi^*(x)}$ for scalar QED.
Proceeding similarly to \eqref{eq:Noether-Sa}-\eqref{eq:current-Sa}, we see that the classical 
Noether current $j^\mu(g;x)$ belonging to charge number conservation of the total Lagrangian, i.e., 
$\th\bigl(L_0(x)+L_\inter(g;x))\bigr)=0$, satisfies also the relation \eqref{eq:dj} with $\th$ in place of $\sS_a$,
\footnote{Note that for both spinor and scalar QED, the sum over $k$ in \eqref{eq:dj} runs effectively only over
$\vf_k=\psi,\ovl\psi$ or $\vf_k=\phi,\phi^*$, respectively, because $\th\bigl(A^\mu(x)\bigr)=0$. This is in accordance
with the just given definition of $\dl_{Q(x)}$.}
\blue{explicitly
\be
\del_\mu j^\mu (g;x)=i\Bigl( \frac{\dl_r S_0}{\dl\psi(x)}\wedge \psi(x)
-\ovl\psi(x)\wedge\frac{\dl S_0}{\dl\overline\psi(x)}\Bigr)\word{with} j^\mu(g;x)=\ovl\psi(x)\wedge\ga^\mu\psi(x)\ ,
\ee
and $S_0:=\int dy\,\,\ovl\psi(y)\wedge (i\slashed\del_y -m\,1_{4\x 4})\psi(y)+S_\photon(A)+S_\gf(A)$
for spinor QED, see the explanations concerning the current $j^\mu(g;x)$ given at the end of this remark.}
Hence, for the on-shell AMWI belonging to the 
corresponding \emph{local} transformation, i.e.,
\be
\dl_{qQ}:=\int dx\,\,q(x)\,\dl_{Q(x)}\word{with $q\in\sD(\bM)$ arbitrary,}
\ee
we obtain also the \emph{anomalous current conservation} \eqref{eq:AMWI-j-onshell}, \blue{where,
similarly to \eqref{eq:AMWI-j-onshell}, the test function $g$ switching the interaction is arbitrary.} 
Proceeding analogously to the proof of
Prop.~\ref{pr:int-anomaly}, it is  proved in these references that invariance of the $R$-product w.r.t. $\th$ (i.e., \eqref{eq:[Sa,R]=0}
with $\th$ in place of $\sS_a$)
\footnote{The proof of that version of Prop.~\ref{pr:[Sa,R]=0} is simpler than for $\sS_a$, because all field monomials are eigenvectors of $\th$ with discrete eigenvalues, see the cited references.}
 $\,$ implies that the integral over the anomaly (of the current conservation) 
vanishes also in these models, i.e., \eqref{eq:int-dj(g;x)=0} holds also there. 
However, to prove that the anomaly itself can be removed by finite, admissible renormalizations 
\blue{(without assuming $x\in g^{-1}(1)^\circ$)}
requires quite a lot of additional work, including an investigation of some classes of Feynman diagrams. For the Noether current 
$j^\mu_{S_\inter,0}(x)\equiv j^\mu_{S_\inter,0}(g;x)$
belonging to $\sS_a$ and for the non-Abelian gauge current $J^\mu_{S_\inter,0}(x)$, both restricted to the region 
$x\in g^{-1}(1)^\circ$ (see \eqref{eq:dj=dJ}), we have given a shorter and more elegant proof of the removability of the anomaly 
$(\Delta(x)(S_\inter))_{S_\inter,0}$ (in Sect.~\ref{sec:ccAMWI}) by using the consistency condition for the (possible) anomaly.
\blue{Unfortunately, the anomaly consistency condition \eqref{eq:WZ} is trivial (i.e., $0=0$) for spinor and scalar QED.}

Note that for scalar QED the Noether current of the interacting theory is%
\footnote{We use here \eqref{eq:current-Sa} and that $L(g;x)=(D^\mu\phi)^*(x)\,(D^\mu\phi)(x)-m^2\,\phi^*(x)\phi(x)+
\blue{L_\photon(x)+L_\gf(x)}$. 
\blue{(Note that $L_\photon$ does not contain any interaction term and, hence, does not depend on $g$, similarly to $L_\gf$.)}}
\be
j^\mu(g;x)=i\bigl(\phi(x)\,(D^\mu\phi)^*(x)-\phi^*(x)\,D^\mu\phi(x)\bigr)\word{with}D^\mu_x:=\del^\mu_x+i\ka g(x)\,A^\mu(x)
\ee
being the covariant derivative; hence, similarly to the situation in this paper, $j^\mu(g;x)$ differs from the corresponding current of the 
free theory $ j^\mu(0;x)$. But in spinor QED there is the peculiarity that these two currents agree:
$j^\mu(g;x)=j^\mu(0;x)=\ovl\psi(x)\wedge\ga^\mu\psi(x)$, because $L_\inter(x)=j^\mu(0;x)\,A^\mu(x)$ satisfies
$\frac{\del L_\inter}{\del(\del_\mu \psi)}=0=\frac{\del L_\inter}{\del(\del_\mu \ovl\psi)}$, see \eqref{eq:current-Sa}.
\end{remk}

 \medskip

\paragraph{Acknowledgements.} The author \blue{thanks the referee for reading the manuscript very thoroughly and 
pointing out a lot of improvements. He also}
profited a lot from stimulating discussions with Romeo Brunetti, Klaus Fredenhagen
and Kasia Rejzner. In particular, Section \ref{sec:global-trafo} was initiated by insights of Klaus Fredenhagen.
\medskip

\appendix
\section{\blue{Proof of $\langle(\Dl X)'(F),\del^r\vf(x)\rangle=0$ (Thm.~\ref{thm:AMWI}(b6)), and  the Action Ward Identity
for $\Dl^n(\bullet;X)$}}

\blue{We start with  the proof of Thm.~\ref{thm:AMWI}(b6):
\begin{proof}
To simplify the notations we study a model with only \emph{one} basic field $\vf$ with $\fd{S_0}{x}=-(\square+m^2)\vf(x)$
and use that $\del_X=\int dy\,\,\del_X\vf(y)\,\fd{}{y}$
(in accordance with \eqref{eq:del_X}). The proof of Thm.~\ref{thm:AMWI}(a) proceeds by an inductive construction of the 
sequence $(\Dl^n)_{n\in\bN}$, hence we argue here also by induction on $n$. 
The inductive relation follows from the AMWI \eqref{eq:AMWI}, it is given in \eqref{eq:recursion-Dln},
(or \cite[formula (5.15)]{Brennecke08}) or \cite[formula (4.3.10)]{D19}); in the case
 relevant for this proof it reads
 \begin{align}\label{eq:Dl(vf;X)}
\hbar^n\,&\Dl^n\Bigl(\vf(x)\ox F^{\ox (n-1)};X\Bigr)=
-\int dy\,\,R_{n,1}\Bigl(\vf(x)\ox F^{\ox (n-1)},\del_X\vf(y)\Bigr)\cdot\frac{\dl S_0}{\dl\vf(y)}\\
&+R_{n,1}\Bigl(\vf(x)\ox F^{\ox (n-1)},\del_X S_0\Bigr)\nonumber\\
&+\hbar\Bigl[(n-1)\,R_{n-1,1}\Bigl(\vf(x)\ox F^{\ox (n-2)},\del_X F\Bigr)
+R_{n-1,1}\Bigl(F^{\ox (n-1)},\del_X\vf(x)\Bigr)\Bigr]\nonumber\\
&-\sum_{l=0}^{n-1}\hbar^l\, \binom{n-1}{l}\,
R_{n-l,1}\Bigl(\vf(x)\ox F^{\ox (n-1-l)},\Dl^l\bigl(F^{\ox l};X\bigr)\Bigr)\ ;\nonumber
\end{align}
in the last term we have taken into account the inductive assumptiom:
$\Dl^l\bigl(\vf(x)\ox\cdots;X\bigr)=0$ for $l<n$.
To show that the r.h.s.~vanishes, we use that as a consequence of the renormalization conditions
Off-shell field equation and Field independence it holds that 
\be
R_{n,1}\Bigl(\vf(x)\ox F^{\ox (n-1)},G\Bigr)=-\hbar\int dz\,\,\Dl^\reta(z-x)\,\fd{}{z}R_{n-1,1}\Bigl(F^{\ox (n-1)},G\Bigr)
\ee
(see \cite[Lem.~1(B)]{DF99} or \cite[formula (3.2.61)]{D19});
and, for the first term we use the relation
\begin{align*}
\fd{S_0}{y}\cdot\fd{}{z}\,R_{n-1,1}\Bigl(F^{\ox (n-1)},\del_X\vf(y)\Bigr)={}&
\fd{}{z}\Bigl\{\fd{S_0}{y}\cdot R_{n-1,1}\Bigl(F^{\ox (n-1)},\del_X\vf(y)\Bigr)\Bigr\}\\
+(\square_z+m^2)&\dl(z-y)\,\,R_{n-1,1}\Bigl(F^{\ox (n-1)},\del_X\vf(y)\Bigr)\ .
\end{align*}
Proceeding in this way, the assertion reads
\begin{align}\label{eq:Dl(vf)=0}
0\overset{?}{=}{}
&-\int dz\,\,\Dl^\reta(z-x)\,\fd{}{z}\Bigl\{
-\int dy\,\,R_{n-1,1}\Bigl( F^{\ox (n-1)},\del_X\vf(y)\Bigr)\cdot\frac{\dl S_0}{\dl\vf(y)}\\
&+R_{n-1,1}\Bigl(F^{\ox (n-1)},\del_X S_0\Bigr)+\hbar\, (n-1)\,R_{n-2,1}\Bigl(F^{\ox (n-2)},\del_X F\Bigr)\nonumber\\
&-\sum_{l=0}^{n-1}\hbar^l\, \binom{n-1}{l}\,
R_{n-1-l,1}\Bigl(F^{\ox (n-1-l)},\Dl^l\bigl(F^{\ox l};X\bigr)\Bigr)\Bigr\}\nonumber\\
&+\int dy\,\Bigl\{\int dz\,\,\Dl^\reta(z-x)\,(\square_z+m^2)\dl(z-y)+
\dl(x-y)\Bigr\}\,R_{n-1,1}\Bigl(F^{\ox (n-1)},\del_X\vf(y)\Bigr)\ ,\nonumber
\end{align}
where an overall factor $\hbar$ is omitted.
Both $\{\cdots\}$-brackets vanish; for the first one this is due to
the anomalous MWI to order $(n-1)$ and for the second one this is due to $(\square+m^2)\Dl^\reta=-\dl$.
Finally, the generalisation of the assertion to $\langle(\Dl X)'(F),\del^r\vf(x)\rangle=0,\,\,r\in\bN_0^4$, follows from the validity of the 
AWI for $\Dl^n$, see the following Remark.
\end{proof}}

\blue{\begin{remk}[Action Ward Identity for $\Dl^n$]\label{rem:AWI-Dl}
The statement that the maps $\Dl^n$ depend only on local functionals (see \eqref{eq:anom-terms}) implies the AWI for $\Dl^n$:
\be\label{eq:Dl-AWI}
\Dl^n\bigl((\del^\mu P)(x) \ox F^{\ox (n-1)};X\bigr)=\del^\mu_x\Dl^n\bigl(P(x) \ox F^{\ox (n-1)};X\bigr)\word{for} P\in\sP\ .
\ee
Hence, the inductive construction of the sequence $(\Dl^n)_{n\in\bN}$ has to be such that this identity holds.
This holds indeed, due to the facts that $\Dl^n$ is constructed in terms of $R$-products and the latter fulfil the AWI. In detail,
the inductive relation for $\Dl^n\bigl((\del^\mu P)(x) \ox F^{\ox (n-1)};X\bigr)$ is obtained from \eqref{eq:Dl(vf;X)} by replacing 
throughout $\vf(x)$ by $\del^\mu P(x)$ (more precisely: 
on the r.h.s., in the first term $\del_X\vf(y)$ remains as it is, but in the second last term
$\del_X\vf(x)$ is replaced by $\del_X (\del^\mu P)(x)=\del_x^\mu(\del_X P(x))$, cf.~\eqref{eq:dXdA=ddXA}) and by adding 
on the r.h.s.~the term
$$
-\sum_{l=0}^{n-2}\hbar^{l+1}\, \binom{n-1}{l}\,
R_{n-1-l,1}\Bigl( F^{\ox (n-1-l)},\Dl^{l+1}\bigl((\del^\mu P)(x)\ox F^{\ox l};X\bigr)\Bigr)\ ,
$$
cf.~\eqref{eq:recursion-Dln}.
The assertion \eqref{eq:Dl-AWI} follows then by using the AWI for $\Dl^{l+1}(\bullet;X)$ and for the $R$-products.
\end{remk}}

\end{document}